\documentclass[11pt, english]{article}
\usepackage[margin=1in]{geometry}

\usepackage{epsf}
\usepackage{fancyhdr}
\usepackage{graphics}
\usepackage{graphicx}
\usepackage{psfrag}

\usepackage[linesnumbered,ruled]{algorithm2e}% http://ctan.org/pkg/algorithm2e
\DontPrintSemicolon

\usepackage{color}

\usepackage{amsthm}
\usepackage{amsfonts}
\usepackage{amsmath}
\usepackage{amssymb,bbm}
\usepackage{mathrsfs}
\allowdisplaybreaks

\usepackage{dsfont}
\usepackage{euscript}
\usepackage{nicefrac}
\usepackage{pifont}
\usepackage{xspace}
\usepackage{natbib}

\usepackage{tikz-cd}

\usepackage[colorlinks=true,linkcolor=red!50!black,citecolor=blue!50!black,urlcolor=blue!80!black]{hyperref}

\usepackage{chngpage}
\usepackage{tabularx}%

\usepackage{enumitem}
\usepackage{booktabs}
\usepackage{caption,subcaption}

\usepackage{mathtools}
\newcommand{\interior}{\operatorname{int}}
\newcommand{\cE}{\mathcal{E}}
\newcommand{\cT}{\mathcal{T}}
\newcommand{\cJ}{\mathcal{J}}
\newcommand{\cQ}{\mathcal{Q}}

\newcommand{\proposal}{\mathcal{P}}

\newcommand{\step}{\eta}
\newcommand{\tmix}{\tau_{\text{mix}}}

\newcommand{\D}{\mathrm{d}}

\newcommand{\F}{\textup{F}}

\newcommand{\defn}{:=}

\theoremstyle{plain}

\newtheorem{theorem}{Theorem}
\newtheorem{proposition}{Proposition}
\newtheorem{lemma}{Lemma}
\newtheorem{corollary}{Corollary}
\newtheorem{definition}{Definition}

\newtheorem{fact}{Fact}
\newtheorem{remark}{Remark}

\newlength{\widebarargwidth}
\newlength{\widebarargheight}
\newlength{\widebarargdepth}

\makeatletter
\long\def\@makecaption#1#2{
        \vskip 0.8ex
        \setbox\@tempboxa\hbox{\small {\bf #1:} #2}
        \parindent 1.5em  %% How can we use the global value of this???
        \dimen0=\hsize
        \advance\dimen0 by -3em
        \ifdim \wd\@tempboxa >\dimen0
                \hbox to \hsize{
                        \parindent 0em
                        \hfil
                        \parbox{\dimen0}{\def\baselinestretch{0.96}\small
                                {\bf #1.} #2
                                }
                        \hfil}
        \else \hbox to \hsize{\hfil \box\@tempboxa \hfil}
        \fi
        }
\makeatother

\long\def\comment#1{}
\definecolor{battleshipgrey}{rgb}{0.52, 0.52, 0.51}
\definecolor{darkgray}{rgb}{0.66, 0.66, 0.66}
\definecolor{darkgreen}{rgb}{0.0, 0.2, 0.13}
\definecolor{darkspringgreen}{rgb}{0.09, 0.45, 0.27}
\definecolor{dukeblue}{rgb}{0.0, 0.0, 0.61}
\definecolor{olivedrab7}{rgb}{0.24, 0.2, 0.12}
\definecolor{darkblue}{rgb}{0.0, 0.0, 0.55}
\definecolor{darkscarlet}{rgb}{0.34, 0.01, 0.1}
\definecolor{candyapplered}{rgb}{1.0, 0.03, 0.0}
\definecolor{ao(english)}{rgb}{0.0, 0.5, 0.0}
\definecolor{applegreen}{rgb}{0.55, 0.71, 0.0}

\makeatletter
\def\moverlay{\mathpalette\mov@rlay}
\def\mov@rlay#1#2{\leavevmode\vtop{%
   \baselineskip\z@skip \lineskiplimit-\maxdimen
   \ialign{\hfil$\m@th#1##$\hfil\cr#2\crcr}}}
\newcommand{\charfusion}[3][\mathord]{
    #1{\ifx#1\mathop\vphantom{#2}\fi
        \mathpalette\mov@rlay{#2\cr#3}
      }
    \ifx#1\mathop\expandafter\displaylimits\fi}
\makeatother

\newcommand{\cupdot}{\charfusion[\mathbin]{\cup}{\cdot}}

\DeclareMathOperator{\diag}{diag}
\DeclareMathOperator{\Diag}{Diag}

\DeclareMathOperator{\rank}{rank}

\newcommand{\naturalnum}{\ensuremath{\mathbb{N}}}

\newcommand{\Ind}{\ensuremath{\mathbb{I}}}
\newcommand{\Prob}{\ensuremath{{\mathbb{P}}}}

\newcommand{\Normal}{\ensuremath{\mathsf{N}}}

\DeclarePairedDelimiterX{\infdivx}[2]{(}{)}{%
  #1\;\delimsize\|\;#2%
}

\newcommand{\parenth}[1]{\left( #1 \right)}

\newcommand{\abs}[1]{| #1 |}

\newcommand{\tp}{^\mathsf{T}}

\newcommand{\vecnorm}[2]{\left\| #1\right\|_{#2}}

\def\balign#1\ealign{\begin{align}#1\end{align}}
\def\baligns#1\ealigns{\begin{align*}#1\end{align*}}
\def\balignat#1\ealign{\begin{alignat}#1\end{alignat}}
\def\balignats#1\ealigns{\begin{alignat*}#1\end{alignat*}}
\def\bitemize#1\eitemize{\begin{itemize}#1\end{itemize}}
\def\benumerate#1\eenumerate{\begin{enumerate}#1\end{enumerate}}

\newenvironment{talign*}
 {\let\displaystyle\textstyle\csname align*\endcsname}
 {\endalign}
\newenvironment{talign}
 {\let\displaystyle\textstyle\csname align\endcsname}
 {\endalign}

\def\balignst#1\ealignst{\begin{talign*}#1\end{talign*}}
\def\balignt#1\ealignt{\begin{talign}#1\end{talign}}
\def\tinycitep*#1{{\tiny\citep*{#1}}}
\def\tinycitealt*#1{{\tiny\citealt*{#1}}}
\def\tinycite*#1{{\tiny\cite*{#1}}}
\def\smallcitep*#1{{\scriptsize\citep*{#1}}}
\def\smallcitealt*#1{{\scriptsize\citealt*{#1}}}
\def\smallcite*#1{{\scriptsize\cite*{#1}}}

\makeatletter
\@ifundefined{color@orange}{\definecolor{orange}{rgb}{1,0.5,0}}{}
\@ifundefined{color@purple}{\definecolor{purple}{rgb}{0.5,0,0.5}}{}
\@ifundefined{color@teal}{\definecolor{teal}{rgb}{0,0.5,0.5}}{}
\@ifundefined{color@Maroon}{\definecolor{Maroon}{rgb}{0.5,0,0}}{}
\makeatother

\makeatletter
\@ifundefined{color@OliveGreen}{\definecolor{OliveGreen}{rgb}{0,0.6,0}}{}
\makeatother

\global\long\def\on#1{\operatorname{#1}}%

\global\long\def\dw{\mathsf{Dikin\ walk}}%

\global\long\def\Otilde{\widetilde{O}}%

\global\long\def\E{\mathbb{E}}%

\global\long\def\R{\mathbb{R}}%
\global\long\def\Rd{\mathbb{\R}^{d}}%

\global\long\def\veps{\varepsilon}%
\global\long\def\K{\mathcal{K}}%

\global\long\def\op{\textup{op}}%

\global\long\def\poly{\on{poly}}%

\global\long\def\inter{\on{int}}%

\global\long\def\cols{\on{col}}%

\global\long\def\T{\mathsf{T}}%

\global\long\def\mc#1{\mathcal{#1}}%

\global\long\def\msf#1{\mathsf{#1}}%

\global\long\def\Dd{\mathrm{D}}%

\global\long\def\bpar#1{\bigl(#1\bigr)}%
\global\long\def\Bpar#1{\Bigl(#1\Bigr)}%

\global\long\def\babs#1{\bigl|#1\bigr|}%
\global\long\def\bnorm#1{\bigl\Vert#1\bigr\Vert}%
\global\long\def\bbrack#1{\bigl[#1\bigr]}%

\global\long\def\sbrace#1{\{#1\}}%
\global\long\def\bbrace#1{\bigl\{#1\bigr\}}%

\global\long\def\inner#1{\langle#1\rangle}%

\global\long\def\norm#1{\lVert#1\rVert}%

\global\long\def\mmid{\mathbin{\|}}%

\global\long\def\dtv{d_{\msf{TV}}}%
\global\long\def\tv{\msf{TV}}%
\global\long\def\Tr{\on{Tr}}%
\global\long\def\intk{\inter\K}%

\makeatletter
\renewcommand{\paragraph}{%
  \@startsection{paragraph}{4}%
  {\z@}{1.25ex \@plus 1ex \@minus .2ex}{-1em}%
  {\normalfont\normalsize\bfseries}%
}
\makeatother

\newenvironment{acknowledgements}
{\paragraph{Acknowledgements.}}
{}

\title{On two proofs of $d^2$ mixing of weighted Dikin walks\author{Yuansi Chen\\ ETH Z\"urich\\ \texttt{yuansi.chen@stat.math.ethz.ch} \and Yunbum Kook\\ University of Michigan\\  \texttt{ybkook@umich.edu}}}
\date{  }

\begin{document}

\maketitle

\begin{abstract}
We study the mixing time of weighted Dikin walks for sampling from exponential distributions on polytopes and truncated positive-semidefinite (PSD) cones.
Our first result gives a general total-variation mixing bound under strong self-concordance, $\bar{\nu}$-symmetry, and mixed-trace regularity on the local metric.
The key idea is to control the Metropolis--Hastings acceptance probability on a high-probability region rather than at every point.
Applying this framework to the Lee--Sidford, Lewis-weight, and John metrics yields an $\widetilde O(d^2)$ mixing bound for sampling from polytopes, while applying it to a hybrid barrier yields an $\widetilde O(d^4)$ mixing bound for sampling from truncated PSD cones.
Our second result establishes stronger $\chi^2$-divergence guarantees and pointwise acceptance control using a new fourth-order bootstrap condition.
For a suitably scaled Lee--Sidford metric, this yields an $\widetilde O(d^2)$ mixing bound in $\chi^2$-divergence, improving on the previous $\widetilde O(d^{9/4})$ bound.
\end{abstract}

\section{Introduction}

Sampling uniformly from a high-dimensional convex body is a fundamental primitive in polynomial-time randomized algorithms for computing the volume of the body~\cite{DFK91random,LS93random,KLS97random,LV06simulated,CV18Gaussian,JLLV26reducing}.
A long line of work has studied the complexity of this algorithmic task assuming access to a membership oracle for the convex body $\K$, which answers whether a queried point lies in $\K$.
This abstract computational model has provided a common setting for designing algorithms and analyzing their complexity.
For instance, general-purpose samplers such as the Ball walk~\cite{Lovasz90compute,LS93random,KLS97random}, Hit-and-Run~\cite{Lovasz99hit,LV06hit,KV26HAR}, and In-and-Out~\cite{KVZ26INO,KV25sampling,KV26zeroLC,KV26unified} have been developed and analyzed.

In practice, we often know an \emph{explicit description} of a convex body.
For instance, a polytope---finitely many linear inequalities---is a canonical convex body, and uniform sampling from a polytope has found applications in Bayesian statistics, machine learning, and systems biology.
It is natural to ask whether one can leverage this additional information to design more efficient sampling algorithms.

Kannan and Narayanan~\cite{KN12random} proposed the Dikin walk, which exploits the inequality description to construct position-dependent Gaussian proposals.
Inspired by interior-point methods (IPMs) for structured convex optimization~\cite{Dikin1967iterative,Karmarkar84polytime,NN94ipm}, the walk uses the Hessian of a barrier function as its local metric.
For a bounded, full-dimensional polytope $\K \defn \{x\in\Rd:Ax\geq b\}$ with $A\in\R^{m\times d}$ and $b\in\R^m$, the Hessian of the logarithmic barrier $\phi_{\log}(x)\defn-\sum_{i=1}^m\log(a_i\tp x-b_i)$ encodes the local geometry of the polytope, and Kannan and Narayanan showed that the Dikin walk with this local geometry mixes in $\Otilde(md)$ iterations from an $O(1)$-warm start~\cite{KN12random,SV16mixing}.

This $m$-dependence is \emph{representation-dependent}, since redundant constraints can increase $m$ without changing $\K$. 
The optimization literature has addressed this issue through Vaidya's volumetric-logarithmic barrier and the Lewis-weight-based Lee--Sidford barrier~\cite{Vaidya96convex,CP15lprow,LS19solving}, which improve the dependence of IPM iteration complexity on $m$.
This suggests an analogous route to reduce the dependence on $m$ in sampling with the Dikin walk.

Implementing these ideas, Chen, Dwivedi, Wainwright, and Yu introduced the Vaidya and John walks~\cite{CDW18MCMC}.  The Vaidya walk, based on the volumetric-logarithmic barrier, has an $\Otilde(m^{1/2}d^{3/2})$ warm-start mixing bound.  The John walk, based on approximate John ellipsoids, has an $\Otilde(d^{5/2})$ warm-start mixing bound.
They further conjectured that a refined analysis would improve the John-walk bound to $\Otilde(d^2)$.

An $\Otilde(d^2)$ mixing bound is consistent with progress in both optimization and general-purpose sampling.
For linear programming, the Lee--Sidford barrier improves the interior-point iteration bound from $\Otilde(\sqrt m)$ for the logarithmic barrier to $\Otilde(\sqrt d)$, leaving only polylogarithmic dependence on the number of constraints~\cite{LS19solving}. 
For uniform sampling, general-purpose samplers achieve $\Otilde(d^2)$ mixing from an $O(1)$-warm start when the body is isotropic (i.e., the covariance is close to the identity)~\cite{KLS97random,Klartag23log,KVZ26INO,KV26HAR}.
Since the Dikin walk is affine-invariant, one may think of it as enforcing `local' isotropy, making $\Otilde(d^2)$ a natural goal for a weighted Dikin walk.

Laddha, Lee, and Vempala proposed a route to this goal based on \emph{strong self-concordance} and \emph{$\bar\nu$-symmetry}~\cite{LLV20strong}.  
Strong self-concordance is a stronger version of usual self-concordance required for an improved analysis, while $\bar\nu$-symmetry measures how well the local Dikin ellipsoid approximates the largest centrally symmetric subset of the body centered at the current point. 
Their conductance argument would give an $\Otilde(d\bar\nu)$ warm-start mixing bound if transition overlap held for nearby points.
Such transition overlap, together with their estimate $\bar\nu=\Otilde(d)$ for the Lee--Sidford metric, would imply $\Otilde(d^2)$ mixing. 
Their Metropolis--Hastings analysis, however, does not establish this transition overlap, leaving the implication incomplete\footnote{Detailed expositions can be found in \cite{Kook26Dikin}}.
Later works~\cite{KV24ipm,JC25regularized} analyzed the Dikin walk with the Lewis-weight metric, but their mixing bounds remain at $\Otilde(d^{2.5})$.
Recently, using the Lee--Sidford metric, \cite{Kook26Dikin} showed that a higher-order analysis yields an $\Otilde(d^{9/4})$ warm-start mixing bound for exponential sampling, making progress toward the conjectured $d^2$ bound.

We resolve this conjecture with two main results. 
First, Theorem~\ref{theorem:generic-weighted-Dikin-mixing} proves an $\Otilde(d\bar\nu)$ total-variation mixing bound for exponential sampling under fewer assumptions than prior work~\cite{KV24ipm,JC25regularized}.
Unlike prior work, we show that the Dikin proposal has high acceptance probability for \emph{most} points rather than at every point; this allows us to prove the mixing bound under weaker assumptions.
This framework allows us to easily verify the conditions for the Lee--Sidford, Lewis-weight, and John metrics, yielding an $\Otilde(d^2)$ mixing bound.
Moreover, it gives an $\Otilde(n^2)$ mixing bound for a hybrid metric on truncated positive-semidefinite cones of ambient dimension $n=\Theta(d^2)$, improving on previous mixing bounds~\cite{Narayanan16randomized,KV24ipm}.
Second, Theorem~\ref{theorem:generic-pointwise-mixing} imposes additional assumptions to obtain high acceptance probability at every point, resulting in a stronger $\chi^2$-divergence guarantee.
This strengthens the $\Otilde(d^2)$ mixing bound for the Lee--Sidford Dikin walk from total variation to $\chi^2$-divergence.

Table~\ref{tab:framework-comparison} summarizes the two frameworks and their applications.
\begin{table}[t]
    \centering
    \caption{Comparison of the two generic mixing frameworks.}
    \label{tab:framework-comparison}
    \small
    \setlength{\tabcolsep}{3pt}
    \renewcommand{\arraystretch}{1.2}
    \begin{tabularx}{\textwidth}{@{}l
        >{\hsize=0.9\hsize\raggedright\arraybackslash}X
        >{\hsize=1.5\hsize\raggedright\arraybackslash}X
        >{\hsize=0.8\hsize\raggedright\arraybackslash}X
        >{\hsize=0.6\hsize\raggedright\arraybackslash}X
        >{\hsize=1.2\hsize\raggedright\arraybackslash}X@{}}
        \toprule
        Framework & Acceptance control & Metric assumptions & Conductance & Guarantee & Applications \\
        \midrule
        Theorem~\ref{theorem:generic-weighted-Dikin-mixing}
            & High-prob. region
            & SSC, $\bar\nu$-symmetry, mixed $23$-trace
            & $s$-conductance
            & TV
            & LS, Lewis-weight, John, hybrid PSD \\
        Theorem~\ref{theorem:generic-pointwise-mixing}
            & Pointwise
            & Above conditions, LTSC, fourth-order bootstrap
            & Conductance
            & $\chi^2$
            & Scaled LS \\
        \bottomrule
    \end{tabularx}
\end{table}

\section{Preliminaries}

This section introduces the basic setup, the Dikin walk, regularity conditions, and the weighted metrics used in our applications.

\subsection{Basic setup and notation}

$a\lesssim b$ means $a\leq Cb$ for a universal constant $C>0$, and similarly for $\gtrsim$. We use $a\asymp b$ if $a\lesssim b$ and $b\lesssim a$. 
We use $\Otilde(\cdot)$ to suppress factors polylogarithmic in $m$, $d$, $\bar\nu$, the inverse accuracy, and the warm-start parameter.  Here $\bar\nu$ is the symmetry parameter introduced in Definition~\ref{def:nu-symmetry}. Other standard notation is clarified in~\S\ref{app:deferred_preliminaries}.

\paragraph{Convex bodies and polytopes.}
A convex body is a compact convex set with nonempty interior.  Unless stated otherwise, $\K\subset\Rd$ is a full-dimensional convex body, and $\interior\K$ and $\mathcal B(\K)$ denote its interior and Borel $\sigma$-algebra, respectively.
We denote a polytope by $\K = \{x \in \Rd: Ax \geq b\}$, where $A$ has full column rank and no zero rows, and $b\in \R^m$.

\paragraph{Probability.}
We use the same symbol for a probability measure and its Lebesgue density when clear from context.
For probability measures $\mu$ and $\nu$, their total-variation (TV) distance is $\dtv(\mu,\nu) \defn \sup_E\abs{\mu(E)-\nu(E)}$.
Their chi-square divergence is $\chi^2(\mu\mmid\nu) \defn \int (\frac{\D\mu}{\D\nu}-1)^2\,\D\nu$ if $\mu\ll\nu$; otherwise $\chi^2(\mu\mmid\nu)=\infty$. Note that $4\dtv^2\leq \chi^2$.

Throughout the paper, we aim to sample from an exponential distribution on $\K$: for $\theta\in\Rd$,
\[
    \pi_\theta(\D x)
    \defn
    \frac{1}{Z_\theta}\exp(-\theta\tp x)\,\Ind_\K(x)\,\D x
    \quad\text{for}\quad 
    Z_\theta
    \defn
    \int_\K\exp(-\theta\tp x)\,\D x\,,
\]
where $\Ind_\K$ is the indicator function of $\K$. Basic facts about Markov chains and mixing are deferred to \S\ref{app:deferred_preliminaries}.

\subsection{Local metrics and the Dikin walk}

\paragraph{Local metric.}
Let $g:\interior\K\to\mathbb S_{++}^d$ be a local metric.  We identify symmetric matrices with their bilinear forms and write $g(x)[u,v]\defn u\tp g(x)v$. 
For $h\in\Rd$, its local norm at $x$ is $\norm{h}_{g(x)}^2 \defn h\tp g(x)h$.
The corresponding \emph{Dikin ellipsoid} of radius $\rho>0$ is
\[
    \msf D_g(x,\rho)
    \defn
    \{y\in\Rd:\norm{y-x}_{g(x)}\leq\rho\}\,.
\]
For the Hessian metric of a self-concordant barrier, the unit Dikin ellipsoid is contained in $\K$ \cite[Theorem~2.1.1]{NN94ipm}.

\paragraph{Dikin walk.}
Given a local metric $g$ and a proposal radius $r>0$, set $\step\defn r/\sqrt d$.  From a current state $x\in\interior\K$, one step of the Dikin walk proceeds as follows.
\begin{enumerate}
    \item Sample a proposal
        $Y\sim\proposal_x\defn\Normal(x,\step^2g(x)^{-1})$, whose density is
        \begin{equation}\label{eq:proposal-density}
            p_x(y)
            =\frac{\sqrt{\det g(x)}}{(2\pi\step^2)^{d/2}}
            \exp\bpar{-\frac{1}{2\step^2}\,\norm{y-x}_{g(x)}^2}\,.
        \end{equation}
    \item If $Y\in\interior\K$, move to $Y$ with the Metropolis--Hastings acceptance probability $\alpha_g(x,Y)$ defined below; otherwise remain at $x$.
\end{enumerate}
Equivalently, the proposal can be written as $Y=x+\step H$ for $H\mid X=x\sim \msf N(0,g(x)^{-1})$.

For $x,y\in\interior\K$, we call $p_y(x)/p_x(y)$ the proposal-density ratio and its logarithm the log proposal-density ratio. For the target $\pi_\theta$, the acceptance probability can be written as 
\begin{equation}
\begin{aligned}
    \alpha_g(x,y)
    & =
    \begin{cases}
    \displaystyle
    \min\bbrace{1,
    \exp(-\inner{\theta,y-x})\, \frac{p_y(x)}{p_x(y)}}\,,
    & y\in\interior\K\,,\\[4pt]
    0\,, & y\notin\interior\K\,,
    \end{cases}\\
    \frac{p_y(x)}{p_x(y)}
    & =
    \bpar{\frac{\det g(y)}{\det g(x)}}^{1/2}
    \exp\bbrace{
    -\frac{1}{2\step^2}\,(y-x)\tp
    \bpar{g(y)-g(x)}(y-x)
    }\,.
\end{aligned}
\label{eq:exponential-acceptance}
\end{equation}

Let $\cT$ be the resulting transition kernel, and set $\cT_x\defn\cT(x,\cdot)$. The Metropolis correction makes $\cT$ reversible with respect to $\pi_\theta$, so $\pi_\theta$ is stationary. We use $\widetilde{\cT}$ for its lazy version.

\paragraph{Dikin metrics on polytopes.}
For a polytope $\K = \{Ax\geq b\}$ and $x\in\interior\K$, define
\[
    s_x \defn Ax-b\in\R_{>0}^m\,,
    \qquad
    S_x \defn \Diag s_x\,,
    \qquad
    A_x\defn S_x^{-1}A\,.
\]
Writing $a_i$ and $a_{x,i}$ for the $i$-th rows of $A$ and $A_x$, respectively, we have $a_{x,i}=a_i/s_{x,i}$.  The \emph{weighted Dikin metrics} considered in this paper have the form
\[
    g(x)
    \defn
    A_x\tp W_xA_x\,,
    \qquad
    W_x\defn\Diag w_x\,,
    \qquad
    w_x\in\R_{>0}^m\,.
\]
The Dikin walk with this metric is called a \emph{weighted Dikin walk}.

\subsection{Regularity conditions for mixing}\label{subsec:regularity-conditions}

We first state the regularity conditions used in the high-probability conductance framework and then recall two additional conditions used in the pointwise acceptance analysis.

\paragraph{Self-concordance and symmetry.}
\emph{Self-concordance} is standard in interior-point theory~\cite{NN94ipm} and Dikin-walk analysis~\cite{KN12random}.
\emph{Strong self-concordance} and \emph{$\bar\nu$-symmetry} were introduced for Dikin-walk sampling by Laddha, Lee, and Vempala~\cite{LLV20strong}.
Below, for a differentiable map $F$, $\Dd F(x)[h]$ denotes the first directional derivative of $F$ at $x$ in the direction $h\in\Rd$.

\begin{definition}[Self-concordance and strong self-concordance]\label{def:strongly-self-concordant}
Let $g:\interior\K\to\mathbb S_{++}^d$ be continuously differentiable $(C^1)$. The metric $g$ is \emph{self-concordant} if, for every $x\in\interior\K$ and $h\in\Rd$,
\[
    \norm{
        g(x)^{-1/2}\,\Dd g(x)[h]\,g(x)^{-1/2}
    }_{\op}
    \leq
    2\,\norm{h}_{g(x)}\,.
\]
Equivalently, $-2g(x)\norm{h}_{g(x)} \preceq \Dd g(x)[h] \preceq 2g(x)\norm{h}_{g(x)}$.

It is called \emph{strongly self-concordant} (SSC) if the stronger Frobenius-norm bound holds:
\[
    \norm{
        g(x)^{-1/2}\,\Dd g(x)[h]\,g(x)^{-1/2}
    }_{\F}
    \leq
    2\,\norm{h}_{g(x)}\qquad\text{for any }x\in\interior\K \text{ and } h\in\Rd\,.
\]
\end{definition}

\begin{definition}[$\bar\nu$-symmetry]\label{def:nu-symmetry}
For $\bar\nu\geq1$, a local metric $g$ on a convex body $\K$ is \emph{$\bar\nu$-symmetric} if
\[
    \msf D_g(x,1)
    \subseteq
    \K\cap(2x-\K)
    \subseteq
    \msf D_g(x,\sqrt{\bar\nu})\qquad \text{for any }x\in\interior\K\,,
\]
where $2x-\K$ is the reflection of $\K$ about $x$.
\end{definition}

\paragraph{Mixed-trace regularity.}
For a $C^1$ local metric $g$ and $x\in\interior\K$, define the normalized derivative tensor $\cJ_x^g:(\Rd)^3\to\R$ by
\[
    \cJ_x^g[u,v,w]
    \defn
    (g(x)^{-1/2}u)\tp
    \Dd g(x)[g(x)^{-1/2}w]\,
    g(x)^{-1/2}v\,.
\]
This tensor is symmetric in its first two arguments. For an orthonormal basis $\{e_i\}_{i=1}^d$, define
\begin{align*}
    \vecnorm{\cJ_x^g}{\sbrace{12}\sbrace{3}}
    &\defn
    \sup_{\norm{w}=1}
    \Bpar{
        \sum_{i,j=1}^d
        \cJ_x^g[e_i,e_j,w]^2
    }^{1/2}
    =
    \sup_{\norm{w}=1}
    \norm{
        g(x)^{-1/2}\,
        \Dd g(x)[g(x)^{-1/2}w]\,
        g(x)^{-1/2}
    }_{\F}\,,\\
    &(\Tr_{12}\cJ_x^g)_k
    \defn
    \sum_{i=1}^d\cJ_x^g[e_i,e_i,e_k]\,,\qquad
    (\Tr_{23}\cJ_x^g)_i
    \defn
    \sum_{j=1}^d\cJ_x^g[e_i,e_j,e_j]\,.
\end{align*}
The norm and the resulting contraction vectors are independent of the orthonormal basis.  When $g$ is clear from context, we write $\cJ_x\defn\cJ_x^g$.  Note that SSC is equivalent to $\vecnorm{\cJ_x^g}{\sbrace{12}\sbrace{3}}\leq2$.

\begin{definition}[Mixed-$23$-trace]\label{def:mixed-trace-regularity}
For $\gamma\geq0$, a $C^1$ local metric $g$ has \emph{$\gamma$-bounded mixed $23$-trace} if
\[
    \norm{\Tr_{23}\cJ_x^g}
    \leq
    \gamma\sqrt d
    \qquad\text{for every }x\in\interior\K\,.
\]
When $\gamma$ is a universal constant, we simply say that $g$ has \emph{bounded mixed $23$-trace}.
\end{definition}

This term naturally arises when we analyze the first-order derivative of some quantity in \eqref{eq:dotTheta-chaos-decomposition}.

\begin{remark}[Mixed trace for Hessian metrics]\label{rmk:regularity-ssc}
SSC gives
\[
    \norm{\Tr_{12}\cJ_x^g}
    \leq
    \sqrt d\,\vecnorm{\cJ_x^g}{\sbrace{12}\sbrace{3}}
    \leq
    2\sqrt d\,.
\]
If $g=\nabla^2 f$ for a $C^3$ function $f$, then $\cJ_x^g$ is fully symmetric. Consequently, $\Tr_{23}\cJ_x^g=\Tr_{12}\cJ_x^g$, so an SSC Hessian metric has $2$-bounded mixed $23$-trace.
\end{remark}

\paragraph{Second-order derivative.}
For a twice-differentiable map $F$, we use $\Dd^2F(x)[h_1,h_2]$ to denote its second directional derivative in the directions $h_1$ and $h_2$.
For a $C^2$ local metric $g$ and $x\in\interior\K$, define the normalized second-derivative tensor
\begin{equation}
    \cQ_x^g[u,v,w,z]
    \defn
    \Dd^2g(x)[g(x)^{-1/2}w,g(x)^{-1/2}z]
    [g(x)^{-1/2}u,g(x)^{-1/2}v]\,.
\label{eq:normalized-second-derivative-tensor}
\end{equation}
It is symmetric in $(u,v)$ and in $(w,z)$.  If $g=\nabla^2f$ for a $C^4$ function $f$, then $\cQ_x^g$ is fully symmetric. 
For an orthonormal basis $\{e_i\}_{i=1}^d$, define
\[
    (\Tr_{12}\cQ_x^g)[w,z]
    \defn
    \sum_{i=1}^d\cQ_x^g[e_i,e_i,w,z]\,.
\]
For $h=g(x)^{-1/2}w$,
\begin{align}
    (\Tr_{12}\cQ_x^g)[w,w]
    &=
    \Tr\parenth{g(x)^{-1}\Dd^2g(x)[h,h]},
    \label{eq:Q-trace-second-derivative}\\
    \Dd^2\log\det g(x)[h,h]
    &=
    (\Tr_{12}\cQ_x^g)[w,w]
    -
    \norm{\cJ_x^g[\,\cdot,\cdot,w]}_{\F}^2\,.
    \label{eq:logdet-JQ}
\end{align}

Kook and Vempala~\cite{KV24ipm} introduced lower-trace self-concordance and the original one-sided form of average self-concordance. We use the two-sided form stated by Jiang and Chen~\cite{JC25regularized}.

\begin{definition}[Lower-trace self-concordance]\label{def:ltsc}
Let $g:\interior\K\to\mathbb S_{++}^d$ be $C^2$.  It is \emph{lower-trace self-concordant} (LTSC) if for every $x\in\interior\K$,
\[
    \Tr_{12}\cQ_x^g
    \succeq
    -I_d\,.
\]
Equivalently, $\Tr(g(x)^{-1}\Dd^2g(x)[h,h]) \geq -\norm{h}_{g(x)}^2$ for every $x \in \intk$ and $h\in\Rd$.
\end{definition}

\begin{definition}[Average self-concordance]\label{def:asc}
A local metric $g$ is \emph{average self-concordant} (ASC) if, for every
$\veps\in(0,1/2)$, there is a radius $r_\veps>0$, depending only on
$\veps$ and not on $d$, such that for every $x\in\interior\K$ and
$0<r\leq r_\veps$,
\[
    \Prob_{Y\sim\Normal(x,\frac{r^2}d\,g(x)^{-1})}
    \bpar{
        Y\in\interior\K
        \ \text{ and }\
        \abs{
            \norm{Y-x}_{g(Y)}^2-\norm{Y-x}_{g(x)}^2
        }
        \leq
        \frac{2\veps r^2}{d}
    }
    \geq
    1-\veps\,.
\]
\end{definition}

\subsection{Weighted Dikin metrics}\label{subsec:special-weighted-Dikin-metrics}

We now define the special metrics used in the polytope applications.

\paragraph{Leverage scores.}
For a full-column-rank matrix $B\in\R^{m\times d}$, let $\sigma(B) \defn \diag(B(B\tp B)^{-1}B\tp) \in[0,1]^m$ denote its leverage-score vector.  For the weighted Dikin metric, set $B_x\defn W_x^{1/2}A_x$.  All leverage scores of $B_x$ are positive since $A$ has no zero rows.  The identity
\begin{equation}\label{eq:weighted-Dikin-leverage-identity}
    \sigma_i(B_x)
    =
    w_{x,i}\,a_{x,i}\tp g(x)^{-1}a_{x,i}
    \qquad \text{for each } i\in[m]
\end{equation}
shows that
\[
    \msf D_g(x,1)\subseteq\K
    \quad\Longleftrightarrow\quad
    a_{x,i}\tp g(x)^{-1}a_{x,i}\leq1
    \ \text{for every }i
    \quad\Longleftrightarrow\quad
    \sigma(B_x)\leq w_x \quad \text{entrywise}\,.
\]

\paragraph{Lewis weights.}
For $p\geq2$, the \emph{$\ell_p$-Lewis weights} of $A_x$
are the unique vector $w_x^{(p)}\in\R_{>0}^m$ satisfying
\[
    w_x^{(p)}
    =
    \sigma([W_x^{(p)}]^{1/2-1/p}A_x)
    \quad \text{for }W_x^{(p)}\defn\Diag w_x^{(p)}\,.
\]
The fixed-point identity gives $\sum_{i=1}^m w_{x,i}^{(p)}=d$ and $W_x^{(p)}\preceq I_m$, and the weights vary smoothly with $x\in\interior\K$ \cite[Lemmas~22, 24, and~47]{LS19solving}.

\paragraph{(1) Lee--Sidford metric.}
For $p_{\mathrm{LS}} = 2\,(1+\log m)$ and $\Lambda_{\mathrm{LS}} = \frac{p_{\mathrm{LS}}}{2}$, the normalized \emph{Lee--Sidford (LS) metric}~\cite{LS19solving,LLV20strong} is
\begin{equation}\label{eq:LS-metric}
    g_{\mathrm{LS}}(x)
    \defn
    \Lambda_{\mathrm{LS}}^2
    A_x\tp
    [W_x^{(p_{\mathrm{LS}})}]^{1-2/p_{\mathrm{LS}}}
    A_x\,.
\end{equation}

\paragraph{(2) Lewis-weight metric.}
For  $p_{\mathrm{LW}} = 4\vee\log m$ and $\Lambda_{\mathrm{LW}} = (1+\frac{p_{\mathrm{LW}}}{2})\,\sqrt2\,m^{1/(p_{\mathrm{LW}}+2)}$,
the normalized Lewis-weight metric~\cite{KV24ipm} is defined as 
\begin{equation}\label{eq:LW-metric}
    g_{\mathrm{LW}}(x)
    \defn
    \Lambda_{\mathrm{LW}}^2
    A_x\tp W_x^{(p_{\mathrm{LW}})}A_x\,.
\end{equation}

\paragraph{(3) John metric.}
Following Chen et al.~\cite{CDW18MCMC}, set $\kappa = \log_2\frac{2m}{d}$, $\alpha_{\mathrm{J}} = 1-\frac1\kappa$, and $\beta_{\mathrm{J}} = \frac{d}{2m}$.
The John-weight vector $\zeta_x\in\R_{>0}^m$ is the unique positive
solution of $\zeta_x = \sigma(Z_x^{\alpha_{\mathrm{J}}/2}A_x) + \beta_{\mathrm{J}}\boldsymbol1$ for $Z_x\defn\Diag\zeta_x$,
as follows from the strictly convex formulation of Chen et al.~\cite[Eq.\ (12) and Lemma~3(a)]{LS19solving,CDW18MCMC}. With $\lambda_{\mathrm{J}} = 4\,(1+\kappa)^2$, the normalized John metric is
\begin{equation}
    g_{\mathrm{J}}(x)
    \defn
    \lambda_{\mathrm{J}}A_x\tp Z_xA_x.
    \label{eq:J-metric}
\end{equation}

\section{Main results}

We now present two complementary mixing frameworks.
The first establishes acceptance control only on a high-probability region and uses $s$-conductance to obtain total-variation mixing without assumptions on second derivatives of the metric.
The second adds LTSC and fourth-order bootstrap control to
obtain pointwise acceptance, a spectral gap, and $\chi^2$-contraction.

\subsection{Mixing with acceptance control on a high-probability region}

Our first framework uses $s$-conductance once acceptance is controlled on a high-probability region. Consequently, TV-mixing requires only the three conditions below, which involve at most first derivatives of the metric: SSC, $\bar\nu$-symmetry, and bounded mixed $23$-trace.

\begin{theorem}[TV-mixing of the Dikin walk on a convex body]\label{theorem:generic-weighted-Dikin-mixing}
Let $\K\subset\Rd$ be a full-dimensional convex body, let $\theta\in\Rd$, let $g:\interior\K\to\mathbb S_{++}^d$ be a continuously
differentiable local metric, and let $\pi_\theta$ be the exponential target.
Suppose that, for some $\bar\nu\geq1$ and a universal constant
$\gamma\geq0$,
\begin{enumerate}[label=(\roman*)]
    \item $g$ is SSC
    (Definition~\ref{def:strongly-self-concordant});
    \item $g$ is $\bar\nu$-symmetric
    (Definition~\ref{def:nu-symmetry});
    \item $g$ has $\gamma$-bounded mixed $23$-trace
    (Definition~\ref{def:mixed-trace-regularity}).
\end{enumerate}
For any $\veps\in(0,1/2)$ and initial distribution $\nu_0$, define $L \defn \log\frac{e\,(1+\sqrt{d\bar\nu})(1+\chi^2(\nu_0\mmid\pi_\theta))}{\veps^2}$.
For a proposal radius $r\asymp L^{-3/2}$, the lazy Dikin walk satisfies
\begin{equation*}\label{eq:main-generic-mixing-bound}
    \tmix^{\tv}(\veps;\nu_0,\widetilde{\cT})
    =
    O\bpar{
        d\bar\nu L^3
        \log\frac{1 \vee \chi^2(\nu_0\mmid\pi_\theta)}{\veps^2}
    }
    =
    \Otilde\bpar{
        d\bar\nu
        \log^4\frac{1 \vee \chi^2(\nu_0\mmid\pi_\theta)}{\veps^2}
    }\,.
\end{equation*}
\end{theorem}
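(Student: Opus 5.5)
We will prove the bound through the standard $s$-conductance route of Lov\'asz--Simonovits: we show that the lazy walk $\widetilde{\cT}$ has $s$-conductance $\Phi_s\gtrsim r/\sqrt{d\bar\nu}$ for $s\asymp \veps^2/(1+\chi^2(\nu_0\mmid\pi_\theta))$. The mixing bound $O(\Phi_s^{-2}\log(\cdot))$ then gives $O(d\bar\nu r^{-2}\log\frac{1\vee\chi^2}{\veps^2})$. Since $r\asymp L^{-3/2}$, this is $O(d\bar\nu L^3\log\frac{1\vee\chi^2}{\veps^2})$. The $\chi^2$-warmness is used only to say that mass outside a bad set of $\pi_\theta$-measure $\delta$ is controlled: by Cauchy--Schwarz, $\nu_t(B)\le \pi_\theta(B)+\sqrt{\chi^2(\nu_t\mmid\pi_\theta)\,\pi_\theta(B)}$, and $\chi^2(\nu_t\mmid\pi_\theta)\le\chi^2(\nu_0\mmid\pi_\theta)$ by reversibility. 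The conductance lower bound combines two ingredients. The first is an isoperimetric inequality for $\pi_\theta$ in the Hilbert/cross-ratio distance. By $\bar\nu$-symmetry, the local norm $\norm{\cdot}_{g(x)}/\sqrt{\bar\nu}$ is dominated by that distance, so sets at local distance $\gtrsim r/\sqrt d$ have boundary mass $\gtrsim (r/\sqrt{d\bar\nu})\min(\pi(S_1),\pi(S_2))$. This is the Laddha--Lee--Vempala argument, using log-concavity of $\pi_\theta$. The second is a \emph{one-step overlap}: there is a good set $G\subset\interior\K$ with $\pi_\theta(\K\setminus G)\le\delta$ such that $\dtv(\cT_x,\cT_y)\le 1-c$ for $x,y\in G$ with $\norm{x-y}_{g(x)}\le c' r/\sqrt d$. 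With these, the usual three-set partition argument, with $G^c$ discarded into the $s$ slack, gives the $s$-conductance.

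For the overlap, we split $\dtv(\cT_x,\cT_y)\le \dtv(\proposal_x,\proposal_y)+\text{rej}(x)+\text{rej}(y)$. The proposal overlap follows from SSC alone. For $\norm{x-y}_{g(x)}\le \step$, self-concordance gives $g(y)\approx g(x)$, and SSC gives $\norm{g(x)^{-1/2}(g(y)-g(x))g(x)^{-1/2}}_\F\lesssim r/\sqrt d\cdot\sqrt{?}\le O(r)$. Pinsker with the Gaussian KL formula then bounds $\dtv(\proposal_x,\proposal_y)$ by a small constant. The substantive part is the rejection probability. We must show that for $x\in G$, $\Prob_{Y\sim\proposal_x}\bigl(\alpha_g(x,Y)<1/2\bigr)$ is small. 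The exponential factor $\inner{\theta,Y-x}$ is handled as in prior work: since $\pi_\theta$ is log-concave, $\norm{\theta}_{g(x)^{-1}}\lesssim\sqrt d\,L$ for $x$ outside a set of small $\pi_\theta$-measure. We plan to show this via the exponential-family fact that $\theta\tp(x-\E x)$ has sub-exponential tails of scale $O(1)$ per direction and the symmetric-body bound from $\bar\nu$-symmetry. This makes $\step\inner{\theta,H}=O(rL)$, which is small since $r\asymp L^{-3/2}$. Membership $Y\in\interior\K$ holds with high probability since $\norm{Y-x}_{g(x)}\approx r<1$.

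The main obstacle is the log proposal-density ratio
\[
\tfrac12\log\tfrac{\det g(Y)}{\det g(x)}-\tfrac{1}{2\step^2}\,(Y-x)\tp(g(Y)-g(x))(Y-x),
\]
using only first derivatives. We will interpolate along $x_t=x+t\step H$, differentiate in $t$, and write the derivative as a polynomial chaos in the Gaussian $w=g(x)^{1/2}H$ (this is the decomposition \eqref{eq:dotTheta-chaos-decomposition}). The leading terms are $\tfrac{\step}{2}\Tr\bigl(g^{-1}\Dd g[H]\bigr)=\tfrac{r}{2\sqrt d}\inner{\Tr_{12}\cJ,w}$ and $-\tfrac{\step}{2}\cJ[w,w,w]$. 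Their Gaussian means cancel against each other only through the mixed $23$-trace: the degree-one projection of $\cJ[w,w,w]$ is $\inner{\Tr_{12}\cJ+2\Tr_{23}\cJ,w}$. What remains is then a centered degree-one term with coefficient $\norm{\Tr_{23}\cJ}\,r/\sqrt d\le\gamma r$, plus a degree-three Wiener chaos with Frobenius-type norm $\lesssim r\vecnorm{\cJ}{\sbrace{12}\sbrace{3}}/\sqrt d\cdot\sqrt d\lesssim r$ by SSC. Gaussian hypercontractivity gives tails of order $r L^{3/2}$ at confidence $e^{-L}$, which is $O(1)$ exactly when $r\asymp L^{-3/2}$; this is the source of the $L^3$. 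The delicate point is that these estimates hold at the base point $x$, while the derivative is evaluated at $x_t$. We intend to handle this by conditioning on the high-probability event that $H$ is typical, using self-concordance to transfer $\cJ_{x_t}$ to the frame of $g(x)$ with constant distortion for $t\in[0,1]$. We will then apply the tail bound pointwise in $t$ on a fine grid, plus a union bound, which costs only a $\log$ factor absorbed into $L$. This avoids any second-derivative (LTSC) assumption. The price is that the acceptance is guaranteed only off an $e^{-L}$-probability set of proposals, which the $s$-conductance framework tolerates.
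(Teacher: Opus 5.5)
\paragraph{Main gap: the path derivative.}
Your overall architecture matches the paper's: $s$-conductance with $s\asymp\veps^2/(1+\chi^2(\nu_0\mmid\pi_\theta))$, cross-ratio isoperimetry via $\bar\nu$-symmetry, SSC proposal overlap, and the Hermite decomposition of the base-point derivative with hypercontractivity giving $r\asymp L^{-3/2}$. However, the step you flag as delicate is where the proof breaks, and your proposed fix does not repair it.

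The chaos bound for $\cJ_x[\xi,\xi,\xi]-\inner{\Tr_{12}\cJ_x,\xi}$ needs the tensor to be fixed and $\xi$ to be an independent standard Gaussian. Along the path, $\dot\Theta_t(x,h)=\dot\Theta_0(x+th,h)$ is the same expression with the tensor $\cJ_{x+th}$. That tensor depends on the very direction $h$ it is contracted with. Self-concordance only gives $g(x_t)\approx g(x)$. After the change of frame you have a random tensor with the same SSC and trace norms. Uniformly over such tensors, the only available bound is $\abs{\cJ_{x_t}[\xi_t^{\otimes 3}]}\le 2\norm{\xi_t}^3\approx d^{3/2}$, which over a time window of length $\step=r/\sqrt d$ gives $rd$ instead of $r$.

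Recovering the $\sqrt d$ scale at $x_t$ requires comparing $\cJ_{x_t}$ with $\cJ_x$ along $h$, i.e., second-order information. That is exactly what LTSC and the fourth-order bootstrap supply in Theorem~\ref{theorem:generic-pointwise-mixing}, and it is unavailable here. With $g$ only $C^1$, there is not even a modulus of continuity for $\cJ$ to interpolate between grid times. At each grid time the dependence problem is unchanged, so the union bound does not help. Tellingly, apart from the $\theta$-factor, your argument would yield acceptance control at \emph{every} $x$ from first-order hypotheses alone.

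The missing idea is to average over the stationary measure. Take $(X,H)\sim\mu$ with $X\sim\pi_\theta$, restricted to downhill directions $\inner{\theta,h}\le0$. The shear $(x,h)\mapsto(x+th,h)$ has unit Jacobian, so the law of $(x_t,h)$ reweighted by $e^{-(\Theta_t-\Theta_0)}$ is dominated by $\mu$. Hence $\dot\Theta_t$ has $L^k$-norm at most $R_k$ (the base-point moment) at every $t$. A first-hitting-time truncation at level $a$ keeps the reweighting factor below $e^a$, giving $\mu(\mathcal B_{\theta,a})\le e^a(\step R_k/a)^k+\mu(\cE^c)$. Markov's inequality in $x$ then produces the good set $\Xi$. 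This control holds only on average over $x\sim\pi_\theta$, which is why $s$-conductance is genuinely needed.

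\paragraph{Secondary gap: the exponential tilt.}
Your treatment of the exponential tilt also does not work as sketched. The inner inclusion $\msf D_g(x,1)\subseteq\K\cap(2x-\K)$ only gives $\norm{\theta}_{g(x)^{-1}}\le\inner{\theta,x}-\min_{\K}\inner{\theta,\cdot}$. Concentration of $\inner{\theta,X}$ does not make the right-hand side small. For the orthant (or a large simplex) with $\theta=\boldsymbol1$, $\inner{\boldsymbol1,X}$ is roughly $\mathrm{Gamma}(d,1)$, so the bound is of order $d$, and then $\step\inner{\theta,H}$ is of order $r\sqrt d$.

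A bound $\norm{\theta}_{g(x)^{-1}}\lesssim\sqrt d+L$ off a set of $\pi_\theta$-mass $e^{-L}$ can be derived differently. One integrates by parts against $e^{-\inner{\theta,x}}$ to bound moments of $\theta\tp g^{-1}\theta$, using self-concordance and the mixed $23$-trace. That is not the argument you sketch.

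The paper avoids any bound on $\theta$. After relabeling so that $\inner{\theta,x}\le\inner{\theta,y}$, it lower-bounds the accepted overlap on the half-space $\{z:\inner{\theta,z}\le\inner{\theta,x}\}$. This half-space has $\proposal_x$-mass exactly $1/2$, and on it the target factor in the acceptance ratio is at least $1$ from both $x$ and $y$. The same downhill restriction is what makes the change-of-variables domination above valid.

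\paragraph{Minor point: combining overlap and rejection.}
With acceptance guaranteed only above $1/2$, the split $\dtv(\cT_x,\cT_y)\le\dtv(\proposal_x,\proposal_y)+\mathrm{rej}(x)+\mathrm{rej}(y)$ need not be below $1$. You need either acceptance close to $1$, or the paper's lower bound $\int\min(q_x,q_y)\ge e^{-a}\bigl(\int\min(p_x,p_y)-\text{bad mass}\bigr)$.
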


See \eqref{eq:mixing-tv} for the definition of $\tmix^{\tv}$.
Our argument builds on the work of Laddha, Lee, and
Vempala~\cite{LLV20strong}, who introduced SSC and $\bar\nu$-symmetry and developed arguments based on proposal overlap and cross-ratio isoperimetry.
Their claimed $d\bar\nu$ bound, however, has a gap. Their equation~(2.2) controls the ellipsoid-volume ratio but does not check whether the current point is included in the Dikin ellipsoid centered at the proposal.
Our new ingredient is condition~(iii) which controls acceptance on a high-probability region under the stationary measure.
When $g$ is a Hessian metric, condition~(i) implies condition~(iii) with $\gamma=2$ (Remark~\ref{rmk:regularity-ssc}). Also, when $g$ is the Hessian metric of a self-concordant barrier, the inner inclusion in condition~(ii) is automatic. 
Condition~(ii) supplies the comparison between cross-ratio distance and local norms used in isoperimetry. Applied to the Lee--Sidford, Lewis-weight, and John metrics, Theorem~\ref{theorem:generic-weighted-Dikin-mixing} yields the conjectured $\Otilde(d^2)$ mixing bound for sampling from uniform and exponential targets on polytopes (Corollary~\ref{cor:LS-mixing}).

\subsection{Mixing with pointwise acceptance control}

Theorem~\ref{theorem:generic-weighted-Dikin-mixing} controls acceptance
only on a high-probability region. The standard pointwise argument instead
uses SSC, LTSC, and ASC to obtain transition overlap for nearby points
as shown by Kook and Vempala~\cite{KV24ipm} and
Jiang and Chen~\cite{JC25regularized}. Together with
$\bar\nu$-symmetry, ordinary conductance then gives a $\chi^2$-contraction factor $\exp\parenth{-\Omega(N/(d\bar\nu))}$ after $N$ steps~\cite{Kook26Dikin}.

For the LS metric, Kook's verification of ASC uses an additional $\Otilde(d^{1/4})$ scaling factor. This factor also increases $\bar\nu$ and yields an $\Otilde(d^{9/4})$ mixing-time bound \cite[Theorem~1.1 and \S1.2]{Kook26Dikin}.
We remove this dimension-dependent factor by combining SSC and bounded mixed $23$-trace with the new fourth-order condition below to derive ASC.

\paragraph{Why a bootstrap condition?}
For $F(t)\defn h\tp g(x+th)h$, Taylor's formula reduces its variation along a proposal path (up to $t\lesssim d^{-1/2}$) to the initial cubic term $F'(0)=\cJ_x^g[\xi^{\otimes3}]$ and the integral of $F''(t)=\cQ_{x_t}^g[\xi_t^{\otimes4}]$. SSC and the bounded mixed $23$-trace condition control $F'(0)$ at scale $\sqrt d$, so it remains to control $F''(t)$ at scale $d$.
A direct uniform bound on $F''(t)$ is too crude and, for the LS metric, loses a factor of $\sqrt d$ \cite{CDW18MCMC,KV24ipm,JC25regularized}.

The higher-order analysis in~\cite{Kook26Dikin} suggests the scaling $\abs{F^{(j)}(0)}\lesssim d^{j/2}$ for $j\geq1$.
More precisely, extending the corresponding base-point through order $k$ would yield a $d^{2+1/(k+1)}$ mixing bound; the paper carries this out for $k=3$.
Heuristically, if the base-point estimates held at every order, then the formal Taylor expansion would give
\[
    1+\abs{F(t)-F(0)}
    \leq
    1+\sum_{j\geq1}\frac{t^j}{j!}\,\abs{F^{(j)}(0)}
    \leq
    e^{Ct\sqrt d}\,.
\]
Observe that this bound has the same form as the bound produced by a Gr\"onwall-type inequality. Hence, it motives finding an auxiliary function whose growth can be controlled in terms of itself.

We therefore introduce $\omega_t$ for suitable bootstrapping. For a fixed $\delta$, conditions~\textup{(B1)}--\textup{(B3)} below should be read as follows: \textup{(B1)} initializes $\omega_0=O_\delta(1)$ on a feasible path of length $O_\delta(d^{-1/2})$; \textup{(B2)} keeps $\omega_t=O_\delta(1)$ along this path by Gr\"onwall's inequality; and \textup{(B3)} converts this control into $\abs{F''(t)}=O_\delta(d)$. Thus, the argument closes through self-consistent pathwise control rather than a worst-case bound on the full fourth-order tensor.

\begin{definition}[Fourth-order bootstrap control]\label{def:fourth-order-bootstrap-control}
A $C^2$ local metric $g$ has \emph{fourth-order bootstrap control} if there is a function $\omega:\interior\K\times\Rd\to[1,\infty)$ with the following property. For every $\delta\in(0,1/2)$, there are constants $r_\delta>0$, $\eta_\delta,b_\delta\geq0$, and $c_\delta\geq1$ that depend only on $\delta$ and are uniform in the dimension and all other problem parameters. For $x\in\interior\K$, let
\[
    \xi\sim\Normal(0,I_d),\qquad
    h=g(x)^{-1/2}\xi,\qquad
    x_t=x+th.
\]
There is an event of probability at least $1-\delta$ on which the following hold:
\begin{enumerate}
    \item[\textup{(B1)}]
    \textup{Feasibility and initial bound.}\quad
    The segment $\{x_t:0\leq t\leq r_\delta/\sqrt d\}$ lies in
    $\interior\K$. On this interval, set
    $\xi_t=g(x_t)^{1/2}h$ and $\omega_t=\omega(x_t,h)$.
    Then $t\mapsto\omega_t$ is continuous and $\omega_0\leq c_\delta$;
    \item[\textup{(B2)}]
    \textup{Bootstrap control.}\quad
    For $0\leq t\leq r_\delta/\sqrt d$,
    \[
        \omega_t
        \leq
        \omega_0
        +
        \eta_\delta\sqrt d\int_0^t\omega_s\,\D s;
    \]
    \item[\textup{(B3)}]
    \textup{Fourth-order dominance.}\quad
    For $0\leq t\leq r_\delta/\sqrt d$,
    $\abs{\cQ_{x_t}^g[\xi_t^{\otimes4}]}
    \leq b_\delta d\,\omega_t$ (see \eqref{eq:normalized-second-derivative-tensor}).
\end{enumerate}
Here, $r_\delta$ is the path radius, $\eta_\delta$ controls the cumulative
growth in condition~\textup{(B2)}, $b_\delta$ is the dominance coefficient
in condition~\textup{(B3)}, and
$c_\delta$ bounds the initial value $\omega_0$.
\end{definition}

Lemma~\ref{lem:fourth-order-bootstrap-ASC} formalizes this argument and shows that SSC, bounded mixed $23$-trace, and fourth-order bootstrap control together establish ASC.
For a scaled LS metric, our verification of this criterion leads to an $\Otilde(d^2)$ mixing-time bound in $\chi^2$-divergence.

\begin{theorem}[$\chi^2$-contraction of the Dikin walk on a convex body]\label{theorem:generic-pointwise-mixing}
Let $\K\subset\Rd$ be a bounded, full-dimensional convex body, let $\theta\in\Rd$, let $g:\interior\K\to\mathbb S_{++}^d$ be a $C^2$ local metric, and let $\pi_\theta$ be the exponential target.
Suppose that, for some $\bar\nu\geq1$ and a universal constant $\gamma\geq0$,
\begin{enumerate}[label=(\roman*)]
    \item $g$ is SSC
    (Definition~\ref{def:strongly-self-concordant});
    \item $g$ is $\bar\nu$-symmetric
    (Definition~\ref{def:nu-symmetry});
    \item $g$ has $\gamma$-bounded mixed $23$-trace
    (Definition~\ref{def:mixed-trace-regularity});
    \item $g$ is LTSC
    (Definition~\ref{def:ltsc});
    \item $g$ has fourth-order bootstrap control
    (Definition~\ref{def:fourth-order-bootstrap-control}).
\end{enumerate}
There exist universal constants $r\in(0,1/2]$ and $c>0$ such that, for any initial distribution $\nu_0$, the
lazy Dikin walk with proposal radius $r$ satisfies, for every integer $N\geq0$,
\[
    \chi^2(\nu_0\widetilde{\cT}^{\,N}\mmid\pi_\theta)
    \leq
    \exp\bpar{-\frac{cN}{d\bar\nu}}\,
    \chi^2(\nu_0\mmid\pi_\theta)\,.
\]
In particular, for any
$\veps\in(0,1/2)$,
\[
    \tmix^{\chi^2}(\veps;\nu_0,\widetilde{\cT})
    =
    O\bpar{
        d\bar\nu
        \log\frac{1\vee\chi^2(\nu_0\mmid\pi_\theta)}{\veps}
    }\,.
\]
\end{theorem}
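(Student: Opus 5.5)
The proof follows the standard conductance route: a one-step overlap bound, cross-ratio isoperimetry, and Cheeger's inequality for reversible lazy chains. The new ingredient is deriving average self-concordance (ASC) from conditions (i), (iii) and (v), which we take from Lemma~\ref{lem:fourth-order-bootstrap-ASC}.

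\emph{Step 1: ASC from the bootstrap.} Fix $x$ and $\xi\sim\Normal(0,I_d)$, and write $h=g(x)^{-1/2}\xi$. With $F(t)=h\tp g(x+th)h$, the quantity to control is $\norm{Y-x}_{g(Y)}^2-\norm{Y-x}_{g(x)}^2=t^2(F(t)-F(0))$ at $t=r/\sqrt d$. Taylor's formula gives $F(t)-F(0)=tF'(0)+\int_0^t(t-s)F''(s)\,\D s$, with $F'(0)=\cJ_x[\xi^{\otimes3}]$ and $F''(s)=\cQ_{x_s}[\xi_s^{\otimes4}]$.
\begin{itemize}
\item For the cubic term, Gaussian chaos gives $\E\,\cJ_x[\xi^{\otimes 3}]^2\lesssim\norm{\Tr_{23}\cJ_x}^2+\norm{\Tr_{12}\cJ_x}^2+d\,\vecnorm{\cJ_x}{\sbrace{12}\sbrace{3}}^2\lesssim(\gamma^2+1)d$, using SSC and Remark~\ref{rmk:regularity-ssc}. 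So $|F'(0)|\lesssim_\veps\sqrt d$ with probability $1-\veps/2$.
\item For the quartic term, on the event of Definition~\ref{def:fourth-order-bootstrap-control}, (B2) and Gr\"onwall give $\omega_s\le c_\delta e^{\eta_\delta r_\delta}$, and then (B3) gives $|F''(s)|\lesssim_\delta d$.
\end{itemize}
Hence $|F(t)-F(0)|\lesssim r\sqrt d\cdot\sqrt d+r^2 d$, so $t^2|F(t)-F(0)|\lesssim (r^3+r^4)/d\le 2\veps r^2/d$ once $r\le r_\veps$. Feasibility of $Y$ comes from (B1).

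\emph{Step 2: overlap.} This follows the pointwise argument of Kook--Vempala and Jiang--Chen as recorded in \cite{Kook26Dikin}.
\begin{itemize}
\item By SSC, for $\norm{x-z}_{g(x)}\le r/\sqrt d$ we have $\dtv(\proposal_x,\proposal_z)\le 1/4$: Pinsker plus $\norm{g^{-1/2}(g(z)-g(x))g^{-1/2}}_\F\lesssim r/\sqrt d$ keeps the KL divergence $O(r^2)$.
\item The acceptance ratio splits into three factors. The $\theta$-term is controlled by using $\bar\nu$-symmetry to get $\norm{\theta}_{g(x)^{-1}}$-type bounds after a standard reduction. For exponential targets, the argument of \cite{Kook26Dikin} bounds the drift $\inner{\theta,h}$ via log-concavity; I would reuse it. The log-determinant term is handled by a second-order expansion: the first-order part is Gaussian with variance $\lesssim r^2$ by SSC, and the second-order part is bounded below via LTSC and \eqref{eq:logdet-JQ}. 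The quadratic-form term is controlled by ASC from Step~1.
\end{itemize}
Together these give acceptance $\ge 1/2$ with probability $\ge 0.9$ at \emph{every} $x$. Therefore $\dtv(\cT_x,\cT_z)\le 1-c_0$ for nearby $x,z$.

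\emph{Step 3: conductance.} $\bar\nu$-symmetry turns the cross-ratio distance into local-norm distance: $d_\K(x,z)\gtrsim\norm{x-z}_{g(x)}/\sqrt{\bar\nu}$. Combining this with cross-ratio isoperimetry for log-concave measures (as in \cite{LLV20strong}) gives conductance $\Phi\gtrsim r/\sqrt{d\bar\nu}$. The lazy chain is reversible with spectral gap $\gtrsim\Phi^2\gtrsim 1/(d\bar\nu)$. This yields $\chi^2(\nu_0\widetilde\cT^N\mmid\pi_\theta)\le(1-\Phi^2/2)^{2N}\chi^2(\nu_0\mmid\pi_\theta)$, and the mixing-time bound follows by taking logarithms.

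The main obstacle is Step~2's \emph{pointwise} control of the log-determinant and quadratic terms uniformly in $x$, including near the boundary. This is exactly where Step~1's ASC, with $r$ a universal constant rather than one shrinking with $d$, is essential.
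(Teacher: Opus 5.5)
Your architecture matches the paper's: ASC from the bootstrap via Taylor and Gr\"onwall, then pointwise transition overlap from SSC, LTSC and ASC, then cross-ratio isoperimetry with $\bar\nu$-symmetry and Cheeger. However, Step~2 has a genuine gap in how it handles the exponential tilt.

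You assert that the acceptance probability is at least $1/2$ with probability at least $0.9$ at every $x$. You control the factor $\exp(-\step\inner{\theta,h})$ through a bound on $\norm{\theta}_{g(x)^{-1}}$. No such bound exists. $\bar\nu$-symmetry involves only $g$ and $\K$, whereas $\theta\in\R^d$ is arbitrary, and the theorem demands $r$ and $c$ uniform in $\theta$. Since $\step\inner{\theta,h}\sim\Normal(0,\step^2\norm{\theta}_{g(x)^{-1}}^2)$, taking for instance $\theta=Me_1$ with $M$ large gets almost every uphill proposal rejected. The acceptance probability is then essentially at most $1/2$, so your intermediate claim is false. The appeal to a drift bound ``via log-concavity'' does not repair this.

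The missing idea is to sidestep the tilt rather than bound it:
\begin{enumerate}
\item Relabel so that $\inner{\theta,x}\leq\inner{\theta,z}$, and restrict to the half-space $D^-\defn\{y:\inner{\theta,y}\leq\inner{\theta,x}\}$.
\item Gaussian symmetry gives $\proposal_x(D^-)=1/2$. On $D^-$ the target factor in the acceptance ratio is at least $1$ from both $x$ and $z$, so $\int_{D^-}\min\{p_x,p_z\}\geq1/2-\dtv(\proposal_x,\proposal_z)\geq1/4$.
\item For each of $x$ and $z$, discard the proposals that are infeasible or have proposal-density ratio below $e^{-a}$. These have mass at most some $\tau<1/8$, which your log-determinant and ASC estimates do provide.
\item This gives $1-\dtv(\cT_x,\cT_z)\geq e^{-a}(1/4-2\tau)$, uniformly in $\theta$.
\end{enumerate}
This is exactly how the paper proves Lemma~\ref{lem:pointwise-Dikin-framework}, reusing the second half of the proof of Lemma~\ref{lem:acceptance_rate_control}.

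There are also some smaller slips.
\begin{itemize}
\item In the proposal-overlap bullet you dropped the mean-shift term $\norm{x-z}_{g(x)}^2/(2\step^2)$ of the KL divergence. It equals $1/2$ at $\norm{x-z}_{g(x)}=\step$, so Pinsker gives only $\dtv\leq1/2$ there; for equal covariances the true distance is already about $0.38$. You need a radius such as $\step/10$, as in Lemma~\ref{lem:proposal_overlap}.
\item In Step~1 the intermediate bound should read $\abs{F(t)-F(0)}\lesssim r+r^2$. Your final $(r^3+r^4)/d$ is correct.
\item Fact~\ref{fact:llv-nu-symmetry} controls only the minimum of $\norm{x-z}_{g(x)}$ and $\norm{x-z}_{g(z)}$. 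Passing to a single local norm uses the self-concordant local-norm comparison, as in the proof of Lemma~\ref{lem:generic-warm-start-mixing}.
\end{itemize}
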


\subsection{Sampling from polytopes and truncated positive-semidefinite cones}
\label{subsec:consequences}

We provide four consequences of the two frameworks. Corollary~\ref{cor:generic-weighted-Dikin-mixing-polytope} specializes Theorem~\ref{theorem:generic-weighted-Dikin-mixing} to weighted Dikin metrics on polytopes. Corollary~\ref{cor:LS-mixing} specializes it to the Lee--Sidford, Lewis-weight, and John metrics on polytopes, while Corollary~\ref{cor:dimension-square-SDP} applies Theorem~\ref{theorem:generic-weighted-Dikin-mixing} to a hybrid of the PSD barrier and the Lee--Sidford metric on bounded truncated PSD cones. Both corollaries give TV-mixing bounds for exponential targets.
Corollary~\ref{cor:spectral-LS-mixing} applies Theorem~\ref{theorem:generic-pointwise-mixing} to obtain $\chi^2$-contraction for a Lee--Sidford metric on polytopes. The required metric regularities are verified in \S\ref{sec:applications}.

\paragraph{Total-variation mixing on polytopes.}
The next corollary gives sufficient conditions for a weighted Dikin metric
on a polytope to satisfy the assumptions of
Theorem~\ref{theorem:generic-weighted-Dikin-mixing}.

\begin{corollary}[Dikin walk specialized to polytopes]\label{cor:generic-weighted-Dikin-mixing-polytope}
Let $\K=\{x\in\Rd:Ax\geq b\}$ be a bounded, full-dimensional polytope, where $A\in\R^{m\times d}$ has full column rank and no zero rows, and $b\in\R^m$. Consider a local metric of the form $g(x)=A_x\tp W_xA_x$, where $w_x\in\R_{>0}^m$, $W_x=\Diag w_x$, and $x\mapsto w_x$ is continuously differentiable.
Let $B_x=W_x^{1/2}A_x$, $\Sigma_x=\Diag\sigma(B_x)$, and define $R_x:\Rd\to\R^m$ by
\begin{align}
    \label{eq:relative-row-derivative}
    R_x(z)
    \defn
    \frac12\,\Dd\log w_x[g(x)^{-1/2}z]
    -
    A_xg(x)^{-1/2}z\,.
\end{align}
Suppose that there are $C\in[0,1]$ and $\bar\nu\geq1$ such that, for every $x\in\interior\K$,
\begin{enumerate}[label=(\roman*)]
    \item $\norm{\Sigma_x^{1/2}R_x}_{\op}\leq C$;
    \item $\msf D_g(x,1)\subseteq\K$, equivalently,
    $a_{x,i}\tp g(x)^{-1}a_{x,i}\leq 1$ for every $i\in[m]$;
    \item $\Tr W_x\leq\bar\nu$.
\end{enumerate}
Then, $g$ is SSC, $\bar\nu$-symmetric, and has
$2C$-bounded mixed $23$-trace. In particular, it has bounded mixed
$23$-trace, so Theorem~\ref{theorem:generic-weighted-Dikin-mixing}
applies.
\end{corollary}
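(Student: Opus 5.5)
We compute the derivative of the metric explicitly and read off the three properties. Since $s_{x,i}=a_i\tp x-b_i$, we have $\Dd a_{x,i}[h]=-a_{x,i}\,(a_{x,i}\tp h)$. Hence
\[
    \Dd g(x)[h]=\sum_{i}\bpar{\Dd w_{x,i}[h]-2w_{x,i}\,a_{x,i}\tp h}\,a_{x,i}a_{x,i}\tp
    =2A_x\tp W_x\,\Diag\bpar{\tfrac12\Dd\log w_x[h]-A_xh}\,A_x\,.
\]
Put $h=g(x)^{-1/2}z$ and $P\defn B_xg(x)^{-1/2}$, which has orthonormal columns. Then $PP\tp$ is the orthogonal projection with diagonal $\sigma(B_x)$, and
\[
    M_z\defn g(x)^{-1/2}\Dd g(x)[h]\,g(x)^{-1/2}=2P\tp\Diag(R_x(z))\,P\,.
\]

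\textbf{SSC.} Write $D=\Diag(R_x(z))$. Then
\[
    \norm{P\tp DP}_\F^2=\Tr(DPP\tp DPP\tp)\le \Tr\bpar{D\,\Diag(\sigma)\,D}=\norm{\Sigma_x^{1/2}R_x(z)}^2\,.
\]
The inequality holds because $\sum_{ij}D_iD_j(PP\tp)_{ij}^2\le\sum_i D_i^2\sum_j(PP\tp)_{ij}^2=\sum_iD_i^2\sigma_i$, using Cauchy--Schwarz (or Schur product arguments) together with the projection identity $\sum_j(PP\tp)_{ij}^2=\sigma_i$. Combined with (i), this gives $\norm{M_z}_\F\le 2C\norm{z}\le 2\norm{z}=2\norm{h}_{g(x)}$, which is SSC.

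\textbf{Mixed $23$-trace.} We have $(\Tr_{23}\cJ_x)_i=\sum_j e_i\tp M_{e_j}e_j$. Writing $p_k\tp$ for the rows of $P$, so that $\norm{p_k}^2=\sigma_k$, this equals $2\sum_k (p_k)_i\,R_x(p_k)_k$, using the linearity of $R_x$. So $\Tr_{23}\cJ_x=2P\tp v$ with $v_k\defn R_x(p_k)_k$. Since $\norm{P\tp}_{\op}\le1$, it suffices to bound $\norm{v}$. Let $e_k$ denote the standard basis vector of $\R^m$. Then
\[
    |v_k|\le \norm{e_k\tp R_x}\,\norm{p_k}=\sqrt{\sigma_k}\,\norm{e_k\tp R_x}=\norm{e_k\tp\Sigma_x^{1/2}R_x}\,,
\]
so $\norm{v}^2\le\norm{\Sigma_x^{1/2}R_x}_\F^2\le d\,\norm{\Sigma_x^{1/2}R_x}_{\op}^2\le dC^2$, because $R_x$ has a $d$-dimensional domain. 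This yields $\norm{\Tr_{23}\cJ_x}\le 2C\sqrt d$.

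\textbf{$\bar\nu$-symmetry.} The inner inclusion $\msf D_g(x,1)\subseteq\K$ is (ii). Since $\msf D_g(x,1)$ is symmetric about $x$, it also lies in $2x-\K$. For the outer inclusion, take $y\in\K\cap(2x-\K)$ and $u=y-x$. Both $x\pm u\in\K$, so $|a_i\tp u|\le s_{x,i}$, that is, $|a_{x,i}\tp u|\le1$ for all $i$. Then
\[
    \norm{u}_{g(x)}^2=\sum_i w_{x,i}(a_{x,i}\tp u)^2\le\Tr W_x\le\bar\nu
\]
by (iii). Theorem~\ref{theorem:generic-weighted-Dikin-mixing} then applies with $\gamma=2C\le2$.

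The main obstacle is the Frobenius bound in the SSC step, namely $\norm{P\tp DP}_\F\le\norm{\Sigma^{1/2}D\mathbf 1}$. The cleanest route is $\sum_{ij}D_iD_jQ_{ij}^2\le\tfrac12\sum_{ij}(D_i^2+D_j^2)Q_{ij}^2=\sum_iD_i^2\sigma_i$, where $Q=PP\tp$ and we use the symmetry of $Q$.
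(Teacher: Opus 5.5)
Your proposal is correct and follows the paper's proof essentially step by step. Both use the factorization $g^{-1/2}\Dd g[g^{-1/2}z]\,g^{-1/2}=2P\tp\Diag(R_xz)P$, the same row-wise Cauchy--Schwarz bound $\norm{\Tr_{23}\cJ_x}\le 2\norm{\Sigma_x^{1/2}R_x}_{\F}\le 2C\sqrt d$, and the same $\norm{A_xu}_\infty\le1$ argument for the outer symmetry inclusion. The only cosmetic difference is in the SSC step: you prove $\sum_{ij}D_iD_jQ_{ij}^2\le\sum_iD_i^2\sigma_i$ by AM--GM, whereas the paper uses the Schur product fact $\Sigma_x-Q^{\circ 2}=Q\circ(I-Q)\succeq0$; both are valid.
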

Condition~(i) controls SSC and the mixed $23$-trace bound in the following way: since $g(x)=B_x^\T B_x$ and $\Dd B_x[h]=\Diag(R_xz)B_x$ for $h=g(x)^{-1/2}z$, the bound on $\norm{\Sigma_x^{1/2}R_x}_{\op}$ ensures that $g(x)$ varies slowly.
By~\eqref{eq:weighted-Dikin-leverage-identity},
condition~(ii) is equivalent to the pointwise leverage-score bound
$\sigma_i(W_x^{1/2}A_x)\leq w_{x,i}$ for every $i\in[m]$.
Lastly, Conditions~(ii) and~(iii) imply $\bar\nu$-symmetry.

In \S\ref{sec:LS-verification}, we verify these conditions for the Lee--Sidford metric $g_{\mathrm{LS}}$, the Lewis-weight metric $g_{\mathrm{LW}}$, and the John metric $g_{\mathrm{J}}$, and Corollary~\ref{cor:generic-weighted-Dikin-mixing-polytope} implies  the following dimension-square TV-mixing bound.

\begin{corollary}[Dimension-square TV-mixing on polytopes]\label{cor:LS-mixing}
    Let $\K=\{x\in\Rd:Ax\geq b\}$ be a bounded,
    full-dimensional polytope, where $A\in\R^{m\times d}$ has full
    column rank and no zero rows.  Let
    $\pi_\theta\propto\exp(-\theta\tp x)\,\Ind_\K(x)$ be an
    exponential target on $\K$, and let $g$ be any of
    $g_{\mathrm{LS}}$, $g_{\mathrm{LW}}$, or $g_{\mathrm{J}}$, as defined
    above. The corresponding symmetry parameters satisfy
    \begin{align*}
        \bar\nu_g
        \lesssim
        \begin{cases}
            d\log^2 m\,, & g=g_{\mathrm{LS}}\,,\\
            d\log^2m\,, & g=g_{\mathrm{LW}}\,,\\
            d\log^2\frac{em}{d}\,, & g=g_{\mathrm{J}}\,.
        \end{cases}
    \end{align*}
    Given $\veps\in(0,1/2)$ and an initial distribution $\nu_0$, set $L_g\defn\log[d\log(em)(1+\chi^2(\nu_0\mmid\pi_\theta))/\veps^2]$. Then, for some proposal radius, the lazy Dikin walk with local metric $g$ satisfies $\dtv(\nu_0\widetilde{\cT}^{\,N},\pi_\theta)\leq\veps$ after
    \[
        N
        =
        O\bpar{
            d\bar\nu_g L_g^3
            \log\frac{
                1\vee\chi^2(\nu_0\mmid\pi_\theta)
            }{\veps^2}
        }
        \quad\text{steps}\,.
    \]
\end{corollary}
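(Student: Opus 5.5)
The plan is to reduce Corollary~\ref{cor:LS-mixing} to Corollary~\ref{cor:generic-weighted-Dikin-mixing-polytope} and then to Theorem~\ref{theorem:generic-weighted-Dikin-mixing}. For each $g\in\{g_{\mathrm{LS}},g_{\mathrm{LW}},g_{\mathrm{J}}\}$, written as $A_x\tp W_xA_x$, I will verify conditions (i)--(iii) of Corollary~\ref{cor:generic-weighted-Dikin-mixing-polytope} with a universal $C\le1$ and the stated $\bar\nu_g$. The effective weights are:
\begin{itemize}
\item LS: $w_x=\Lambda_{\mathrm{LS}}^2 (w^{(p)}_x)^{1-2/p}$;
\item LW: $w_x=\Lambda_{\mathrm{LW}}^2 w_x^{(p)}$;
\item John: $w_x=\lambda_{\mathrm J}\zeta_x$.
\end{itemize}
Once the three conditions hold, Theorem~\ref{theorem:generic-weighted-Dikin-mixing} gives $N=O(d\bar\nu_g L^3\log(\cdot))$. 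It remains to note that $L\lesssim L_g$, because $1+\sqrt{d\bar\nu_g}\lesssim d\log(em)$, so the logarithms agree up to constants.

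Condition (iii) is the trace count. For LS, Hölder gives $\sum_i (w_i^{(p)})^{1-2/p}\le m^{2/p}d^{1-2/p}\lesssim d$ when $p\asymp\log m$. Multiplying by $\Lambda_{\mathrm{LS}}^2\asymp\log^2m$ gives $d\log^2m$. For LW, $\sum w^{(p)}=d$ and $\Lambda_{\mathrm{LW}}^2\asymp p^2 m^{2/(p+2)}\asymp\log^2 m$. For John, $\sum\zeta\le d+\beta_{\mathrm J}m\le \tfrac32 d$ by the standard leverage-score sum bound, times $\lambda_{\mathrm J}\asymp\log^2(em/d)$.

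Condition (ii) is $\sigma_i(B_x)\le w_{x,i}$, and leverage scores are invariant under scalar multiples of $B_x$. For LW, $\sigma(W^{1/2}A_x)=\sigma(W^{1/2-1/p}W^{1/p}A_x)$, and the known comparison gives $\sigma_i\lesssim m^{2/(p+2)}w_i$ up to constants, which $\Lambda_{\mathrm{LW}}^2$ absorbs. For LS, $B_x$ is a multiple of $W^{1/2-1/p}A_x$, so $\sigma(B_x)=w^{(p)}\le (w^{(p)})^{1-2/p}\le\Lambda^2(\cdot)$ since $w^{(p)}\le1$. For John, $\sigma(Z^{1/2}A_x)$ is compared with $\sigma(Z^{\alpha/2}A_x)\le\zeta$ using $\zeta^{1-\alpha}\ge\beta^{1/\kappa}\gtrsim$ const, which is how Chen et al.\ handle it.

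Condition (i) is the main obstacle. We have $R_x(z)=\tfrac12\Dd\log w_x[h]-A_xh$ with $h=g^{-1/2}z$. The term $\Sigma^{1/2}A_xh$ is bounded by $\|\Sigma^{1/2}A_xg^{-1/2}\|_{\op}$, which is small because $\sigma_i\le w_i/\Lambda^2$ gives $a_{x,i}\tp g^{-1}a_{x,i}\le1/\Lambda^2$. For the derivative of the weights, I would use the Lee--Sidford formula $\Dd\log w^{(p)}[h]=-2(I-(1-2/p)\Lambda_{\mathrm{proj}})^{-1}\Lambda_{\mathrm{proj}}\ldots A_xh$, together with the known operator-norm bound $\|W^{-1}\Dd w[h]\|_{W}\lesssim p\|A_xh\|_{W}$ (LS19, Lemma~5.x; KV24). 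The weighted-norm bound must then be converted into a $\Sigma^{1/2}$-weighted operator bound on $R_x$. Here the normalization constants $\Lambda$ are exactly what is needed: they shrink $\|A_xh\|_{\Sigma}$ by $1/\Lambda\asymp 1/p$, which cancels the factor $p$ from differentiating the weights and yields $C\le1$. For the John weights, the analogous implicit-function bound from Chen et al.\ is used, with $\kappa$ playing the role of $p$.

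I expect the careful matching of weighted norms ($\Sigma$ versus $W$ versus the $W^{1-2/p}$ of the projection) to be the delicate part. The approach I would try first is to bound $\|\Sigma^{1/2}v\|\le\Lambda^{-1}\|W^{1/2}v\|$ using condition (ii), and then apply the known $W$-norm derivative estimates.
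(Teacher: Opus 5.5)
Your plan matches the paper's proof. Lemmas~\ref{lem:LS-metric-conditions}--\ref{lem:John-metric-conditions} check (i)--(iii) of Corollary~\ref{cor:generic-weighted-Dikin-mixing-polytope} using the ingredients you list: weight sums and H\"older for (iii), the leverage comparisons $\sigma=w^{(p)}$, $\sigma\le\rho_{\mathrm{LW}}w$, and $\Sigma_0\preceq 4Z$ for (ii), and the Lee--Sidford and Chen et al.\ derivative formulas for (i); Theorem~\ref{theorem:generic-weighted-Dikin-mixing} with $\gamma=2$ and $L\lesssim L_g$ then finishes. The one refinement is that the paper does (i) directly in the $\Sigma$-norm with the exact bound $0\preceq N\preceq pI$ (for LS, $\Sigma=W$ and $W^{1/2}R_xz=-\Lambda^{-1}(I+\beta N)W^{1/p}Uz$, where $U\defn\Lambda W^{\beta}A_xg^{-1/2}$ has orthonormal columns), giving $C\le(1+\beta p)/\Lambda=1$ exactly, which Corollary~\ref{cor:generic-weighted-Dikin-mixing-polytope} literally requires, rather than only the $C=O(1)$ your $\lesssim$ sketch yields.
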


\begin{remark}
For the LS metric, one can instead use a better choice of $p_{\textrm{LS}} = 2 \vee \log \frac{m}{d}$. In this case, the symmetry parameter is improved to $\bar{\nu} \lesssim d \log^2\frac{em}{d}$.
\end{remark}

\paragraph{Total-variation mixing on truncated positive-semidefinite cones.}

Theorem~\ref{theorem:generic-weighted-Dikin-mixing} also applies to
exponential targets on truncated positive-semidefinite cones.
In \S\ref{ssub:dimension_square_mixing_for_psd_cones}, we verify its assumptions for a hybrid metric on the domain
\[
    \K \defn \{X \in \mathbb{S}^d: X\succeq 0\,, \inner{A_i, X}\ge b_i\text{ for }i\in [m] \}\,.
\]
The ambient dimension is $n\defn d(d+1)/2$.
Since the Hessian metric of the positive-semidefinite barrier controls directions outside the span of the linear constraints, the number $m$ of such constraints need not be at least $n$. The corollary below gives an
$\Otilde(d^4)=\Otilde(n^2)$ bound, which is dimension-square in the ambient dimension.

\begin{corollary}[Dimension-square TV-mixing on truncated PSD cones]\label{cor:dimension-square-SDP}
    Let $A_1,\ldots,A_m,\Theta\in\mathbb S^d$ and $b\in\R^m$, and suppose
    that $\K=\{X\in\mathbb S^d:X\succeq0,\ \inner{A_i,X}\geq b_i
    \text{ for }i\in[m]\}$ is bounded and full-dimensional in $\mathbb S^d$.
    Let $\pi_\Theta\propto\exp(-\inner{\Theta,X})\,\Ind_\K(X)$ be the exponential target, $g_{\mathrm{LS}}$ be the LS component defined in \S\ref{ssub:dimension_square_mixing_for_psd_cones}, $\phi_{\mathrm{PSD}}(X)\defn-\log\det X$, and set $g\defn d\nabla^2\phi_{\mathrm{PSD}}+g_{\mathrm{LS}}$.
    Given $\veps\in(0,1/2)$ and an initial distribution $\nu_0$, set $L_g \defn \log\frac{dm\,(1\vee \chi^2(\nu_0\mmid\pi_\Theta))}{\veps^2}$. Then, for a suitable proposal radius, the lazy Dikin walk with local metric $g$ satisfies $\dtv(\nu_0\widetilde{\cT}^{\,N},\pi_\Theta)\leq\veps$ after $N$ steps, where
    \[
        N
        =
        O\bpar{
            d^4  L_g^3\log^2(em)
            \log\frac{
                1\vee\chi^2(\nu_0\mmid\pi_\Theta)
            }{\veps^2}
        }
        \quad\text{steps}\,.
    \]
\end{corollary}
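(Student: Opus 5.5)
The plan is to deduce Corollary~\ref{cor:dimension-square-SDP} directly from Theorem~\ref{theorem:generic-weighted-Dikin-mixing}, applied on the ambient space $\mathbb S^d\cong\R^n$ with $n=d(d+1)/2$. I would verify SSC, $\bar\nu$-symmetry with $\bar\nu\lesssim d^2\log^2(em)$, and bounded mixed $23$-trace for $g=g_1+g_2$, where $g_1\defn d\nabla^2\phi_{\mathrm{PSD}}$ and $g_2\defn g_{\mathrm{LS}}$. The theorem then gives $O(n\bar\nu L^3\log(\cdot))=O(d^4\log^2(em)L_g^3\log(\cdot))$, and $L$ matches $L_g$ up to constants because $\log(1+\sqrt{n\bar\nu})\lesssim\log(dm)$. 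Note that $g_2$ is built only from the $m$ linear constraints, so it may be degenerate when $m<n$. I would therefore treat it as a positive semidefinite form and state all of its bounds in its own seminorm $\norm{h}_{g_2(x)}$, using the computations behind Corollary~\ref{cor:generic-weighted-Dikin-mixing-polytope}(i) with $B_x=W_x^{1/2}A_x$ regarded as an $m\times n$ matrix.

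\emph{SSC.} For $g_1$, I would use $\Dd\nabla^2\phi_{\mathrm{PSD}}(X)[H]$ and write $M=X^{-1/2}HX^{-1/2}$. The normalized derivative is the symmetrized operator $Y\mapsto-(MY+YM)$, whose Frobenius norm on $\mathbb S^d$ is at most $2\sqrt d\norm{M}_\F$. Multiplying by $d$ and normalizing by $g_1=d\nabla^2\phi$ leaves at most $2\norm{H}_{g_1}$; this is exactly why the factor $d$ is placed in front of the PSD Hessian. For $g_2$, the LS analysis gives $\norm{g_2^{+1/2}\Dd g_2[h]g_2^{+1/2}}_\F\le 2\norm{h}_{g_2}$, where $g_2^{+1/2}$ is the pseudo-inverse square root. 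For the sum, since $g\succeq g_i$, conjugating by $g^{-1/2}g_i^{1/2}$ (a contraction) does not increase Frobenius norms. Hence
\[
\norm{g^{-1/2}\Dd g[h]g^{-1/2}}_\F\le 2\norm{h}_{g_1}+2\norm{h}_{g_2}\le 2\sqrt2\,\norm{h}_g\,.
\]
The constant $\sqrt2$ is removed by rescaling $g$ by a constant, which only changes $\bar\nu$ and $r$ by constants.

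\emph{Symmetry.} For the inner inclusion, $\msf D_g(x,1)\subseteq\msf D_{g_1}(x,1)\cap\msf D_{g_2}(x,1)$. The first ellipsoid lies in the PSD cone, since the Dikin ellipsoid of $-\log\det$ is contained in it and the factor $d$ only shrinks it. The second lies in the linear constraints by the leverage-score bound of \S\ref{sec:LS-verification}. For the outer inclusion, if $X\pm H\in\K$, then $\norm{X^{-1/2}HX^{-1/2}}_\op\le1$, so $\norm{H}_{g_1}^2\le d\cdot d$. Moreover, $\abs{a_{x,i}\tp h}\le1$ for all $i$ gives $\norm{H}_{g_2}^2\le\Lambda_{\mathrm{LS}}^2\sum_i w_i^{1-2/p}\lesssim n\log^2 m$. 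Summing yields $\bar\nu\lesssim d^2\log^2(em)$.

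\emph{Mixed trace.} This is the step I expect to be the main obstacle. The term $g_1$ is a Hessian metric, so its tensor is fully symmetric, but after normalizing by the full $g$ rather than by $g_1$ this symmetry is lost. The contraction $\Tr_{23}$ of the sum is not a sum of separately normalized contractions. I would write
\[
(\Tr_{23}\cJ_x)\cdot u=\Tr\bigl(g^{-1}\Dd g[g^{-1/2}(\cdot)]\bigr)\text{-type pairing}=\textstyle\sum_i\Tr\bigl(g^{-1}\,\Dd g_i[\,\cdot\,]\,g^{-1}\ldots\bigr)
\]
and bound each piece by duality. Testing against a unit vector $u$, I aim for $\abs{\sum_j\cJ[u,e_j,e_j]}\le\sqrt n\,\sup_w\norm{g^{-1/2}\Dd g[g^{-1/2}w]g^{-1/2}}_\op$-type estimates that reuse the $2C$ computation from Corollary~\ref{cor:generic-weighted-Dikin-mixing-polytope}. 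Concretely, for $g_2$ I would redo the identity $\Dd B_x[h]=\Diag(R_xz)B_x$ with $z=g^{1/2}h$ and $\Sigma_x$ replaced by the leverage scores of $B_x$ relative to $g$, which are dominated by those relative to $g_2$. For $g_1$, I would use the explicit form $\Dd\nabla^2\phi[H]$ and the bound $g\succeq g_1$ to obtain a $\sqrt n$ bound directly. If this direct duality argument loses more than a constant factor, my fallback is to bound $\norm{\Tr_{23}\cJ}\le\sqrt n\,\vecnorm{\cJ}{\sbrace{12}\sbrace{3}}$-style via Cauchy--Schwarz across the two blocks, using SSC of each part in its own norm.
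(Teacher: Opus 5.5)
Your overall route is the paper's: verify SSC, $\bar\nu$-symmetry, and bounded mixed $23$-trace for $g=g_1+g_2$ on $\mathbb S^d\cong\R^n$, then invoke Theorem~\ref{theorem:generic-weighted-Dikin-mixing}. The SSC and symmetry steps are fine.

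\begin{itemize}
\item Your triangle-inequality SSC bound loses a factor $\sqrt2$. Rescaling legitimately absorbs this into the proposal radius. The paper avoids the loss by stacking $T=[T_1;T_2]$ with $T_i=g_i^{1/2}g^{-1/2}$. Then $T\tp T=I$, so $\norm{T\tp MT}_{\F}\le\norm{M}_{\F}\le2(\norm{T_1z}^2+\norm{T_2z}^2)^{1/2}=2\norm{z}$ for the block-diagonal $M$ with blocks $M_1,M_2$.
\item Your outer-inclusion computation is exactly the additivity of symmetry parameters that the paper cites.
\item You should state that $\ker g_2=\ker\mathcal A$ does not depend on $X$. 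Hence $\Dd g_2[h]$ is supported on the range of $g_2$, which is what justifies normalizing $g_2$ by its pseudo-inverse.
\end{itemize}

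The gap is in the mixed-trace step. It is the only genuinely new verification in this corollary, you leave it as a sketch, and the sketch contains two errors.

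First, the $g_1$ contribution does not lose its symmetry. It equals $\cJ_1[T_1u,T_1v,T_1w]$, where $\cJ_1$ is the fully symmetric $g_1$-normalized tensor, because all three slots are normalized by the same $g^{-1/2}$. That symmetry is exactly what closes this piece:
\[
\inner{t_1,u}=\sum_i\cJ_1[T_1e_i,T_1e_i,T_1u]=\inner{\cJ_1[\cdot,\cdot,T_1u],T_1T_1\tp}\le2\norm{T_1u}\,\norm{T_1T_1\tp}_{\F}\le2\sqrt n\,.
\]
The pairing $\Tr(g^{-1}\Dd g[\cdot])$ you write is a $12$-trace; it agrees with the $23$-trace only for this piece.

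Second, your fallback cannot work for the $g_2$ piece. SSC alone gives only $\norm{\Tr_{23}\cJ}\le\sqrt n\,\norm{\cJ}_{\F}\le2n$, and this is sharp up to constants. For example, set $\cJ_{1jj}=\cJ_{j1j}=\sqrt2$ for all $j$ and all other entries zero. This tensor is SSC, yet $(\Tr_{23}\cJ)_1=\sqrt2\,n$. That is why bounded mixed $23$-trace is a separate hypothesis of the theorem.

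For $g_2$ you must therefore use the LS row-derivative structure, as in your primary plan, but one identity is missing. Set $\bar U=Bg^{-1/2}$ and $\bar Rz=\beta\,\Dd\log w[g^{-1/2}z]-\mathcal A_Xg^{-1/2}z$. The computation in \eqref{eq:tr23-bound} gives $\norm{t_2}\le2\sqrt n\,\norm{\bar\Sigma^{1/2}\bar R}_{\op}$. You correctly anticipate $\bar\Sigma\preceq\Sigma$, which follows from $\bar U=UT_2$ with $U=Bg_2^{\dagger/2}$. You also need $\bar R=R_2T_2$, which holds because $\Dd\log w$ and $\mathcal A_X$ vanish on $\ker\mathcal A$. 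Then \eqref{eq:LS-operator} and $\norm{T_2}_{\op}\le1$ give $\norm{t_2}\le2\sqrt n$. Hence $g$ has $4$-bounded mixed $23$-trace.
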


\paragraph{$\chi^2$-contraction on polytopes.}
Finally, we apply Theorem~\ref{theorem:generic-pointwise-mixing} to an LS metric on a polytope.
\S\ref{sec:spectral-LS-verification} verifies fourth-order bootstrap control for this metric. The resulting
$\Otilde(d^2)$ mixing-time bound improves the previous  $\Otilde(d^{9/4})$ bound~\cite{Kook26Dikin}.

\begin{corollary}[$\chi^2$-contraction on polytopes with a scaled LS metric]\label{cor:spectral-LS-mixing}
Let $\K=\{x\in\Rd:Ax\geq b\}$ be bounded and full-dimensional, where $A$ has full column rank and no zero rows. Set $p=2\,(1+\log m)$ and $\lambda=p/2$, let $W_x$ be the $\ell_p$-Lewis-weight matrix of $A_x$, and define $\widehat g(x)\defn S\lambda^2A_x\tp W_x^{1-2/p}A_x$ with $S=C_0(1+\log m)^{20}$ for a sufficiently large universal constant $C_0$.

There exist universal constants $r\in(0,1/2]$ and $c>0$ such that, for every exponential target $\pi_\theta$ and every initial distribution $\nu_0$, the lazy Dikin walk with metric $\widehat g$ and proposal radius $r$  
satisfies
\[
    \chi^2(\nu_0\widetilde{\cT}^{\,N}\mmid\pi_\theta)
    \leq
    \exp\bpar{-\frac{cN}{S\lambda^2d^2}}\,
    \chi^2(\nu_0\mmid\pi_\theta).
\]
In particular, for every $\veps\in(0,1/2)$,
\[
    \tmix^{\chi^2}(\veps;\nu_0,\widetilde{\cT})
    =
    O\bpar{
        d^2\log^{22} (em)
        \log\frac{
            1\vee\chi^2(\nu_0\mmid\pi_\theta)
        }{\veps}
    }\,.
\]
\end{corollary}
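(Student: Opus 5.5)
The plan is to verify the five hypotheses of Theorem~\ref{theorem:generic-pointwise-mixing} for $\widehat g=S\,g_{\mathrm{LS}}$, with $\bar\nu\asymp S\lambda^2 d$, and then read off the contraction factor $\exp(-cN/(d\bar\nu))=\exp(-cN/(S\lambda^2d^2))$. Since $S\lambda^2\lesssim\log^{22}(em)$, the mixing-time bound follows.

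Scaling a metric by $S\geq1$ changes the normalized tensors in a simple way. We have $\cJ^{Sg}_x=S^{-1/2}\cJ^{g}_x$ and $\cQ^{Sg}_x=S^{-1}\cQ^{g}_x$, because $g^{-1/2}$ appears three times in $\cJ$ (four times in $\cQ$) against a single factor of $g$. Hence SSC and the mixed $23$-trace bound only improve under scaling. Both hold for $g_{\mathrm{LS}}$ through Corollary~\ref{cor:generic-weighted-Dikin-mixing-polytope}, which was already used for Corollary~\ref{cor:LS-mixing}.

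For $\bar\nu$-symmetry, the inner inclusion $\msf D_{\widehat g}(x,1)\subseteq\msf D_{g_{\mathrm{LS}}}(x,1)\subseteq\K$ holds because $S\geq1$. For the outer inclusion, any $y$ with $y,2x-y\in\K$ satisfies $|a_{x,i}\tp(y-x)|\leq1$. Therefore
\[
\norm{y-x}_{\widehat g(x)}^2\leq S\lambda^2\Tr W_x^{1-2/p}\lesssim S\lambda^2 d,
\]
using $\sum_i w_i^{1-2/p}\leq m^{2/p}d^{1-2/p}\lesssim d$ by H\"older and the choice of $p$. This gives $\bar\nu\lesssim S\lambda^2d$.

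LTSC for $\widehat g$ reduces to showing $\Tr(g_{\mathrm{LS}}^{-1}\Dd^2 g_{\mathrm{LS}}[h,h])\geq -S\norm{h}_{g_{\mathrm{LS}}}^2$. To do this, write $g=A_x\tp V_xA_x$ with $V=\lambda^2W^{1-2/p}$ and differentiate twice. The term $2A_x\tp V (\Diag A_xh)^2A_x$ coming from $s_x^{-2}$ is nonnegative. The remaining terms involve $\Dd^2 w$ and products of $\Dd w$ with $A_xh$. Using the Lewis-weight derivative bounds from \cite{LS19solving} (as in \cite{KV24ipm,Kook26Dikin}), namely $\norm{W^{-1}\Dd w[h]}_{\infty}\lesssim p\,\norm{A_xh}_\infty$ and the analogous second-order bound, these terms are at most a polylog factor times $\norm{h}_{g}^2$. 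Since $S$ is a large power of $\log m$, the scaling absorbs this factor.

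The main obstacle is the fourth-order bootstrap control (B1)--(B3). I would take
\[
\omega(x,h)\defn 1+\frac{\norm{A_xh}_{4}^4}{\text{typical scale}}\,,
\]
or a weighted quantity such as $\sum_i w_{x,i}(a_{x,i}\tp h)^4$, normalized so that $\omega_0=O_\delta(1)$ by Gaussian moments. Specifically, $\E\sum_i \sigma_i (a_{x,i}\tp h)^4\lesssim \sum_i\sigma_i\cdot(\text{leverage})^2$, and because $\widehat g$ is inflated by $S$, each $(a_{x,i}\tp h)^2\lesssim \sigma_i/(S\lambda^2 w_i^{1-2/p})$. Feasibility in (B1) follows from $\norm{A_xh}_\infty\lesssim \sqrt{\log m/S}$ with high probability, so the segment of length $r/\sqrt d$ stays well inside $\K$.

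For (B3), I would expand $\cQ^{\widehat g}_{x_t}[\xi_t^{\otimes4}]=\Dd^2\widehat g(x_t)[h,h][h,h]$ into sums of the form $\sum_i v_i (a_{x_t,i}\tp h)^4$ plus Lewis-weight derivative terms. The aim is to dominate each piece by $(\log m)^{O(1)}\sum_i w_i (a_i\tp h)^4$ times $\max_i (a_{x_t,i}\tp h)^2$-type factors, and to show the result is $\lesssim d\,\omega_t$ after dividing by $S$.

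For (B2), I would differentiate $\omega_t$ in $t$. Each derivative of $s_{x_t}^{-1}$ or $w_{x_t}$ brings a factor of at most $(\log m)\norm{A_{x_t}h}_\infty$. So $|\dot\omega_t|\lesssim(\log m)\norm{A_{x_t}h}_\infty\,\omega_t$, and it remains to bound $\norm{A_{x_t}h}_\infty\lesssim\sqrt d$ uniformly along the path. That bound follows from $\norm{A_xh}_\infty\lesssim\sqrt{\log m}$ together with the self-concordance comparison $s_{x_t}\asymp s_x$ for $t\leq r/\sqrt d$. Gr\"onwall then gives (B2) with $\eta_\delta$ depending only on $\delta$ once $S$ absorbs the logarithms.

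The delicate point is choosing $\omega$ so that the same quantity both dominates the fourth-order tensor and has a self-controlled derivative. If the plain $\ell_4$ moment fails to close, I would add the $\ell_6$ moment or the quantity $\norm{W^{-1}\Dd w[h]}^2$ to $\omega$.
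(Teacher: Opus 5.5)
Your handling of conditions (i)--(iv) is sound. The identities $\cJ^{Sg}_x=S^{-1/2}\cJ^g_x$ and $\cQ^{Sg}_x=S^{-1}\cQ^g_x$ are correct, and so is the symmetry bound $\bar\nu\lesssim S\lambda^2d$. Your computational LTSC argument, which absorbs a polylogarithmic constant into $S$, is a workable but longer substitute for the paper's route via convexity of $\log\det g$ and \eqref{eq:logdet-JQ}; that route even gives $\Tr_{12}\cQ^g_x\succeq0$.

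The gap is in (v), which carries the whole content of the corollary. Specifically, it is (B3) for the term $q_t\tp N_t'v_t$ in the fourth-order identity of Fact~\ref{fact:LS-fourth-order-path}. Here $N_t$ is the operator with $\Dd\log w[h]=-W^{-1/2}NW^{1/2}A_xh$. The direct estimate $\norm{N_t'}_{\op}=\Otilde(\sqrt d)$ gives only $\Otilde(d^{3/2})$ for this term. A polylogarithmic $S$ cannot absorb a $\sqrt d$ loss; this is the obstruction that forced a polynomial inflation in earlier work.

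Your certificate cannot repair this. In the normalization of Lemma~\ref{lem:spectral-LS-background-event}, the path event gives $\norm{A_{x_t}h}_\infty\lesssim\varkappa$ and $\norm{h}_{g(x_t)}\lesssim\varkappa\sqrt d$, with $\varkappa=\varkappa_\delta$ polylogarithmic. Hence $\sum_iw_{x_t,i}(a_{x_t,i}\tp h)^4\leq\norm{A_{x_t}h}_\infty^2\norm{h}_{g(x_t)}^2\lesssim\varkappa^4d$ \emph{uniformly} along the path. With such an $\omega$, (B3) is just a uniform $\Otilde(d)$ bound on $F''$, which restates what has to be proved.

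Your fallbacks are scalar quantities of the same kind. For the operator that actually drives $N_t'$, the best a weighted $\ell_2$ norm of $\Dd\log w[h]$ gives is $\norm{\Diag(\Dd\log w[h])P}_{\op}\leq\norm{W^{1/2}\Dd\log w[h]}\leq p\norm{h}_g\sim p\sqrt d$, the same $\sqrt d$ loss.

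Relatedly, in (B2) you assume each derivative of $w$ costs $O(\log m)\norm{A_{x_t}h}_\infty$. Off the Gaussian base point, the available $\ell_\infty$ bound is only $p^2\norm{h}_{A\tp WA}\sim\sqrt d$. This is harmless for (B2), where a $\sqrt d$ factor is allowed. But this possible growth of $\Dd\log w_{x_t}[h]$ along the path is exactly what a moment-type certificate does not track.

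The missing idea is an operator-valued certificate. The paper takes $\omega(y,h)=1+\norm{D_y(h)P_y}_{\op}$. Here $D_y(h)=\Diag(\beta\Dd\log w_y[h]-A_yh)$ is the row-scaling derivative of $W_y^\beta A_y$, and $P_y$ is the projector onto its column space. Then:
\begin{itemize}
\item Differentiating the projector gives $P_t'=(I-P_t)D_tP_t+P_tD_t(I-P_t)$.
\item Writing $\bar\Lambda=I-\Psi\Psi\tp$ with $\Psi X=W^{-1/2}\diag(PXP)$ yields $\norm{N_t'}_{\op}\leq6p^2(\omega_t-1)$ (Lemma~\ref{lem:operator-row-map-estimate}), whence $\abs{q_t\tp N_t'v_t}\lesssim\mathrm{polylog}\cdot d\,\omega_t$.
\item The certificate is small at the start. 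At $t=0$, $\omega_0\leq1+\norm{\beta\Dd\log w_x[h]-A_xh}_\infty$ is polylogarithmic, because each coordinate is a centered Gaussian linear form in $\xi$ with coefficient norm $O(p^2)$. This holds even though along the path only $\norm{\Dd\log w_{x_t}[h]}_\infty\lesssim\varkappa^2\sqrt d$ is available.
\item The certificate controls itself. Differentiating $D_tP_t$, and applying the same $N_t'$ bound to the term $W_t^{-1/2}N_t'v_t$ in $D_t'$, gives $\norm{(D_tP_t)'}_{\op}\lesssim\mathrm{polylog}\cdot\sqrt d\,\omega_t$, which is (B2) after integration.
\end{itemize}
Only once these polylog-parameter versions of (B1)--(B3) are in hand does the inflation by $S$, together with rescaling $\omega$ by $\varkappa_\star$, turn the parameters into constants depending only on $\delta$.
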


\paragraph{AI disclosure.}
The authors used ChatGPT Pro 5.5 and 5.6 for the following purposes:
\begin{itemize}
    \item to obtain a preliminary version of Corollary~2 for Lee--Sidford metric, including a proof of Lemma~4 which is the key to simplify the acceptance-rate control in Theorem~1; and
    \item to develop Corollary~4 based on a ChatGPT 5.6 Pro session shared by Victor Reis (July 16).
\end{itemize}
Both authors then generalized the preliminary Lee--Sidford Dikin-walk proofs into the two generic frameworks of Theorems~1 and~2 and clarified which components of the AI-generated proofs were novel.
ChatGPT Pro was also used to identify errors in the proofs and exposition.

On August~21, Matthew Zhang and Sinho Chewi informed us that Yiping Lu had posted a note on his academic webpage claiming a $d^2$ result for the Lee--Sidford walk by resolving conjectures posed in~\cite{Kook26Dikin}. The result and its proof may overlap with our Corollary~4.
On August~26, we became aware of a paper by Song and Zhang on arXiv that claims a $d^2$ result for the Lee--Sidford walk and may also overlap with  Corollary~4.

\section{Proof of Theorem~\ref{theorem:generic-weighted-Dikin-mixing} via
\texorpdfstring{$s$}{s}-conductance}\label{sec:generic-overlap}

Unlike earlier work on the Dikin walk~\cite{KN12random,CDW18MCMC,LLV20strong,KV24ipm,JC25regularized}, our proof leverages $s$-conductance, which requires (1) one-step coupling and (2) an isoperimetric inequality. In contrast to conductance, $s$-conductance allows us to ignore a (potentially) bad region where the algorithm suffers from slow mixing. However, previous $s$-conductance arguments in the analysis of constrained sampling~\cite{Lovasz99hit,LV07geometry,CE25hitandrun} have resulted in an undesirable \emph{polynomial} dependence on the warmness parameter and inverse accuracy (i.e., $\poly(M/\veps)$).

Previous work has attempted to establish one-step coupling for arbitrary interior points by analyzing the proposal distributions (i.e., $H\sim \mc N(0,g(x)^{-1})$). In the present work, we enlarge the probability space from the proposal variable $H$ to the pair $(X,H)$ and work with a \emph{joint law} $\mu$ defined on this augmented space such that its $X$-marginal and the conditional distribution $H\mid X=x$ correspond to the stationary measure $\pi_\theta$ and the Dikin proposal $\mc P_x$, respectively.

We then define a region $\Xi$ consisting of points from which the acceptance probability of a proposal exceeds a certain threshold. Under a suitable choice of this threshold, we can establish one-step coupling within $\Xi$ under regularity conditions on the local metric. A key technical challenge is to show that $\Xi$ has high probability under $\pi_\theta$ (Lemma~\ref{lem:acceptance_rate_control}). This is where the lifted joint measure $\mu$ comes in. We bound $\pi_\theta(\Xi^c)$ in terms of a ``bad'' event under the joint measure $\mu$, whose probability can in turn be controlled via a moment estimate for the negative log proposal-density ratio under SSC and bounded mixed $23$-trace (Lemmas~\ref{lem:acceptance-to-initial} and~\ref{lem:generic-Rk-bound}). This technique previously appeared in analyses of MALA and HMC by Wu, Schmidler, and Chen~\cite{WSC22minimax} and Chen, Gatmiry, and Jiang~\cite{CG23when}.

The second ingredient, an isoperimetric inequality derived from $\bar \nu$-symmetry, is standard. It follows from self-concordance, the local-norm comparison implied by $\bar\nu$-symmetry, and cross-ratio isoperimetry. Combining the one-step coupling and the isoperimetric inequality, we can bound the $s$-conductance as in Lemma~\ref{lem:generic-warm-start-mixing}, following a standard conductance argument. Finally, using the standard conductance-to-mixing result of Lov\'asz--Simonovits (Proposition~\ref{prop:mixing-from-conductance}), we prove Theorem~\ref{theorem:generic-weighted-Dikin-mixing}.

\subsection{Acceptance control and one-step coupling}\label{sec:closeness_of_transition_kernels}

Recall that $\proposal_x$ and $\cT_x$ are the proposal and one-step transition distributions from $x$. We first use the following standard SSC proposal-overlap estimate \cite{LLV20strong,KV24ipm,JC25regularized}. The proof is standard and recalled in \S\ref{subsec:generic-proposal-overlap}.
\begin{lemma}[Proposal overlap]\label{lem:proposal_overlap}
    Suppose that $g$ is SSC and $0<r\leq 1/2$.  For any
    $x,y\in\interior\K$ such that $\norm{x-y}_{g(x)} \leq \frac{\step}{10} = \frac{r}{10\sqrt d}$, we have $\dtv(\proposal_x, \proposal_y) \leq 1/4$.
\end{lemma}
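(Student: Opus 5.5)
We bound $\dtv(\proposal_x,\proposal_y)$ by Pinsker's inequality, $\dtv\le\sqrt{\tfrac12\,\mathrm{KL}(\proposal_x\mmid\proposal_y)}$, using the closed form of the KL divergence between the two Gaussians $\Normal(x,\step^2 g(x)^{-1})$ and $\Normal(y,\step^2 g(y)^{-1})$. Set $M\defn g(x)^{-1/2}g(y)g(x)^{-1/2}$ with eigenvalues $\lambda_1,\dots,\lambda_d$. Then
\[
    2\,\mathrm{KL}(\proposal_x\mmid\proposal_y)
    =
    \Tr M - d - \log\det M + \frac{1}{\step^2}\norm{x-y}_{g(y)}^2
    =
    \sum_{i=1}^d(\lambda_i-1-\log\lambda_i)+\frac{1}{\step^2}\norm{x-y}_{g(y)}^2\,.
\]
The plan is to show that $\norm{M-I}_{\F}$ is a small absolute constant and that $\lambda_i$ is close to $1$. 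The mean term and the covariance term are then both small, and we get $\mathrm{KL}\le 1/8$, hence $\dtv\le 1/4$.

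\textbf{Step 1 (covariance term via SSC).} Let $z\defn x-y$ and $x_t\defn y+tz$ for $t\in[0,1]$ (or parametrize from $x$). Put $\delta\defn\norm{x-y}_{g(x)}\le r/(10\sqrt d)\le 1/20$. Self-concordance (implied by SSC) gives the standard comparison $(1-\delta)^2g(x)\preceq g(x_t)\preceq(1-\delta)^{-2}g(x)$. In particular $\lambda_i\in[(1-\delta)^2,(1-\delta)^{-2}]$, and all local norms along the segment agree with $\norm{\cdot}_{g(x)}$ up to factors $1\pm O(\delta)$; the segment also stays in $\interior\K$, because the unit Dikin ellipsoid is contained in $\K$ for self-concordant metrics.

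Next, write $\Phi(t)\defn g(x)^{-1/2}g(x_t)g(x)^{-1/2}$, so that $\Phi'(t)=g(x)^{-1/2}\Dd g(x_t)[z]\,g(x)^{-1/2}$. Since $g(x_t)\asymp g(x)$, we have
\[
    \norm{\Phi'(t)}_{\F}\le \norm{g(x)^{-1}g(x_t)}_{\op}\,\norm{g(x_t)^{-1/2}\Dd g(x_t)[z]\,g(x_t)^{-1/2}}_{\F}\,,
\]
and SSC bounds the last factor by $2\norm{z}_{g(x_t)}$. Hence $\norm{\Phi'(t)}_{\F}\le 2(1-\delta)^{-3}\delta$. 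Integrating over $t\in[0,1]$ gives $\norm{M-I}_{\F}\le 3\delta\le 3r/(10\sqrt d)$.

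Because $\lambda_i$ is close to $1$, the elementary bound $\lambda-1-\log\lambda\le(\lambda-1)^2$ holds (valid for, say, $\lambda\ge 1/2$). This gives
\[
    \sum_i(\lambda_i-1-\log\lambda_i)\le\norm{M-I}_{\F}^2\le \frac{9r^2}{100\,d}\,,
\]
which is negligible.

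\textbf{Step 2 (mean term).} From the comparison in Step 1, $\norm{x-y}_{g(y)}^2\le(1-\delta)^{-2}\delta^2$. Since $\step^2=r^2/d$ and $\delta^2\le r^2/(100d)$, the mean term is at most $(1-\delta)^{-2}/100\approx 0.0103$.

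\textbf{Conclusion and main obstacle.} Combining the two steps gives $2\,\mathrm{KL}\le 0.0103+0.01$, so $\mathrm{KL}\le 0.011$ and $\dtv\le\sqrt{0.0055}<1/4$.

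The only delicate point is Step 1. We must use the Frobenius (strong) form of self-concordance, rather than the operator form, so that the $\log\det$ and trace discrepancy does not pick up an extra factor of $d$. With only the operator norm we would get $\norm{M-I}_{\F}\lesssim\sqrt d\,\delta$, which is still $O(r)$ here and would be fine as well. The real care is in transporting the SSC bound from $x_t$ back to $x$, which uses the self-concordant comparison $g(x_t)\asymp g(x)$. The constants have ample slack.
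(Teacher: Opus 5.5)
Your proof is correct and follows essentially the paper's route: Pinsker, the closed-form Gaussian KL, an SSC bound $\norm{M-I}_{\F}=O(\delta)$ (which you rederive by integrating along the segment rather than citing LLV20), and a quadratic bound on the eigenvalue terms; the only real difference is that you use $\mathrm{KL}(\proposal_x\mmid\proposal_y)$, which needs one extra self-concordant local-norm comparison for the mean term, whereas the paper's $\mathrm{KL}(\proposal_y\mmid\proposal_x)$ gives $\delta^2/\step^2$ directly. Two harmless slips: feasibility of the segment follows from convexity since $x,y\in\interior\K$ (a general self-concordant metric need not satisfy $\msf D_g(x,1)\subseteq\K$, e.g.\ a constant metric), and your own covariance bound is $9\delta^2\le 9/400$ rather than $0.01$, which still gives $\mathrm{KL}<0.017$ and $\dtv<0.1$.
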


Next, we control the probability loss from rejections when passing from proposal overlap to transition overlap. Let $(X,H)\sim\mu$ have the joint distribution defined by
\begin{equation}\label{eq:joint-law}
X \sim \pi_\theta\,,\qquad H \mid X = x \sim \Normal\bpar{0, g(x)^{-1}}\,.
\end{equation}
Under $\mu$, define the feasible-proposal and half-space events by
\[
    \cE
    \defn
    \{(x,h):x+\step h\in\interior\K \}\,,\qquad
    \mathcal D_\theta
    \defn
    \{(x,h):\inner{\theta,h}\leq0\}\,.
\]
For $a>0$, define the bad-proposal event by
\begin{equation}\label{eq:bad-event}
    \mathcal B_{\theta,a}
    \defn
        \cE^c
        \cup
        \bbrace{
            (x,h)\in \mathcal D_\theta \cap \cE :
            \frac{p_{x+\step h}(x)}{p_x(x+\step h)}<e^{-a}
        }\,.
\end{equation}
Thus, a bad proposal is either infeasible or feasible and downhill with log-proposal ratio below $-a$.

The next lemma defines a (high-probability) set $\Xi$ on which bad proposals have conditional probability at most $\tau$. It bounds $\pi_\theta(\Xi^c)$ and combines this control with Lemma~\ref{lem:proposal_overlap} to obtain transition overlap between nearby points in $\Xi$. We defer its proof to \S\ref{subsec:generic-acceptance-control}.

\begin{lemma}[Acceptance control and transition overlap]\label{lem:acceptance_rate_control}
    Fix a proposal radius $r>0$ and set $\step=r/\sqrt d$. Let $g$ be SSC and have $\gamma$-bounded mixed $23$-trace on $\interior\K$ for some $\gamma\geq0$.
    Fix $a>0$ and $\tau\in(0,1/8)$. For the bad event $\mathcal B_{\theta,a}$ in \eqref{eq:bad-event}, define
    \[
        \Xi
        \defn
        \bbrace{
            x\in\interior\K:
            \Prob_{H\sim\Normal(0,g(x)^{-1})}
            \bpar{(x,H)\in\mathcal B_{\theta,a}}
            \leq\tau
        }\,.
    \]
    Then, for every $k\geq2$,
    \begin{equation}\label{eq:acceptance-region-complement-mass}
        \pi_\theta(\Xi^c)
        \leq
        \frac{e^a}{\tau}\,
        \bpar{
            \frac{2r(2+\gamma)\,k^{3/2}}{a}
        }^k
        +
        \frac{\mu(\cE^c)}{\tau}\,.
    \end{equation}
    If $r\leq1/2$, then all $x,y\in\Xi$ with $\norm{x-y}_{g(x)}\leq\step/10$ satisfy
    \begin{equation}\label{eq:exponential-transition-overlap}
        \dtv(\cT_x,\cT_y)
        \leq
        1-e^{-a}\,\bpar{\frac14-2\tau}\,.
    \end{equation}
\end{lemma}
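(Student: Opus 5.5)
The proof has two parts: the mass bound \eqref{eq:acceptance-region-complement-mass} and the overlap bound \eqref{eq:exponential-transition-overlap}. The mass bound is the substantive one.

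\textbf{Mass bound via Markov.} By definition of $\Xi$, every $x\notin\Xi$ has conditional bad-proposal probability exceeding $\tau$. Markov's inequality on $x\mapsto\Prob(\mathcal B_{\theta,a}\mid X=x)$ under $\pi_\theta$ then gives
\[
\pi_\theta(\Xi^c)\le \mu(\mathcal B_{\theta,a})/\tau\le \bigl(\mu(\cE^c)+\mu(\mathcal B')\bigr)/\tau ,
\]
where $\mathcal B'$ is the downhill, feasible, low-ratio part. The $\mu(\cE^c)/\tau$ term already appears in the statement, so it remains to show $\mu(\mathcal B')\le e^a\bigl(2r(2+\gamma)k^{3/2}/a\bigr)^k$.

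\textbf{Transferring to the initial law.} Set $y=x+\step h$. Reversibility-type bookkeeping suggests a change of variables $(x,h)\mapsto(y,-h')$ with $h'=(x-y)/\step$. On $\cE$, the density of $\mu$ at $(x,h)$ is $\pi_\theta(x)p_x(y)\step^d$. Pulling $\mathcal B'$ back, one compares it with the event that, starting from $y\sim\pi_\theta$ and proposing $x$, the log ratio $\ell(y,x):=\log\frac{p_x(y)}{p_y(x)}$ exceeds $a$ and $\inner{\theta,x-y}\ge0$. On $\mathcal B'$ we have $p_x(y)<e^{-a}p_y(x)$ and $\pi_\theta(x)\le e^{\inner{\theta,\,y-x}}\pi_\theta(y)$, and the downhill condition makes the exponential factor at most $1$. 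So the mass of $\mathcal B'$ is bounded by the $\mu$-mass of a reversed event in which the log ratio is large, up to a factor $e^{a}$ or $1$, depending on which direction one does Markov.

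The cleaner route is a Markov moment bound on $\ell$ directly. I would write $\mu(\mathcal B')\le \mu\bigl(\cE\cap\{-\log\frac{p_y(x)}{p_x(y)}>a\}\bigr)\le a^{-k}\,\E_\mu\bigl[|\log\tfrac{p_y(x)}{p_x(y)}|^k\mathbf 1_\cE\bigr]$. The factor $e^a$ in the statement suggests the authors move the moment to the reversed chain, where one loses $e^a$ via the density ratio. I would follow that route if the forward moment bound is not directly available.

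\textbf{Moment bound (expected main obstacle).} With $\xi=g(x)^{1/2}h\sim\Normal(0,I)$, the log ratio is
\[
\tfrac12\log\det g(y)-\tfrac12\log\det g(x)-\tfrac{1}{2}\,h\tp(g(y)-g(x))h\,\eta^0 .
\]
Note that $\step^2$ cancels, since $y-x=\step h$. I would write both pieces as integrals over $t\in[0,\step]$ of derivatives along $x_t=x+th$:
\begin{itemize}
\item the determinant piece is $\frac12\int\Tr(g(x_t)^{-1}\Dd g(x_t)[h])\,dt$, which is $\frac12\int\inner{\Tr_{12}\cJ_{x_t},\xi_t}\,dt$;
\item the quadratic piece is $\frac12\int\Dd g(x_t)[h][h,h]\,dt$, which is $\frac12\int\cJ_{x_t}[\xi_t^{\otimes3}]\,dt$.
\end{itemize}
Here I use SSC to compare $\xi_t$ with $\xi$ (so $g(x_t)\asymp g(x)$ when $t\norm{h}_{g(x)}\lesssim r$).

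The cubic chaos $\cJ[\xi^{\otimes3}]$ has a Wiener chaos decomposition into a third-order part and a first-order part. The first-order part equals $3$ times the symmetrized trace vector, which involves $\Tr_{12}$ and $2\Tr_{23}$. Subtracting the linear $\Tr_{12}$ term coming from the determinant leaves the $\Tr_{23}$ contribution. By condition~(iii) and SSC this is a Gaussian with variance $\lesssim\gamma^2 d$. The third-order chaos has $L^2$ norm $\lesssim\norm{\cJ}_{\{12\}\{3\}}\sqrt d\le 2\sqrt d$. Gaussian hypercontractivity then gives $k$-th moments $\lesssim\bigl((2+\gamma)k^{3/2}\sqrt d\bigr)^k$, and multiplying by $\step=r/\sqrt d$ yields the desired $\bigl(2r(2+\gamma)k^{3/2}\bigr)^k$.

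The delicate point is that the tensor is evaluated at $x_t$ rather than $x$. Handling this needs either a pathwise argument, or an averaging trick under $\mu$ in which $x_t$ is nearly distributed as the stationary law (this is where the lifted measure helps). I would try to show that the joint law of $(x_t,\xi_t)$ under $\mu$ is close to $\mu$ itself, up to the $e^a$-type factor, so that the base-point moment bound can be applied at every $t$ and combined by Minkowski's inequality.

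\textbf{Overlap.} For $x,y\in\Xi$, Lemma~\ref{lem:proposal_overlap} gives $\int\min(p_x,p_y)\ge 3/4$. Remove the proposals that are bad from $x$ or from $y$, which costs at most $2\tau$ of mass in total. The remaining proposals $z$ are feasible, and for each of them either $\inner{\theta,z-x}\le0$ with ratio at least $e^{-a}$, or they are uphill. Uphill proposals still need control. For this I would redefine the overlap via the symmetric bound $\min(1,\cdot)\ge e^{-a}$, aiming for $\cT_x(dz)\wedge\cT_y(dz)\ge e^{-a}\min(p_x,p_y)(z)$ on the good set, which yields \eqref{eq:exponential-transition-overlap}.
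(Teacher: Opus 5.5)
\textbf{Mass bound.} Your Markov reduction $\pi_\theta(\Xi^c)\le\mu(\mathcal B_{\theta,a})/\tau$ matches the paper. So does your base-point moment bound: Hermite part via SSC, linear part via $\Tr_{23}\cJ$, then hypercontractivity. But the step you call ``delicate'' is the heart of the lemma, and neither route you suggest closes it.

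A pathwise route cannot work under these hypotheses. Since $x_t=x+th$ depends on $h$, $\cJ_{x_t}[\xi_t^{\otimes3}]$ is not a Gaussian chaos. With only first-derivative information, the best available bound is $2\norm{\xi_t}^3\approx2d^{3/2}$, which gives an increment of order $rd$ rather than $O(r)$. Pathwise control of this kind is exactly what the extra second-derivative hypotheses of Theorem~\ref{theorem:generic-pointwise-mixing} buy.

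Your forward moment $\E_\mu\bigl[\abs{\Theta_\step-\Theta_0}^k\Ind_\cE\bigr]$ is not directly available either. Using $\dot\Theta_t(x,h)=\dot\Theta_0(x+th,h)$, the law of $(X+tH,H)$ has density relative to $\mu$ equal to $\exp\bigl(t\inner{\theta,h}+\Theta_t(x,h)-\Theta_0(x,h)\bigr)$ at the preimage $(x,h)$. This is unbounded, and bounding it by $e^a$ presupposes the very increment bound you are proving. Reversal $(x,h)\mapsto(x+\step h,-h)$ does not help: its density ratio is the Metropolis ratio itself, which yields only upper-tail control, not the lower tail needed here.

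The missing idea has two parts.
\begin{enumerate}
\item Keep the downhill restriction; this is why $\mathcal D_\theta$ sits inside $\mathcal B_{\theta,a}$, and your ``cleaner route'' drops it. Then $t\inner{\theta,h}\le0$, so the tilted measure $\D\widetilde\mu_t=e^{-(\Theta_t-\Theta_0)}\D\mu$ on $\mathcal D_\theta\cap\cE_t$, pushed forward by $(x,h)\mapsto(x+th,h)$, is dominated by $\mu$. Hence $\int\abs{\dot\Theta_t}^k\D\widetilde\mu_t\le R_k^k$.
\item Break the circularity by integrating $\abs{\dot\Theta_t}$ only over times with $\Theta_t-\Theta_0\le a$, i.e., set $J_a=\int_0^\step\abs{\dot\Theta_t}\Ind\{\Theta_t-\Theta_0\le a\}\D t$. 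The first hitting time of level $a$ shows $J_a\ge a$ on the bad event. At the retained times $\D\mu\le e^a\D\widetilde\mu_t$, so Jensen gives $\E_\mu[J_a^k]\le e^a\step^kR_k^k$, and Markov finishes.
\end{enumerate}
That is where $e^a$ comes from, not from a reversal.

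\textbf{Overlap.} This part also has a gap. $\mathcal B_{\theta,a}$ constrains only downhill proposals. After you discard $2\tau$ of mass, you are left with uphill $z$ whose acceptance $\min\{1,e^{-\inner{\theta,z-x}}p_z(x)/p_x(z)\}$ is uncontrolled. The target factor can be arbitrarily small when $\norm{\theta}$ is large, and the proposal ratio is unconstrained there. So $\cT_x\wedge\cT_y\ge e^{-a}\min(p_x,p_y)$ on your ``good set'' is false in general, and no symmetrization rescues it.

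The paper instead relabels so that $\inner{\theta,x}\le\inner{\theta,y}$ and restricts to $D_x^-=\{z:\inner{\theta,z}\le\inner{\theta,x}\}$. This half-space is downhill from both $x$ and $y$, and $\proposal_x(D_x^-)=1/2$ by Gaussian symmetry. Then $\int_{D_x^-}\min(p_x,p_y)\ge\frac12-\frac14=\frac14$, and removing the bad endpoints of both points gives $e^{-a}(\frac14-2\tau)$. This is also why the constant is $\frac14$ rather than the $\frac34$ your accounting suggests.
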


For a concrete choice of parameters, let $k\geq2$ and set $a_0=\log\frac76$, $\tau_0=\frac{1}{16}$, and $r = \frac{a_0}{4(2+\gamma)\,k^{3/2}}$ (which ensures $r<1/2$). Substituting these choices into Lemma~\ref{lem:acceptance_rate_control} gives
\begin{equation}
\begin{aligned}
    \pi_\theta(\Xi^c)
    &\leq
    \frac{19}{2^{k}}
    +
    16\mu(\cE^c)\,,\\
    1-\dtv(\cT_x,\cT_y)
    &\geq
    \frac{3}{28}
    \quad\text{whenever }x,y\in\Xi
    \text{ and }
    \norm{x-y}_{g(x)}\leq\frac\step{10}\,.
\end{aligned}
\label{eq:acceptance-region-complement-mass-a0}
\end{equation}

\subsection{From one-step coupling to mixing}\label{subsec:transition-overlap-to-mixing}

We use two standard geometric facts from Dikin-walk conductance arguments \cite{LLV20strong,KV24ipm,JC25regularized}. For distinct $x,y\in\interior\K$, let $p$ and $q$ be the endpoints of the chord through $x$ and $y$, ordered as $p,x,y,q$. The \emph{cross-ratio distance} is
\[
    d_{\K}(x,y)
    \defn
    \frac{
        \norm{x-y}\,\norm{p-q}
    }{
        \norm{p-x}\,\norm{q-y}
    }\,.
\]
For nonempty $A,B\subseteq\K$, define $d_{\K}(A,B)\defn\inf_{x\in A,\,y\in B}d_{\K}(x,y)$.

\begin{fact}[Log-concave cross-ratio isoperimetry, {\cite[Theorem~2.5]{LV07geometry}}]\label{fact:cross-ratio-isoperimetry}
For any log-concave distribution $\pi$ on $\K$ and every measurable partition $\K=S_1\cupdot S_2\cupdot S_3$ with $S_1,S_2\neq\varnothing$,
\[
    \pi(S_3)
    \geq
    d_{\K}(S_1,S_2)\,\pi(S_1)\,\pi(S_2)\,.
\]
\end{fact}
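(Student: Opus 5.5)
The plan is to prove this by the localization lemma of Lov\'asz--Simonovits and Kannan--Lov\'asz--Simonovits. This reduces the statement to one-dimensional log-concave measures on chords of $\K$, where it becomes an elementary inequality.

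\textbf{Step 1: reduction to closed sets.} By inner regularity of $\pi$ and a limiting argument, it suffices to treat $S_1,S_2$ closed and $S_3=\K\setminus(S_1\cup S_2)$ open. We may also assume $d\defn d_\K(S_1,S_2)>0$. Suppose for contradiction that $\pi(S_3)<d\,\pi(S_1)\pi(S_2)$.

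\textbf{Step 2: localization.} The inequality is bilinear in $(\pi(S_1),\pi(S_2))$, so I would use the four-function form of localization rather than the two-function form. In that form, an inequality $\int f_1\int f_2\le\int f_3\int f_4$ for nonnegative upper/lower semicontinuous functions holds for every log-concave measure on $\K$ as soon as it holds for every ``needle'': a segment $[a,b]\subseteq\K$ carrying a one-dimensional log-concave density, which we may even take to be $e^{\ell}$ with $\ell$ affine. Apply this with $f_1=d\,\Ind_{S_1}$, $f_2=\Ind_{S_2}$, $f_3=\Ind_{S_3}$ and $f_4=\Ind_\K$, after homogenizing so that the normalization of $\pi$ plays no role. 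The supposed violation then transfers to a violation $\nu(S_3)\,\nu(\K)<d\,\nu(S_1)\nu(S_2)$ for some needle measure $\nu$ on a segment $[a,b]$.

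\textbf{Step 3: the one-dimensional problem.} The cross-ratio is monotone under shrinking the ambient set: for $x,y$ on $[a,b]$, the chord of $\K$ through $x,y$ contains $[a,b]$, so $d_{[a,b]}(x,y)\ge d_\K(x,y)\ge d$ whenever $x\in S_1$ and $y\in S_2$. Hence it suffices to prove, for any log-concave $\nu$ on an interval $[p,q]$ and any partition into $S_1,S_2,S_3$,
\[
\nu(S_3)\,\nu([p,q])\ \ge\ \Bigl(\inf_{x\in S_1,\,y\in S_2}d_{[p,q]}(x,y)\Bigr)\,\nu(S_1)\,\nu(S_2).
\]
A standard rearrangement reduces to the extremal configuration $S_1=[p,x]$, $S_3=(x,y)$, $S_2=[y,q]$: each gap between consecutive $S_1$- and $S_2$-components is an $S_3$-interval, and the cross-ratio bound can be applied to it separately. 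For this configuration the claim reads
\[
\nu([x,y])\,\nu([p,q])\ \ge\ \frac{(y-x)(q-p)}{(x-p)(q-y)}\,\nu([p,x])\,\nu([y,q]).
\]
I would prove it by comparing $\nu$ with an exponential density (log-affine) matching $\nu$ at the interval $[x,y]$. Log-concavity makes $\nu$ dominated by that exponential outside $[x,y]$. For exponentials (and in the limit uniform densities), the inequality is a direct computation: for the uniform density it is an identity-level check $(y-x)(q-p)\ge(y-x)(q-p)$.

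\textbf{Main obstacle.} I expect the main difficulty to be setting up localization correctly for this product-type inequality, in Step 2. The two-function version does not directly handle $\nu(S_1)\nu(S_2)$ with needle-dependent normalization. So I would invoke the Kannan--Lov\'asz--Simonovits four-function lemma and check its homogeneity hypothesis, $f_1^{1/2}f_2^{1/2}\le f_3^{1/2}f_4^{1/2}$ in the appropriate averaged sense. The semicontinuity requirements are the reason for the closed/open reduction in Step 1. The one-dimensional exponential comparison in Step 3 is routine but must be done carefully near the endpoints $p,q$.
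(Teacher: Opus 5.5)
The paper gives no proof of this Fact; it cites Lov\'asz--Vempala, whose standard argument is the one you outline: approximate $S_1,S_2$, localize to log-concave needles using $d_{[p,q]}\ge d_\K$ on sub-chords, reduce to a single separating gap, and prove a one-dimensional inequality, so your plan is sound and essentially that proof. Two points to tighten. First, the ``homogeneity hypothesis'' you worry about does not exist: the two-function localization lemma already handles the product, by choosing $\lambda$ with $\pi(\K)/\pi(S_2)<\lambda<d\,\pi(S_1)/\pi(S_3)$ and multiplying the two resulting needle inequalities $d\,\nu(S_1)>\lambda\,\nu(S_3)$ and $\lambda\,\nu(S_2)>\nu(\K)$. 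Second, in Step 3 the secant-exponential comparison transfers only because, with $A=\nu([p,x])$, $B=\nu([x,y])$, $C=\nu([y,q])$, the target is equivalent to $(1+B/A)(1+B/C)\ge 1+d$, which is increasing in $B$ and decreasing in $A$ and $C$. You should state this explicitly, since otherwise the factor $\nu([p,q])$ on the left appears to move the wrong way.
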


\begin{fact}[Cross-ratio/local-norm comparison, {\cite[Lemma~2.3]{LLV20strong}}]\label{fact:llv-nu-symmetry}
Suppose that a local metric $g$ on $\interior\K$ is $\bar\nu$-symmetric.
Then, for all distinct $x,y\in\interior\K$,
\[
    d_\K(x,y) \geq \frac{1}{\sqrt{\bar\nu}}\,\min\{\norm{x - y}_{g(x)}, \norm{x - y}_{g(y)}\}\,.
\]
\end{fact}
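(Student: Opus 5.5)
The plan is to compare both sides along the single chord through $x$ and $y$. Only the outer inclusion $\K\cap(2x-\K)\subseteq\msf D_g(x,\sqrt{\bar\nu})$ from Definition~\ref{def:nu-symmetry} is needed, applied at $x$ and at $y$. With $p,x,y,q$ ordered along the chord as in the definition of $d_\K$, write $u\defn (y-x)/\norm{y-x}$ for the Euclidean unit direction.

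\emph{Step 1: a one-dimensional consequence of symmetry.} Set $s_x\defn\min\{\norm{x-p},\norm{x-q}\}$. Since $p=x-\norm{x-p}\,u$ and $q=x+\norm{x-q}\,u$ lie in $\K$ (it is closed), convexity gives $x\pm s_xu\in\K$. Hence $x+s_xu\in\K\cap(2x-\K)$, and the outer inclusion yields $s_x\norm{u}_{g(x)}\le\sqrt{\bar\nu}$. Multiplying by $\norm{x-y}$ gives
\[
    \norm{x-y}_{g(x)}\le \sqrt{\bar\nu}\,\frac{\norm{x-y}}{\min\{\norm{x-p},\norm{x-q}\}}\,.
\]
The same argument at $y$ gives the analogous bound for $\norm{x-y}_{g(y)}$ with $\min\{\norm{y-p},\norm{y-q}\}$ in the denominator.

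\emph{Step 2: case split on which endpoint is closer.}
\begin{itemize}
\item[(a)] Suppose $\norm{x-p}\le\norm{x-q}$. The minimum at $x$ is $\norm{x-p}$, and $\norm{p-q}\ge\norm{q-y}$ because $y$ lies between $p$ and $q$. Therefore
\[
    d_\K(x,y)\ge\frac{\norm{x-y}}{\norm{p-x}}\ge\frac{\norm{x-y}_{g(x)}}{\sqrt{\bar\nu}}\,.
\]
\item[(b)] Suppose $\norm{x-q}<\norm{x-p}$. Using the ordering along the line,
\[
    \norm{q-y}<\norm{q-x}<\norm{p-x}<\norm{p-y}\,,
\]
so the minimum at $y$ is $\norm{y-q}$. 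Since $\norm{p-q}\ge\norm{p-x}$, we get
\[
    d_\K(x,y)\ge\frac{\norm{x-y}}{\norm{q-y}}\ge\frac{\norm{x-y}_{g(y)}}{\sqrt{\bar\nu}}\,.
\]
\end{itemize}
In either case $d_\K(x,y)$ is at least $\bar\nu^{-1/2}$ times one of the two local norms, hence at least $\bar\nu^{-1/2}\min\{\norm{x-y}_{g(x)},\norm{x-y}_{g(y)}\}$.

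The only point needing care is the case split. A single base point does not suffice, because the symmetric chord at $x$ is governed by the nearer endpoint. Case (b) is handled by switching the base point to $y$, which is why the statement has a minimum over $g(x)$ and $g(y)$. Everything else is elementary one-dimensional geometry.
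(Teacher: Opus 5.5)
Your proof is correct. It is essentially the standard chord argument behind Lemma~2.3 of Laddha, Lee, and Vempala, which the paper cites without proof: apply the outer inclusion of $\bar\nu$-symmetry to the symmetric sub-chord of half-length $\min\{\norm{x-p},\norm{x-q}\}$, then compare with the cross ratio.

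Your closing remark is wrong, though: a single base point does suffice. In case~(b), the inequalities $\norm{p-q}\geq\norm{p-x}$ and $\norm{x-q}\geq\norm{y-q}$ give $d_\K(x,y)\geq\norm{x-y}/\norm{x-q}\geq\norm{x-y}_{g(x)}/\sqrt{\bar\nu}$. So the stronger one-sided bound with $g(x)$ alone holds, and since $d_\K$ is symmetric in $x,y$, it even holds with the maximum of the two local norms. The switch of base point to $y$ is therefore unnecessary.
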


Using this isoperimetric inequality, we turn transition overlap into an $s$-conductance bound.
Recall that for a $\pi_\theta$-reversible Markov kernel $\mathsf T$, its ergodic flow and $s$-conductance are defined by
\begin{align*}
    Q_{\mathsf T}(A,B)
    &\defn
        \int_A\mathsf T_x(B)\,\pi_\theta(\D x)\,,\\
    \Phi_s(\mathsf T)
    &\defn
    \inf_{\pi_\theta(A) \in (s,1/2]}
    \frac{Q_{\mathsf T}(A,A^c)}{\pi_\theta(A)-s}\,,
    \qquad \text{for }0\leq s<\frac12\,.
\end{align*}
Note that $\Phi_0$ is the ordinary conductance. We write $\widetilde{\mathsf T}$ for the lazy version of
$\mathsf T$.

The next standard lemma turns one-step coupling in $\Xi$ into an $s$-conductance bound~\cite{Lovasz99hit,LLV20strong,KV24ipm,JC25regularized}. The proof is deferred to \S\ref{subsec:proof-transition-overlap-conductance}.

\begin{lemma}[Transition overlap to $s$-conductance]\label{lem:generic-warm-start-mixing}
Let $g$ be an SC local metric on $\interior\K$.
Suppose that $\zeta\geq0$, $\Delta\in(0,1]$, $\delta\in[0,1)$, $\bar\nu>0$, and $\Xi\subseteq\interior\K$  satisfy
\begin{align*}
    \pi_\theta(\Xi^c)&\leq\zeta\,,\\
    \dtv(\mathsf T_x,\mathsf T_y)
    &\leq\delta
    \quad\text{whenever }
    x,y\in\Xi
    \ \text{and}\
    \norm{x-y}_{g(x)}\leq\Delta\,,\\
    \norm{x-y}_{g(x)} \wedge \norm{x-y}_{g(y)}
    &\leq
    \sqrt{\bar\nu}\,d_{\K}(x,y) \qquad \text{for distinct }x,y\in\Xi\,.
\end{align*}
Set $\vartheta\defn 1\wedge\frac{\Delta}{2\sqrt{\bar\nu}}$. Then, for every $s \in [\frac{8\zeta}\vartheta, \frac 12)$,
\[
    \Phi_s(\widetilde{\mathsf T}) \geq \frac{(1-\delta)\,\vartheta}{64}\,.
\]
\end{lemma}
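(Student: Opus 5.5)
The plan follows the Lov\'asz--Simonovits template: a partition argument in which the ``bad'' region $\Xi^c$ is paid for by the $s$ offset. Fix a measurable $A\subseteq\K$ with $\pi_\theta(A)\in(s,1/2]$. The lazy kernel satisfies $Q_{\widetilde{\mathsf T}}(A,A^c)=\frac12 Q_{\mathsf T}(A,A^c)$, so it suffices to show $Q_{\mathsf T}(A,A^c)\geq \frac{(1-\delta)\vartheta}{32}(\pi_\theta(A)-s)$.

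Define the sets
\[
    A_1\defn\{x\in A\cap\Xi:\mathsf T_x(A^c)<\tfrac{1-\delta}{2}\}\,,\qquad
    A_2\defn\{y\in A^c\cap\Xi:\mathsf T_y(A)<\tfrac{1-\delta}{2}\}\,,
\]
and let $A_3\defn \K\setminus(A_1\cup A_2)$. By reversibility, $Q_{\mathsf T}(A,A^c)=Q_{\mathsf T}(A^c,A)$. Every point of $A_3\cap\Xi$ contributes at least $\frac{1-\delta}{2}$ to one of these two flows, so
\[
    Q_{\mathsf T}(A,A^c)\geq\frac{1-\delta}{4}\,\pi_\theta(A_3\cap\Xi)\geq\frac{1-\delta}{4}\bpar{\pi_\theta(A_3)-\zeta}\,.
\]

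There are two cases. If $\pi_\theta(A_1)\leq\frac12\pi_\theta(A)$, then $\pi_\theta(A_3)\geq\pi_\theta(A\setminus A_1)-\pi_\theta(A\cap\Xi^c)\geq\frac12\pi_\theta(A)-\zeta$, and we are already done up to handling $\zeta$. The same holds if $\pi_\theta(A_2)\leq\frac12\pi_\theta(A^c)$. Otherwise both $A_1$ and $A_2$ carry at least half the mass of their sides.

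In that case, take $x\in A_1$ and $y\in A_2$. Then $\dtv(\mathsf T_x,\mathsf T_y)\geq 1-\mathsf T_x(A^c)-\mathsf T_y(A)>\delta$, so the overlap hypothesis forces $\norm{x-y}_{g(x)}>\Delta$. By symmetry of the roles of $x$ and $y$ (the hypothesis is also applied with base point $y$), $\norm{x-y}_{g(y)}>\Delta$ as well. The local-norm comparison then gives $d_\K(x,y)\geq\Delta/\sqrt{\bar\nu}\geq 2\vartheta$, since $\vartheta\leq\Delta/(2\sqrt{\bar\nu})$.

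Here the hypothesis $x,y\in\Xi$ is exactly what licenses both the overlap bound and the comparison; this is why $A_1$ and $A_2$ are restricted to $\Xi$. The self-concordance assumption is not even needed beyond this point.

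Apply Fact~\ref{fact:cross-ratio-isoperimetry} to the partition $(A_1,A_2,A_3)$, which is valid because $\pi_\theta$ is log-concave. This gives $\pi_\theta(A_3)\geq 2\vartheta\,\pi_\theta(A_1)\pi_\theta(A_2)\geq\frac{\vartheta}{2}\,\pi_\theta(A)\cdot\frac12\cdot\pi_\theta(A^c)\geq\frac{\vartheta}{8}\pi_\theta(A)$, using $\pi_\theta(A^c)\geq\frac12$.

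In all cases, then, $\pi_\theta(A_3)\geq\frac{\vartheta}{8}\pi_\theta(A)-\zeta$, using $\vartheta\leq1$ in the first case. Hence
\[
    Q_{\mathsf T}(A,A^c)\geq\frac{1-\delta}{4}\Bpar{\frac{\vartheta}{8}\pi_\theta(A)-2\zeta}\,.
\]

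The main obstacle is absorbing $\zeta$ into the offset $s$. Since $s\geq 8\zeta/\vartheta$, we have $2\zeta\leq\frac{\vartheta s}{4}\leq\frac{\vartheta}{8}s\cdot 2$. This needs care with constants. I would carry the bound as $\frac{\vartheta}{8}(\pi_\theta(A)-s)+\frac{\vartheta}{8}s-2\zeta$ and note that $\frac{\vartheta}{8}s\geq\zeta$. If the factor $2$ on $\zeta$ is too much, I would tighten the first case to use $\pi_\theta(A\cap\Xi^c)$ and $\pi_\theta(\Xi^c)$ jointly (a single $\zeta$ total), or else replace $\frac{\vartheta}{8}$ by $\frac{\vartheta}{16}$. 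The stated constant $64$ leaves room for this.

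This yields $Q_{\mathsf T}(A,A^c)\gtrsim(1-\delta)\vartheta(\pi_\theta(A)-s)/32$. The laziness factor $\frac12$ then gives $\Phi_s(\widetilde{\mathsf T})\geq\frac{(1-\delta)\vartheta}{64}$.
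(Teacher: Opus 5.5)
Your argument is correct in substance and is the same Lov\'asz--Simonovits partition argument the paper uses: $A_1,A_2,A_3$ with threshold $(1-\delta)/2$ inside $\Xi$, cross-ratio isoperimetry in the balanced case, and absorption of $\zeta$ into $s$. One step is genuinely different. The paper gets only $\norm{x-y}_{g(y)}\geq\Delta/2$ from $\norm{x-y}_{g(x)}>\Delta$, via the self-concordant local-norm comparison; this is the only place SC and $\Delta\leq1$ are used. You instead note that the overlap hypothesis is quantified over all ordered pairs in $\Xi$ and apply it to $(y,x)$, which gives $\norm{x-y}_{g(y)}>\Delta$ directly. That is valid. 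So the lemma holds without SC and without $\Delta\leq1$, and you gain a factor $2$: $d_\K(A_1,A_2)\geq2\vartheta$ instead of $\vartheta$. Routing the unbalanced cases through $\pi_\theta(A_3)$ and the symmetrized flow, rather than the paper's direct bound $Q_{\mathsf T}(A,A^c)\geq a(\pi_\theta(A)-\zeta-\pi_\theta(A_1))$, is only a cosmetic difference.

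Your $\zeta$ bookkeeping does not close as written, and part of your proposed remedy is wrong. The bound $Q_{\mathsf T}(A,A^c)\geq\frac{1-\delta}{4}(\frac{\vartheta}{8}\pi_\theta(A)-2\zeta)$, combined only with $\zeta\leq\vartheta s/8$, has right side tending to $-\frac{1-\delta}{4}\zeta<0$ as $\pi_\theta(A)\downarrow s=8\zeta/\vartheta$. So no bound of the form $c\,\vartheta(\pi_\theta(A)-s)$ follows for any $c>0$. Replacing $\vartheta/8$ by $\vartheta/16$ therefore does not help. Nor does the constant $64$ leave slack: your route gives exactly $(1-\delta)\vartheta/32$ before lazification.

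Your ``single $\zeta$'' option is the right fix. The double count comes from subtracting $\pi_\theta(A\cap\Xi^c)$ in the unbalanced case, which is unnecessary. Points of $A\cap\Xi^c$ are not in $A_1\cup A_2$, so $A\setminus A_1\subseteq A_3$ and $\pi_\theta(A_3)\geq\frac12\pi_\theta(A)$ with no loss. The only loss is then $\pi_\theta(A_3\cap\Xi)\geq\pi_\theta(A_3)-\zeta$, and $\zeta\leq\vartheta s/8$ gives $Q_{\mathsf T}(A,A^c)\geq\frac{(1-\delta)\vartheta}{32}(\pi_\theta(A)-s)$.

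Alternatively, fix the arithmetic slip in your balanced case: $2\vartheta\cdot\frac12\pi_\theta(A)\cdot\frac12\pi_\theta(A^c)=\frac{\vartheta}{2}\pi_\theta(A)\pi_\theta(A^c)\geq\frac{\vartheta}{4}\pi_\theta(A)$. The extra factor $2$, which comes from your improved distance bound, absorbs even the doubled $\zeta$ and yields $Q_{\mathsf T}(A,A^c)\geq\frac{(1-\delta)\vartheta}{16}(\pi_\theta(A)-s)$.
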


We then use conductance to bound the mixing time.
\begin{proposition}[Mixing from conductance
{\cite{lawler1988bounds,LS93random}}]
\label{prop:mixing-from-conductance}
The following holds for any $k \in \mathbb{Z}_{\ge 0}$:
\begin{equation}
\begin{aligned}
    \chi^2(\nu_0\widetilde{\mathsf T}^{\,k}\mmid\pi_\theta)
    &\leq
    \chi^2(\nu_0\mmid\pi_\theta)\,\exp\bpar{-k\Phi_0(\widetilde{\mathsf T})^2}\,,\\
    \dtv(\nu_0\widetilde{\mathsf T}^{\,k}, \pi_\theta)
    &\leq
    \sqrt{s\chi^2(\nu_0\mmid\pi_\theta)}
    +
    \sqrt{\frac{\chi^2(\nu_0\mmid\pi_\theta)}{s}}\,
    \exp\bpar{-\frac{k\Phi_s(\widetilde{\mathsf T})^2}{2}}\,, \qquad 0<s<\frac12\,.
\end{aligned}
\label{eq:warm-start-conductance}
\end{equation}
\end{proposition}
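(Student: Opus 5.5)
The statement has two parts, and I would prove them by two separate classical routes: a spectral (Cheeger) argument for the $\chi^2$ bound, and the Lov\'asz--Simonovits curve technique for the $s$-conductance bound. Throughout, $\widetilde{\mathsf T}=\frac12(\mathrm{Id}+\mathsf T)$ is $\pi_\theta$-reversible, hence self-adjoint on $L^2(\pi_\theta)$. Laziness makes it positive semidefinite, so its spectrum lies in $[0,1]$.

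\emph{First inequality.} Write $f_j\defn \D(\nu_0\widetilde{\mathsf T}^{\,j})/\D\pi_\theta$. By reversibility, $f_{j+1}=\widetilde{\mathsf T}f_j$ with $\widetilde{\mathsf T}$ acting as an operator, and $\chi^2(\nu_0\widetilde{\mathsf T}^{\,j}\mmid\pi_\theta)=\norm{f_j-1}_{L^2(\pi_\theta)}^2$. The function $f_j-1$ has $\pi_\theta$-mean zero and this is preserved by $\widetilde{\mathsf T}$. Since the spectrum on mean-zero functions lies in $[0,1-\lambda]$, where $\lambda$ is the spectral gap, each step contracts $\norm{f_j-1}^2$ by a factor $(1-\lambda)^2\le e^{-2\lambda}$. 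It then remains to invoke the Cheeger inequality for reversible chains of Lawler--Sokal, $\lambda\geq \Phi_0(\widetilde{\mathsf T})^2/2$. Iterating gives $\chi^2(\nu_0\widetilde{\mathsf T}^{\,k}\mmid\pi_\theta)\le e^{-k\Phi_0^2}\chi^2(\nu_0\mmid\pi_\theta)$. The only check needed is that $\chi^2(\nu_0\mmid\pi_\theta)<\infty$; otherwise the inequality is trivial.

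\emph{Second inequality.} I would use the Lov\'asz--Simonovits $s$-conductance theorem in its general (not necessarily warm) form. For a lazy reversible chain and $0<s<1/2$, set
\[
    H_s\defn\sup\{\abs{\nu_0(A)-\pi_\theta(A)}:\pi_\theta(A)\le s\}\,.
\]
Then for every measurable $A$,
\[
    \abs{\nu_0\widetilde{\mathsf T}^{\,k}(A)-\pi_\theta(A)}
    \le H_s+\frac{H_s}{s}\exp\bigl(-k\Phi_s^2/2\bigr)\,.
\]
Its proof follows the Lov\'asz--Simonovits concave-majorant curve $h_k(x)=\sup\{\int (f_k-1)\phi\,\D\pi_\theta:\ 0\le\phi\le1,\ \int\phi\,\D\pi_\theta=x\}$. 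One shows $h_{k}(x)\le\frac12[h_{k-1}(x-2\Phi_s\hat x)+h_{k-1}(x+2\Phi_s\hat x)]$ for $x\in[s,1-s]$, with $\hat x=\min(x-s,1-s-x)$. Induction against the comparison function $H_s+\frac{H_s}{s}\sqrt{\hat x}\,(1-\Phi_s^2/2)^k$ then gives the bound. I would cite this rather than reprove it.

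The remaining step is to bound $H_s$ by $\chi^2$. By Cauchy--Schwarz, for any $A$,
\[
    \abs{\nu_0(A)-\pi_\theta(A)}=\Bigl|\int \Ind_A\,(f_0-1)\,\D\pi_\theta\Bigr|\le\sqrt{\pi_\theta(A)\,\chi^2(\nu_0\mmid\pi_\theta)}\,,
\]
so $H_s\le\sqrt{s\,\chi^2(\nu_0\mmid\pi_\theta)}$ and $H_s/s\le\sqrt{\chi^2(\nu_0\mmid\pi_\theta)/s}$. Taking the supremum over $A$ yields the stated TV bound. I expect the main obstacle to be the Lov\'asz--Simonovits curve recursion. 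Specifically, one must verify that the ergodic-flow lower bound $Q(A,A^c)\ge\Phi_s(\pi_\theta(A)-s)$ feeds correctly into the recursion in the continuous-state, fractional-set setting, since atoms of $\pi_\theta$-mass are absent but rejection atoms of $\widetilde{\mathsf T}_x$ are present. Because this is exactly the setting of the original Lov\'asz--Simonovits paper, I would cite it and only check that the laziness and reversibility hypotheses are met.
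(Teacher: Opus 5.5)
Your proposal is correct and takes the same route as the paper. The paper proves the statement by citation alone: the $L^2$ Cheeger bound for the lazy reversible kernel gives the $\chi^2$ contraction, and the Lov\'asz--Simonovits $s$-conductance theorem combined with the Cauchy--Schwarz estimate $H_s\le\sqrt{s\,\chi^2(\nu_0\mmid\pi_\theta)}$ gives the TV bound. One point to keep straight: the exponent $k\Phi_0(\widetilde{\mathsf T})^2$ needs the sharp form $\lambda\geq\Phi_0^2/2$ in the normalization $\Phi_0=\inf_{\pi_\theta(A)\le1/2}Q_{\widetilde{\mathsf T}}(A,A^c)/\pi_\theta(A)$, as in the Lov\'asz--Simonovits/Jerrum--Sinclair version the paper points to; a Cheeger inequality with a worse constant would only give $e^{-ck\Phi_0^2}$ for some $c<1$.
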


The first estimate is the standard $L^2$ Cheeger bound in the normalization
of Lov\'asz and Simonovits~\cite[Lemma~1.7 and Corollary~1.8]{LS93random}; see also Lawler and Sokal~\cite{lawler1988bounds}.
The second follows from the Lov\'asz--Simonovits $s$-conductance bound
\cite[Corollary~1.6 and Remark~1]{LS93random} by Cauchy--Schwarz, since $\pi_\theta$ is atom-free.

\subsection{Proof of Theorem~\ref{theorem:generic-weighted-Dikin-mixing}}

\begin{proof}[Proof of Theorem~\ref{theorem:generic-weighted-Dikin-mixing}]
    Choose an integer $\ell$ and a radius $r$ satisfying
    \[
        \ell
        \asymp
        \log\frac{
            \sqrt{d\bar\nu}\,\bpar{1\vee \chi^2(\nu_0\mmid\pi_\theta)}
        }{\veps^2}
        \asymp
        L\,,
        \qquad
        r
        \asymp
        \ell^{-3/2}\,,
        \qquad
        \step
        \defn
        \frac r{\sqrt d}\,,
    \]
    where the implicit constants in the choices of $\ell$ and $r$ are taken sufficiently large and sufficiently small, respectively, so that $\ell\geq2$ and $r\leq1/2$.

    The inclusion $\msf D_g(x,1)\subseteq\K$ follows from $\bar\nu$-symmetry. Conditionally on $X=x$, write $H=g(x)^{-1/2}\xi$ with $\xi\sim\Normal(0,I_d)$. Gaussian concentration bounds the infeasibility probability as
    \[
        \mu(\cE^c)
        \leq
        \Prob\bpar{\norm{\xi}\geq\frac{\sqrt d}{r}}
        \leq
        \exp\bpar{-\Omega(d/r^2)}\,.
    \]
    Take the small implicit constant in the choice of $r$ so that $r\leq a_0/[4(2+\gamma)\,\ell^{3/2}]$.
    Let $\Xi$ be the region in Lemma~\ref{lem:acceptance_rate_control} corresponding to $a=a_0$ and $\tau=\tau_0$, and use the acceptance and overlap bounds with $k=\ell$.
    Together with the infeasibility estimate above, we have
    \begin{align*}
        \pi_\theta(\Xi^c)
        &\leq
        \exp\bpar{-\Omega(\ell)}\,,\\
        \dtv(\cT_x,\cT_y)
        &\leq 25/28
        \quad\text{whenever }x,y\in\Xi
        \text{ and }
        \norm{x-y}_{g(x)}\le \step/10\,.
    \end{align*}

    Set $\Delta \defn \frac{\step}{10}$ and $\vartheta \defn 1\wedge\frac{\Delta}{2\sqrt{\bar\nu}} \asymp \frac{r}{\sqrt{d\bar\nu}}$.
    By $\bar\nu$-symmetry and Fact~\ref{fact:llv-nu-symmetry}, for all distinct $x,y\in\interior\K$,
    \[
        \norm{x-y}_{g(x)}
        \wedge
        \norm{x-y}_{g(y)}
        \leq
        \sqrt{\bar\nu}\,d_{\K}(x,y)\,.
    \]
    Increasing the implicit constant in the choice of $\ell$, if
    necessary, ensures that
    \[
        \zeta
        \defn
        \pi_\theta(\Xi^c)
        \leq
        \frac{
            \veps^2\vartheta
        }{
            32\,\bpar{1+\chi^2(\nu_0\mmid\pi_\theta)}
        }\,.
    \]
    For $s \defn \frac{\veps^2}{4\,(1+\chi^2(\nu_0\mmid\pi_\theta))}$, the bound on $\zeta$ ensures that $s\geq8\zeta/\vartheta$. Hence,
    Lemma~\ref{lem:generic-warm-start-mixing} with $\delta=25/28$ yields $\Phi_s(\widetilde{\cT})\gtrsim\vartheta$.
    Substituting this bound into Proposition~\ref{prop:mixing-from-conductance} shows that the lazy $\dw$ has TV-distance at most $\veps$ after
    \[
        O\bpar{
            \vartheta^{-2}
            \log\frac{1\vee\chi^2(\nu_0\mmid\pi_\theta)}{\veps^2}
        }
        =
        O\bpar{
            \frac{d\bar\nu}{r^2}
            \log\frac{1\vee\chi^2(\nu_0\mmid\pi_\theta)}{\veps^2}
        }\quad\text{steps}\,.
    \]
    The claim follows from $r^{-2}\asymp\ell^3\asymp L^3$.
\end{proof}

\subsection{Proof of Lemma~\ref{lem:acceptance_rate_control}}
\label{subsec:generic-acceptance-control}

We prove the two conclusions separately. To bound $\pi_\theta(\Xi^c)$, we
separate infeasible proposals and reduce the feasible bad-proposal event to
an initial derivative moment. To prove transition overlap, we lower-bound the
proposal mass accepted from both points on a common downhill half-space.

We first represent the negative log proposal-density ratio along the
proposal path.
For $t\in[0,\step]$ and every $(x,h)$ such that
$x+th\in\interior\K$, define
\[
    \Theta_t(x,h)
    \defn
    -\frac12\log\det g(x+th)
    +\frac12\,h\tp g(x+th)h\,,
    \qquad
    \dot\Theta_t(x,h)
    \defn
    \partial_t\Theta_t(x,h)\,.
\]
By the proposal density in \eqref{eq:proposal-density}, on $\cE$,
\begin{equation}\label{eq:log-proposal-density-ratio}
    \log\frac{p_{x+\step h}(x)}{p_x(x+\step h)}
    =
    \Theta_0(x,h)-\Theta_\step(x,h)\,.
\end{equation}
Consequently, under $\mu$,
\[
    \mathcal B_{\theta,a}\cap\cE
    =
    \{
        (x,h)\in\mathcal D_\theta\cap\cE:
        \Theta_\step(x,h)-\Theta_0(x,h)>a
    \}\,.
\]
For $k\geq1$, define the $L^k$-norm of the derivative at time $t=0$ by
\[
    R_k
    \defn
    \norm{\dot\Theta_0}_{L^k(\mu)}
    =
    \bpar{
        \E_\mu\bbrack{\abs{\dot\Theta_0(X,H)}^k}
    }^{1/k}\,.
\]
We now bound the derivative moment at each time $t$ by $R_k$ and then control the tail probability.

\begin{lemma}[Tail bound of the bad event]\label{lem:acceptance-to-initial}
Fix $a>0, \step>0$, $k\geq1$, and let $g$ be $C^1$ on $\interior\K$. Then,
\[
    \mu(\mc B_{\theta, a}\cap \mc E) = \mu\bpar{
        \{
            (x,h)\in\mathcal D_\theta\cap\cE:
            \Theta_\step(x,h)-\Theta_0(x,h)\geq a
        \}
    }
    \leq
    e^a\,\bpar{\frac{\step R_k}{a}}^k \,.
\]
\end{lemma}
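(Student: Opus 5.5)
The plan is a Markov inequality on the path integral of $\dot\Theta_t$, combined with a change of measure along the proposal path. The factor $e^a$ should be exactly the price of transporting $\mu$ along the path, as long as the log proposal-density ratio has not yet exceeded $a$.

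\emph{Step 1: density of $\mu$ and its transport along the path.} Since $\pi_\theta(x)\propto e^{-\theta\tp x}$ on $\K$, the joint law $\mu$ has Lebesgue density on $\interior\K\times\Rd$
\[
    \mu(x,h)=\frac{1}{Z_\theta(2\pi)^{d/2}}\,e^{-\theta\tp x}\sqrt{\det g(x)}\,e^{-\frac12 h\tp g(x)h}
    =\frac{1}{Z_\theta(2\pi)^{d/2}}\,e^{-\theta\tp x-\Theta_0(x,h)}\,.
\]
For fixed $t$, the shear $\Psi_t(x,h)\defn(x+th,h)$ has unit Jacobian. Whenever $x+th\in\interior\K$, we have $\Theta_0(x+th,h)=\Theta_t(x,h)$, and therefore
\[
    \frac{\mu(\Psi_t(x,h))}{\mu(x,h)}
    =e^{-t\inner{\theta,h}}\,e^{\Theta_0(x,h)-\Theta_t(x,h)}\,.
\]
On $\mathcal D_\theta$ the first factor is at least $1$. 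On $\{\Theta_t-\Theta_0\le a\}$ the second factor is at least $e^{-a}$. Hence for any nonnegative $F$ and any measurable $G\subseteq\mathcal D_\theta\cap\{\Theta_t-\Theta_0\le a\}\cap\{x+th\in\interior\K\}$, the change of variables $(x',h)=\Psi_t(x,h)$ gives
\[
    \E_\mu\bbrack{F(\Psi_t(X,H))\,\Ind_G}\le e^a\,\E_\mu[F]\,.
\]
We will apply this with $F=\abs{\dot\Theta_0}^k$. The key identity is $\dot\Theta_t(x,h)=\dot\Theta_0(x+th,h)$, which holds because $\Theta_{t+s}(x,h)=\Theta_s(x+th,h)$.

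\emph{Step 2: stopping and Markov.} Fix $(x,h)\in\mathcal D_\theta\cap\cE$. Since $x$ and $x+\step h$ lie in $\interior\K$, convexity puts the whole segment in $\interior\K$, so $t\mapsto\Theta_t$ is $C^1$ on $[0,\step]$. Let $\sigma\defn\inf\{t:\Theta_t-\Theta_0\ge a\}\wedge\step$. On the bad event we have $\Theta_\sigma-\Theta_0=a$ by continuity, and so
\[
    a\le\int_0^\sigma\abs{\dot\Theta_t}\,\D t=\int_0^\step\abs{\dot\Theta_t}\,\Ind\{t\le\sigma\}\,\D t\,.
\]
For $t\le\sigma$ we have $\Theta_t-\Theta_0\le a$. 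Markov's inequality with the $k$-th power and Jensen's (Hölder's) inequality in $t$ then give
\[
    \mu(\text{bad})\le a^{-k}\step^{k-1}\int_0^\step\E_\mu\bbrack{\abs{\dot\Theta_0(\Psi_t(X,H))}^k\,\Ind_{G_t}}\,\D t\,,
\]
where $G_t=\mathcal D_\theta\cap\cE\cap\{t\le\sigma\}$. The expectation is well defined by Fubini, since everything is jointly measurable in $(t,x,h)$.

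\emph{Step 3: conclude.} Each set $G_t$ satisfies the hypotheses of Step 1, so every integrand is at most $e^aR_k^k$. Integrating over $t\in[0,\step]$ gives $\mu(\text{bad})\le e^a\step^kR_k^k/a^k$, which is the claim.

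The only delicate point is Step 1: checking the density identity and making sure the truncation $\{t\le\sigma\}$ places us exactly where the density ratio is at least $e^{-a}$. The downhill restriction $\mathcal D_\theta$ is what neutralizes the exponential tilt $e^{-t\inner{\theta,h}}$. The equality between the ``$>a$'' and ``$\ge a$'' versions of the event is immaterial, since we bound the larger ``$\ge a$'' event.
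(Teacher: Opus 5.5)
Your proposal is correct and follows essentially the same route as the paper. Both proofs use a unit-Jacobian change of variables $y=x+th$, in which the downhill condition $\mathcal D_\theta$ absorbs the tilt $e^{-t\inner{\theta,h}}$ and the truncation $\Theta_t-\Theta_0\le a$ costs at most $e^a$, followed by a first-hitting-time argument, Jensen in $t$, and Markov's inequality. The differences are cosmetic: the paper packages the transport through the auxiliary measure $\widetilde\mu_t$, and it truncates with the level-set indicator $\Ind_{\{\Theta_t-\Theta_0\le a\}}$ rather than your stopping indicator $\Ind\{t\le\sigma\}$; both give the same bound.
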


\begin{proof}
    For $t\in[0,\step]$, define
    $\cE_t \defn \{(x,h)\in\interior\K\times\Rd: x+th\in\interior\K\}$.
    Note that $\cE_\step=\cE$.
    Recall that the joint law $(X,H)\sim\mu$ in \eqref{eq:joint-law} has the following density on $\interior\K\times\Rd$:
    \[
        \D\mu(x,h)
        =
        \frac{1}{Z_\theta(2\pi)^{d/2}}\, \exp\bpar{-\inner{\theta,x}-\Theta_0(x,h)} \,\D x\D h\,.
    \]
    For $t\in[0,\step]$, define a measure on
    $\mathcal D_\theta\cap\cE_t$ by
    \[
        \D \widetilde{\mu}_t(x,h)
        \defn
        e^{-\{\Theta_t(x,h)-\Theta_0(x,h)\}}\,\D\mu(x,h)\,,
    \]
    and set it to zero outside this set. For every nonnegative measurable
    $\varphi$, the change of variables $y=x+th$ yields
    \begin{align*}
        &\int \varphi(x+th,h)\, \D \widetilde{\mu}_t(x,h)\\
        &\quad=
        \frac{1}{Z_\theta\,(2\pi)^{d/2}}
        \int
        \varphi(y,h)\,
        \Ind_{\interior\K}(y)\,\Ind_{\interior\K}(y-th)\,
        \Ind_{\sbrace{\inner{\theta,h}\leq0}}(h)
        \exp\bpar{
            -\inner{\theta,y}
            +t\inner{\theta,h}
            -\Theta_0(y,h)
        }\,\D y \D h\\
        &\quad\leq
        \int \varphi(y,h)\,\D\mu(y,h)\,,
    \end{align*}
    where the inequality follows from $t\inner{\theta,h}\leq0$ on $\mathcal D_\theta$. Since $\dot\Theta_t(x,h)=\dot\Theta_0(x+th,h)$, choosing $\varphi(y,h)=\abs{\dot\Theta_0(y,h)}^k$ yields
    \[
        \int\babs{\dot\Theta_t(x,h)}^k \,\D \widetilde{\mu}_t(x,h)
        \le R_k^k\,.
    \]

    For $\mu$-almost every $(x,h)\in\cE$, convexity of $\K$ implies
    $(x,h)\in\cE_t$ for every $t\in[0,\step]$. We integrate the
    absolute derivative only over times when the increment is at most $a$.
    For $(x,h)\in\mathcal D_\theta\cap\cE$, define
    \[
        J_a(x,h)
        \defn
        \int_0^\step
        \babs{\dot\Theta_t(x,h)}\,
        \Ind_{\{\Theta_t(x,h)-\Theta_0(x,h)\leq a\}}\,\D t\,,
    \]
    and set $J_a(x,h)=0$ otherwise. For such $(x,h)$, suppose that $\Theta_\step(x,h)-\Theta_0(x,h)\geq a$, and define the first hitting time by $t_\star \defn \inf\{t\in[0,\step]: \Theta_t(x,h)-\Theta_0(x,h)\geq a\}$.
    Continuity gives $\Theta_{t_\star}(x,h)-\Theta_0(x,h)=a$ and
    $\Theta_t(x,h)-\Theta_0(x,h)\leq a$ for $0\leq t\leq t_\star$.
    Therefore,
    \[
        a
        =
        \Theta_{t_\star}(x,h)-\Theta_0(x,h)
        =
        \int_0^{t_\star} \dot\Theta_t(x,h)\, \D t
        \leq
        J_a(x,h)\,.
    \]

    On $\mathcal D_\theta\cap\cE$,
    $\D\mu=e^{\Theta_t-\Theta_0}\,\D \widetilde{\mu}_t$. At the times
    included in $J_a$, this factor is at most $e^a$. By Jensen's inequality,
    \begin{align*}
        \E_{\mu}[J_a^k]
        &\le
        \step^{k-1}\int_0^\step\int
        \babs{\dot\Theta_t}^k\,
        \Ind_{\mathcal D_\theta\cap\cE}(x,h)\,
        \Ind_{\{\Theta_t(x,h)-\Theta_0(x,h)\leq a\}}\,
        \D\mu\D t\\
        &=
        \step^{k-1}\int_0^\step\int
        \babs{\dot\Theta_t}^k\,
        \Ind_{\mathcal D_\theta\cap\cE}(x,h)\,
        \Ind_{\{\Theta_t(x,h)-\Theta_0(x,h)\leq a\}}\,
        e^{\Theta_t-\Theta_0}
        \,\D\widetilde{\mu}_t\,\D t\\
        &\le
        e^a\step^{k-1}\int_0^\step\int
        \babs{\dot\Theta_t}^k \,\D \widetilde{\mu}_t\D t
        \le e^a\step^kR_k^k\,.
    \end{align*}
    Applying Markov's inequality, we have
    \[
        \mu\bpar{
            \{
                (x,h)\in\mathcal D_\theta\cap\cE:
                \Theta_\step(x,h)-\Theta_0(x,h)\geq a
            \}
        }
        \leq
        \mu(J_a\geq a)
        \leq
        a^{-k}\,\E_\mu[J_a^k]
        \leq
        e^a\bpar{\frac{\step R_k}{a}}^k\,,
    \]
    which completes the proof.
\end{proof}

It remains to bound $R_k$. Whenever $x+th\in\interior\K$, differentiation along the path yields
\begin{equation}
    \dot\Theta_t(x,h)
    =
    \frac12\, \Bpar{
        \Dd g(x+th)[h,h,h]
        -
        \Tr\bpar{
            g(x+th)^{-1}\Dd g(x+th)[h]
        }
    }\,.
    \label{eq:theta-derivative}
\end{equation}

\begin{lemma}[$R_k$-moment]\label{lem:generic-Rk-bound}
Let $g$ be SSC and have $\gamma$-bounded mixed $23$-trace on
$\interior\K$ for some $\gamma\geq0$. Then, for every $k\ge2$,
\begin{equation}
    R_k
    \le
    \sqrt6\,(k-1)^{3/2}\sqrt d
    +\gamma\sqrt{kd}
    \le
    2k^{3/2}\sqrt d\,(2+\gamma)\,.
    \label{eq:generic-Rk-bound}
\end{equation}
\end{lemma}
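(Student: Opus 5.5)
Under $\mu$, condition on $X=x$ and write $H=g(x)^{-1/2}\xi$ with $\xi\sim\Normal(0,I_d)$. By \eqref{eq:theta-derivative} at $t=0$ and the definition of $\cJ_x\defn\cJ_x^g$,
\[
    \Dd g(x)[h,h,h]=\cJ_x[\xi,\xi,\xi]\,,\qquad
    \Tr\bpar{g(x)^{-1}\Dd g(x)[h]}=\inner{\Tr_{12}\cJ_x,\xi}\,,
\]
so $\dot\Theta_0(x,H)=\tfrac12\bpar{\cJ_x[\xi^{\otimes3}]-\inner{\Tr_{12}\cJ_x,\xi}}$. Since $X\sim\pi_\theta$, it suffices to bound the conditional $L^k$ norm of this Gaussian polynomial uniformly in $x$, using only SSC and the mixed $23$-trace bound.

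The natural move is a Wiener chaos (Hermite) decomposition of the cubic $\cJ_x[\xi^{\otimes3}]$. Let $\mathrm{Sym}\,\cJ_x$ be the full symmetrization; $\cJ_x[\xi^{\otimes3}]$ depends only on it. Then $\cJ_x[\xi^{\otimes3}]=I_3+\inner{v,\xi}$, where $I_3$ is the third-chaos part and $v$ is the contraction of $\mathrm{Sym}\,\cJ_x$ over a pair of indices. Using symmetry of $\cJ_x$ in its first two slots,
\[
    v=\Tr_{12}\cJ_x+2\,\Tr_{23}\cJ_x\,.
\]
Hence the $\Tr_{12}$ parts cancel and
\[
    \dot\Theta_0=\tfrac12 I_3+\inner{\Tr_{23}\cJ_x,\xi}\,.
\]
This cancellation is the reason the mixed $23$-trace, not the $12$-trace, is the relevant quantity.

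Now bound each piece in $L^k$.
\begin{itemize}
\item \emph{Linear term.} $\inner{\Tr_{23}\cJ_x,\xi}$ is Gaussian with standard deviation at most $\gamma\sqrt d$ by condition (iii). Its $L^k$ norm is at most $\sqrt{k}\,\gamma\sqrt d$, since Gaussian $L^k$ norms are at most $\sqrt{k-1}\le\sqrt k$ times the $L^2$ norm.
\item \emph{Third-chaos term.} By Gaussian hypercontractivity, $\norm{I_3}_{L^k}\le(k-1)^{3/2}\norm{I_3}_{L^2}$. Also $\norm{I_3}_{L^2}^2=6\norm{\mathrm{Sym}\,\cJ_x}_{\F}^2\le 6\norm{\cJ_x}_{\F}^2$, since symmetrization is an orthogonal projection.
\item \emph{Frobenius norm.} SSC gives, for each unit $w$, $\sum_{i,j}\cJ_x[e_i,e_j,w]^2\le4$. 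Summing over $w=e_k$ gives $\norm{\cJ_x}_{\F}^2\le4d$.
\end{itemize}
So $\norm{\tfrac12 I_3}_{L^k}\le\tfrac12(k-1)^{3/2}\sqrt{24d}=\sqrt6\,(k-1)^{3/2}\sqrt d$.

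Combining the two terms by Minkowski's inequality, first conditionally on $X=x$ and then integrating over $\pi_\theta$ (the bounds are uniform in $x$), gives the first inequality in \eqref{eq:generic-Rk-bound}. The second follows from $\sqrt6\le\sqrt6\,k^{3/2}\cdot\ldots$ bookkeeping, namely $\sqrt6(k-1)^{3/2}\le 4k^{3/2}$ and $\gamma\sqrt k\le 2\gamma k^{3/2}$ for $k\ge2$.

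The main obstacle is the exact chaos bookkeeping. One must check that the first-chaos coefficient of $\cJ_x[\xi^{\otimes3}]$ is exactly $\Tr_{12}\cJ_x+2\Tr_{23}\cJ_x$, for a tensor that is symmetric only in its first two slots, so that the $\Tr_{12}$ term cancels exactly. One must also get the constant $6$ in $\E[I_3^2]=6\norm{\mathrm{Sym}\,\cJ_x}_{\F}^2$ right. I would verify both by expanding $\xi_i\xi_j\xi_k$ in Hermite polynomials.
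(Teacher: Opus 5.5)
Your proposal is correct and follows the paper's proof step for step: the same Hermite decomposition, in which the first-chaos coefficient $\Tr_{12}\cJ_x+2\Tr_{23}\cJ_x$ cancels the $\Tr_{12}$ term; hypercontractivity with $\E[\mathsf H_{\overline{\cJ}}(\xi)^2]=6\norm{\overline{\cJ}}_\F^2\le 24d$ from SSC; the Gaussian bound $\gamma\sqrt{kd}$ on the linear term; and Minkowski's inequality. One small slip in wording: that coefficient is $3\Tr_{12}\overline{\cJ}$, i.e., three times the contraction of the symmetrization rather than the bare contraction you describe, but the formula you actually use is the right one.
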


\begin{proof}
    Fix $x\in\interior\K$, write $g=g(x)$ and $\cJ=\cJ_x$, and let
    $H\sim\Normal(0,g^{-1})$. For
    $\xi\defn g^{1/2}H\sim\Normal(0,I_d)$, the derivative
    identity~\eqref{eq:theta-derivative} becomes
    \begin{equation}\label{eq:dotTheta-tensor-form}
        2\dot\Theta_0(x,H)
        =
        \cJ[\xi,\xi,\xi] - \inner{\Tr_{12}\cJ,\xi}\,.
    \end{equation}
    Using $\cJ_{ijk}=\cJ_{jik}$, define the symmetrization of $\cJ$ by
    $\overline{\cJ}_{ijk} \defn \frac13\, (\cJ_{ijk}+\cJ_{ikj}+\cJ_{jki})$.
    It satisfies
    \begin{equation}
        3\Tr_{12}\overline{\cJ}
        =
        \Tr_{12}\cJ
        +
        2\Tr_{23}\cJ \,.
        \label{eq:symmetrized-partial-trace}
    \end{equation}

    Define the third-order Hermite polynomial
    \[
        \mathsf H_{\overline{\cJ}}(\xi)
        \defn
        \sum_{ij\ell}
        \overline{\cJ}_{ij\ell}\,
        (\xi_i\xi_j\xi_\ell
            -\delta_{ij}\xi_\ell
            -\delta_{i\ell}\xi_j
            -\delta_{j\ell}\xi_i)\,,
    \]
    where $\delta_{ij}=1$ when $i=j$ and $0$ otherwise.
    Using $\overline{\cJ}[\xi,\xi,\xi]=\cJ[\xi,\xi,\xi]$, we have
    \[
        \cJ[\xi,\xi,\xi]
        =
        \mathsf H_{\overline{\cJ}}(\xi)
        +
        3\,\inner{\Tr_{12}\overline{\cJ},\xi}
        \overset{\eqref{eq:symmetrized-partial-trace}}{=}
        \mathsf H_{\overline{\cJ}}(\xi) +
        \inner{\Tr_{12}\cJ,\xi}
        +
        2\,\inner{\Tr_{23}\cJ,\xi}\,.
    \]
    Substituting this decomposition into \eqref{eq:dotTheta-tensor-form} yields
    \begin{equation}\label{eq:dotTheta-chaos-decomposition}
        2\dot\Theta_0(x,H)
        =
        \mathsf H_{\overline{\cJ}}(\xi)
        +
        2\,\inner{\Tr_{23}\cJ,\xi}\,.
    \end{equation}
    
    We bound $\mathsf H_{\overline{\cJ}}(\xi)$ using SSC and $\inner{\Tr_{23}\cJ,\xi}$ using the mixed $23$-trace assumption.
    Since symmetrization is an orthogonal projection in the Frobenius inner
    product, $\norm{\overline{\cJ}}_\F\leq\norm{\cJ}_\F$. By SSC,
    $\vecnorm{\cJ}{\sbrace{12}\sbrace{3}}\leq2$. Applying this bound to
    each coordinate direction and summing, we obtain
    \[
        \norm{\cJ}_{\F}^2
        =
        \sum_{\ell=1}^d
        \sum_{i,j=1}^d\cJ_{ij\ell}^2
        \le
        4d\,.
    \]
    By a direct Gaussian moment calculation, $\E\bbrack{\mathsf H_{\overline{\cJ}}(\xi)^2} = 3!\,\norm{\overline{\cJ}}_\F^2$.
    By Gaussian hypercontractivity~\cite[Theorem~5.8]{janson_gaussian_1997},
    \[
        \bpar{
            \E\bbrack{\abs{\mathsf H_{\overline{\cJ}}(\xi)}^k}
        }^{1/k}
        \le
        (k-1)^{3/2}
        \bpar{\E[\mathsf H_{\overline{\cJ}}(\xi)^2]}^{1/2}
        =
        \sqrt6\,(k-1)^{3/2}\,
        \norm{\overline{\cJ}}_\F
        \le
        2\sqrt6\,(k-1)^{3/2}\sqrt d\,.
    \]
    Applying the same bound to the linear term and using $\sqrt{k-1}\leq\sqrt k$, we obtain
    \[
        \bpar{\E\bbrack{\abs{\inner{\Tr_{23}\cJ,\xi}}^k}}^{1/k}
        \le
        \sqrt{k}\, \norm{\Tr_{23}\cJ}
        \le
        \gamma\sqrt{kd}\,.
    \]
    Applying the triangle inequality to \eqref{eq:dotTheta-chaos-decomposition} in $L^k$, we obtain, for every
    $x\in\interior\K$,
    \[
        \bpar{
            \E_{\Normal(0,g(x)^{-1})}
            \bbrack{\abs{\dot\Theta_0(x,H)}^k}
        }^{1/k}
        \leq
        \sqrt6\,(k-1)^{3/2}\sqrt d
        +
        \gamma\sqrt{kd}\,.
    \]
    Integrating the conditional $k$-th-moment bound over $X\sim\pi_\theta$ proves the first inequality in \eqref{eq:generic-Rk-bound}. The second follows from $k\geq2$.
\end{proof}

\begin{proof}[Proof of Lemma~\ref{lem:acceptance_rate_control}]
    We first bound $\pi_\theta(\Xi^c)$.
    By Lemmas~\ref{lem:acceptance-to-initial} and~\ref{lem:generic-Rk-bound},
    \[
        \mu(\mathcal B_{\theta,a})
        \leq
        e^a\,\bpar{\frac{\step R_k}{a}}^k
        +
        \mu(\cE^c)
        \leq
        e^a\,
        \bpar{
            \frac{
                2r(2+\gamma)\,k^{3/2}
            }{a}
        }^k
        +
        \mu(\cE^c)\,.
    \]
    By Markov's inequality and the tower property,
    \begin{align*}
        \pi_\theta(\Xi^c)
        &=
        \Prob_{X\sim\pi_\theta}
        \bpar{
            \Prob_\mu(\mathcal B_{\theta,a}\mid X)>\tau
        }\leq
        \frac{
            \E_{\pi_\theta}[\Prob_\mu(\mathcal B_{\theta,a}\mid X)]
        }{\tau}
        =
        \frac{\mu(\mathcal B_{\theta,a})}{\tau}\\
        &\leq
        \frac{e^a}{\tau}\,
        \bpar{
            \frac{2r(2+\gamma)k^{3/2}}{a}
        }^k
        +
        \frac{\mu(\cE^c)}{\tau}\,.
    \end{align*}

    We next prove transition overlap. For $u\in\Xi$, let
    $\mathcal B_{\theta,a}(u)\defn \{z\in\Rd: (u,\frac{z-u}{\step})\in\mathcal B_{\theta,a}\}$ be the set of bad proposal endpoints from $u$.
    Then,
    $\proposal_u [\mathcal B_{\theta,a}(u)]
    =\Prob_{\Normal(0,g(u)^{-1})}
    [(u,H)\in\mathcal B_{\theta,a}]\leq\tau$.

    Fix distinct $x,y\in\Xi$ satisfying $\norm{x-y}_{g(x)}\leq\step/10$.
    By Lemma~\ref{lem:proposal_overlap}, $\dtv(\proposal_x,\proposal_y)\leq1/4$.
    Since the proposal-overlap bound is symmetric, we may relabel $x$ and
    $y$ so that $\inner{\theta,x}\leq\inner{\theta,y}$.
    For $D_x^- :=\{z \in \Rd : \inner{\theta , z}\le \inner{\theta , x}\}$,
    note that $\inner{\theta,z}\leq\inner{\theta,x} \leq\inner{\theta,y}$ for every $z\in D_x^-$.
    By Gaussian symmetry, $\proposal_x(D_x^-)  =  1/2$. Therefore,
    \[
        \int_{D_x^-}\min\{p_x(z),p_y(z)\}\,\D z
        \geq
        \proposal_x(D_x^-)
        -\dtv(\proposal_x,\proposal_y)
        \geq
        \frac14\,.
    \]

    Let $q_u(z) \defn p_u(z)\,\alpha_g(u,z)$ be the density of accepted moves from $u$.
    Then, $\cT_u(A)\geq\int_A q_u(z)\,\D z$ for every measurable $A\subseteq\Rd$.
    For $u\in\{x,y\}$ and $z \in D_x^-\setminus\mathcal B_{\theta,a}(u)$, the proposal is
    feasible and downhill from $u$, and
    $p_z(u)/p_u(z)\geq e^{-a}$. In the acceptance ratio
    \eqref{eq:exponential-acceptance}, the target-density factor is at least
    one for a downhill proposal. Hence $\alpha_g(u,z)\geq e^{-a}$ and
    $q_u(z)\geq e^{-a}p_u(z)$. It follows that
    \begin{align*}
        1-\dtv(\cT_x,\cT_y)
        &\geq
        \int_{D_x^-\setminus
        [\mathcal B_{\theta,a}(x)\cup\mathcal B_{\theta,a}(y)]}
        \min\{q_x(z),q_y(z)\}\,\D z\\
        &\geq
        e^{-a}\, \Bpar{
            \int_{D_x^-}\min\{p_x(z),p_y(z)\}\,\D z
            -\proposal_x\bpar{\mathcal B_{\theta,a}(x)}
            -\proposal_y\bpar{\mathcal B_{\theta,a}(y)}
        }\\
        &\geq
        e^{-a}\,\bpar{\frac14-2\tau}\,.
    \end{align*}
    Rearranging gives the claimed transition-overlap bound.
\end{proof}

\section{Proof of Theorem~\ref{theorem:generic-pointwise-mixing} via ordinary conductance}\label{sec:spectral-analysis}

In the previous section, we constructed a high-probability region $\Xi$ on which every point has high acceptance probability and established one-step coupling among nearby points in this region.
We now prove Theorem~\ref{theorem:generic-pointwise-mixing}, showing that high acceptance probability and one-step coupling can be extended throughout $\intk$ under additional regularity assumptions on the local metric $g$.

The proof consists of three steps. First, Lemma~\ref{lem:fourth-order-bootstrap-ASC} uses SSC, bounded mixed $23$-trace, and fourth-order bootstrap control to establish ASC.
Second, proposal overlap from SSC combines with LTSC and ASC to establish transition overlap for nearby points throughout $\interior\K$. Third,
$\bar\nu$-symmetry and cross-ratio isoperimetry convert this overlap into an ordinary conductance lower bound (not just $s$-conductance). Proposition~\ref{prop:mixing-from-conductance} then yields $\chi^2$-contraction.

The second and third steps follow standard Dikin-walk arguments \cite{KV24ipm,JC25regularized}. The additional step is Lemma~\ref{lem:fourth-order-bootstrap-ASC}, which uses SSC and bounded mixed $23$-trace to control the initial cubic term and fourth-order bootstrap control to bound the integral remainder in Taylor's formula along a Gaussian proposal path.

\subsection{A fourth-order bootstrap criterion for ASC and one-step coupling}

We first control the initial derivative and the integral remainder along a Gaussian proposal path.

\begin{lemma}[Fourth-order bootstrap criterion for ASC]\label{lem:fourth-order-bootstrap-ASC}
If $g$ is SSC, has bounded mixed $23$-trace, and has fourth-order
bootstrap control, then $g$ is ASC. 
\end{lemma}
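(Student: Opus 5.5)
Fix $\veps\in(0,1/2)$ and $x\in\interior\K$, and write the proposal as $Y=x+\step h$ with $h=g(x)^{-1/2}\xi$, $\xi\sim\Normal(0,I_d)$, $\step=r/\sqrt d$. The quantity in the definition of ASC is then $\step^2\,(F(\step)-F(0))$ with $F(t)\defn h\tp g(x_t)h$ and $x_t=x+th$. Since $2\veps r^2/d=2\veps\step^2$, it suffices to show that, with probability at least $1-\veps$, the point $Y$ lies in $\interior\K$ and $\abs{F(\step)-F(0)}\le2\veps$. We first take $r\le r_\delta$, so that by (B1) the whole segment lies in $\interior\K$ on the good event. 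On that event we then use Taylor's formula with integral remainder,
\[
F(\step)-F(0)=\step F'(0)+\int_0^\step(\step-t)F''(t)\,\D t,\qquad F'(0)=\cJ_x^g[\xi,\xi,\xi],\quad F''(t)=\cQ_{x_t}^g[\xi_t^{\otimes4}],
\]
where $\xi_t=g(x_t)^{1/2}h$. This is exactly the setup of Definition~\ref{def:fourth-order-bootstrap-control}.

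\emph{Linear (cubic) term.} We reuse the chaos decomposition from the proof of Lemma~\ref{lem:generic-Rk-bound}:
\[
\cJ[\xi,\xi,\xi]=\mathsf H_{\overline\cJ}(\xi)+\inner{\Tr_{12}\cJ+2\Tr_{23}\cJ,\xi}.
\]
SSC gives $\norm{\overline\cJ}_\F\le2\sqrt d$ and $\norm{\Tr_{12}\cJ}\le2\sqrt d$ (Remark~\ref{rmk:regularity-ssc}), and bounded mixed $23$-trace gives $\norm{\Tr_{23}\cJ}\le\gamma\sqrt d$. Hence $\E[F'(0)^2]\lesssim(1+\gamma)^2d$. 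Chebyshev then yields $\abs{F'(0)}\le C(1+\gamma)\sqrt{d/\veps}$ with probability at least $1-\veps/2$, so that
\[
\step\abs{F'(0)}\le C(1+\gamma)\,r/\sqrt{\veps}.
\]
This is at most $\veps$ once $r\lesssim\veps^{3/2}/(1+\gamma)$.

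\emph{Remainder via the bootstrap.} Apply the definition with $\delta=\veps/2$. On its event, (B2) and Gr\"onwall's inequality (valid because $t\mapsto\omega_t$ is continuous) give, for $t\le r_\delta/\sqrt d$,
\[
\omega_t\le\omega_0\,e^{\eta_\delta\sqrt d\,t}\le c_\delta\,e^{\eta_\delta r_\delta}.
\]
Then (B3) yields $\abs{F''(t)}\le b_\delta c_\delta e^{\eta_\delta r_\delta}\,d$. Integrating over $t\in[0,\step]$ with $\step=r/\sqrt d\le r_\delta/\sqrt d$, the remainder is bounded by
\[
\tfrac12\step^2\sup_t\abs{F''(t)}\le\tfrac12\,b_\delta c_\delta e^{\eta_\delta r_\delta}\,r^2.
\]
This is at most $\veps$ provided $r^2\le 2\veps/(b_\delta c_\delta e^{\eta_\delta r_\delta})$. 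A union bound over the two exceptional events, of total probability at most $\veps$, finishes the proof. The resulting radius
\[
r_\veps\defn\min\bigl\{r_{\veps/2},\ c\,\veps^{3/2}/(1+\gamma),\ \sqrt{2\veps/(b_{\veps/2}c_{\veps/2}e^{\eta_{\veps/2}r_{\veps/2}})}\bigr\}
\]
depends only on $\veps$ and the universal constant $\gamma$, as ASC requires.

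The main subtlety is the bootstrap step. One must make sure that Gr\"onwall applies only on the interval where (B1) guarantees feasibility and continuity, and that the constants from Definition~\ref{def:fourth-order-bootstrap-control} are dimension-free so that $\sqrt d\,t$ stays $O(1)$ on the path. A second point to check is that the Taylor identity requires $F$ to be $C^2$ along the segment, which follows from $g\in C^2$ together with feasibility. One should also verify that $F''(t)$ is indeed $\cQ_{x_t}^g[\xi_t^{\otimes4}]$ after normalizing by $g(x_t)^{-1/2}$, which is immediate from \eqref{eq:normalized-second-derivative-tensor}.
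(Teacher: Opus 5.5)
Your proposal is correct and follows essentially the same route as the paper: Taylor's formula along the Gaussian ray, the Hermite/trace decomposition of $F'(0)=\cJ_x^g[\xi^{\otimes3}]$ controlled by SSC and the mixed $23$-trace bound, Gr\"onwall applied to (B2)--(B3) with $\delta=\veps/2$, and a union bound that gives the radius as a three-term minimum. The only difference is that you bound the cubic term with Chebyshev rather than with hypercontractive $L^k$ moments for $k\asymp\log(1/\veps)$. This makes your $r_\veps$ scale like $\veps^{3/2}$ instead of $\veps/\log^{3/2}(1/\veps)$, which is harmless: ASC only requires $r_\veps$ to depend on $\veps$ alone, and the application fixes $\veps=\rho_\star=1/64$.
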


\begin{proof}
    Fix $\rho\in(0,1/2)$ and $x\in\interior\K$. We verify the event in Definition~\ref{def:asc} with a radius that is uniform in $x$ and $d$.
    Let $\xi\sim\Normal(0,I_d)$, and set $h=g(x)^{-1/2}\xi$ and $x_t=x+th$.
    Whenever $x_t\in\interior\K$, set
    $\xi_t=g(x_t)^{1/2}h$ and $F(t)=h\tp g(x_t)h$.
    For every feasible $\step\geq0$ and $Y=x+\step h$,
    \begin{equation}\label{eq:F-taylor}
        \abs{
            \norm{Y-x}_{g(Y)}^2-
            \norm{Y-x}_{g(x)}^2
        }
        =
        \step^2\,\abs{F(\step)-F(0)}\,.
    \end{equation}
    Thus, it suffices to control the variation of $F$ along the proposal path. On every feasible segment, differentiation yields $F'(t)=\cJ_{x_t}^g[\xi_t^{\otimes3}]$ and $F''(t)=\cQ_{x_t}^g[\xi_t^{\otimes4}]$.
    Hence, for every $\step\geq0$ such that $\{x_t:0\leq t\leq\step\}\subseteq\interior\K$, Taylor's formula yields
    \begin{equation}
        F(\step)-F(0)
        =
        \step\,\cJ_x^g[\xi^{\otimes3}]
        +
        \int_0^\step
        (\step-t)\,\cQ_{x_t}^g[\xi_t^{\otimes4}]\,\D t\,.
    \label{eq:Gaussian-ray-Taylor}
    \end{equation}
    
    We bound the initial derivative and the integral remainder separately.
    Write $\cJ=\cJ_x^g$. The Gaussian decomposition from the proof of
    Lemma~\ref{lem:generic-Rk-bound} expresses the initial derivative as
    \[
        F'(0)
        =
        \mathsf H_{\overline{\cJ}}(\xi)
        +\inner{
            \Tr_{12}\cJ+2\Tr_{23}\cJ,
            \xi
        }\,,
    \]
    where $\overline{\cJ}$ and $\mathsf H_{\overline{\cJ}}$ are, respectively, the
    symmetrization and the third-order Hermite polynomial defined there. SSC
    controls the Hermite term, Remark~\ref{rmk:regularity-ssc} controls
    $\Tr_{12}\cJ$, and bounded mixed $23$-trace controls
    $\Tr_{23}\cJ$. Applying the moment bounds from that proof with
    $k=\lceil2+\log(2/\rho)\rceil$ and then Markov's inequality shows that, with probability at least $1-\rho/2$,
    \[
        \abs{F'(0)}
        \leq
        M_{3,\rho}\sqrt d
        \qquad\text{for }
        M_{3,\rho}
        \asymp
        \bpar{1+\log\frac{2}{\rho}}^{3/2}\,.
    \]

    For the integral remainder, apply
    Definition~\ref{def:fourth-order-bootstrap-control} with
    $\delta=\rho/2$. On its event, which has probability at least
    $1-\rho/2$, feasibility in condition~\textup{(B1)} keeps the path in
    $\interior\K$ up to $r_{\rho/2}/\sqrt d$. The initial bound in
    condition~\textup{(B1)}, the integral inequality in
    condition~\textup{(B2)}, and Gr\"onwall's inequality imply
    \[
        \omega_t
        \leq
        \omega_0\exp(\eta_{\rho/2}\sqrt d\,t)
        \leq
        c_{\rho/2}
        \exp(\eta_{\rho/2}\,r_{\rho/2})
        \qquad
        \text{for }0\leq t\leq r_{\rho/2}/\sqrt d\,.
    \]
    Combining this estimate with the fourth-order bound in
    condition~\textup{(B3)}, we obtain
    \[
        \abs{\cQ_{x_t}^g[\xi_t^{\otimes4}]}
        \leq
        M_{4,\rho/2}d
        \qquad\text{for }
        M_{4,\rho/2}
        \defn
        b_{\rho/2}c_{\rho/2}
        \exp(\eta_{\rho/2}\,r_{\rho/2})\,.
    \]
    The intersection of the two events has probability at least $1-\rho$.
    Define
    \[
        r_\rho^\star
        \defn
        \min\bbrace{
            r_{\rho/2},
            \frac{\rho}{M_{3,\rho}},
            \bpar{\frac{2\rho}{1\vee M_{4,\rho/2}}}^{1/2}
        }\,.
    \]
    The three terms in this minimum respectively ensure path feasibility,
    an initial-derivative contribution at most $\rho$, and an integral
    remainder at most $\rho$ in Taylor's formula~\eqref{eq:Gaussian-ray-Taylor}.
    For any $0<r\leq r_\rho^\star$, set
    $\step\defn r/\sqrt d$ and $Y=x+\step h$.
    Then $Y\sim\Normal(x,r^2g(x)^{-1}/d)$. On the intersection of the two events,
    $r\leq r_{\rho/2}$, so condition~\textup{(B1)} ensures that the segment
    $\{x_t:0\leq t\leq\step\}$ is feasible. We may therefore apply Taylor's
    formula~\eqref{eq:Gaussian-ray-Taylor}; combining the preceding bounds with \eqref{eq:F-taylor} yields
    \[
        \abs{
            \norm{Y-x}_{g(Y)}^2-\norm{Y-x}_{g(x)}^2
        }
        =\eta^2\,\abs{F(\eta)-F(0)} 
        \leq
        \eta^2\, \bpar{M_{3,\rho}r + \frac{M_{4,\rho/2}r^2}{2}}
        \leq
        \frac{2\rho r^2}{d}\,.
    \]
    These feasibility and increment bounds hold with probability at least
    $1-\rho$. Moreover, $M_{3,\rho}$ and the parameters in
    Definition~\ref{def:fourth-order-bootstrap-control} at $\delta=\rho/2$
    depend only on $\rho$ and universal constants. Hence so does
    $r_\rho^\star$. Since $x$ was arbitrary, taking
    $r_\rho^{\mathrm{ASC}}=r_\rho^\star$ proves the claim.
\end{proof}

We next combine proposal overlap with LTSC and ASC to obtain transition overlap for nearby points throughout $\interior\K$. The following result essentially appeared in previous work~\cite{KV24ipm,JC25regularized}, and we defer its streamlined proof to \S\ref{sec:one_step_coupling_under_ssc_ltsc_and_asc}.

\begin{lemma}[Pointwise transition overlap]
\label{lem:pointwise-Dikin-framework}
Let $g:\interior\K\to\mathbb S_{++}^d$ be SSC, LTSC, and ASC.
Set $\rho_\star \defn \frac1{64}$ and $r_0 \defn 1\wedge r_{\rho_\star}^{\mathrm{ASC}}$, where $r_{\rho_\star}^{\mathrm{ASC}}>0$ is a radius given by Definition~\ref{def:asc}.
There are universal constants $c_0\in(0,1/2]$ and $\delta_0\in(0,1)$ such that for every exponential target $\pi_\theta$, the Dikin kernel with proposal radius $r=c_0r_0$ satisfies
\[
    \dtv(\cT_x,\cT_y)
    \leq
    \delta_0
    \qquad\text{whenever}\qquad
    \norm{x-y}_{g(x)}
    \leq
    \frac{\step}{10}
    =
    \frac{c_0r_0}{10\sqrt d}\,.
\]
\end{lemma}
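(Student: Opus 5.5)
The plan is to establish pointwise transition overlap as a triangle inequality $\dtv(\cT_x,\cT_y)\le \dtv(\cT_x,\proposal_x)+\dtv(\proposal_x,\proposal_y)+\dtv(\proposal_y,\cT_y)$. This is refined slightly because $\cT_x$ has an atom at $x$. More precisely, we write $\cT_x = q_x + (1-\int q_x)\,\delta_x$, where $q_x=p_x\alpha_g(x,\cdot)$. Then
\[
1-\dtv(\cT_x,\cT_y)\ \ge\ \int\min\{q_x,q_y\}\ \ge\ \int\min\{p_x,p_y\}-\int p_x(1-\alpha_g(x,\cdot))-\int p_y(1-\alpha_g(y,\cdot)),
\]
just as in the proof of Lemma~\ref{lem:acceptance_rate_control}. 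Lemma~\ref{lem:proposal_overlap} bounds the first term from below by $3/4$, since $r=c_0r_0\le 1/2$. It therefore suffices to show a pointwise rejection bound: for every $u\in\interior\K$, the expected rejection probability $\E_{Y\sim\proposal_u}[1-\alpha_g(u,Y)]$ is at most a small universal constant. The exponential factor prevents this as stated, so I will use the downhill half-space trick from before. Relabel so that $\inner{\theta,x}\le\inner{\theta,y}$ and restrict to $D_x^-$. There the target factor is at least $1$ for both $x$ and $y$, and $\int_{D_x^-}\min\{p_x,p_y\}\ge 1/4$. The claim then reduces to showing that, for each $u$, the proposals $Y\sim\proposal_u$ have $p_Y(u)/p_u(Y)\ge e^{-a}$ and $Y\in\interior\K$ except with probability at most a small constant such as $1/16$.

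To get this pointwise acceptance bound, I split the log proposal-density ratio in \eqref{eq:exponential-acceptance} into a log-det part and a quadratic part:
\[
\log\frac{p_Y(u)}{p_u(Y)}=\tfrac12\bpar{\log\det g(Y)-\log\det g(u)}-\tfrac{1}{2\step^2}\bpar{\norm{Y-u}_{g(Y)}^2-\norm{Y-u}_{g(u)}^2}.
\]
The quadratic part is controlled directly by ASC with $\rho_\star=1/64$. For $r\le r_0\le r^{\mathrm{ASC}}_{\rho_\star}$, with probability at least $1-\rho_\star$, we have $Y\in\interior\K$ and the difference is at most $2\rho_\star r^2/d$. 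After dividing by $2\step^2=2r^2/d$, the quadratic contribution is at most $\rho_\star$ in absolute value.

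For the log-det part, I will Taylor expand $\psi(t)=\log\det g(u+th)$ along $h=g(u)^{-1/2}\xi$. The first derivative at $t=0$ is $\psi'(0)=\inner{\Tr_{12}\cJ_u,\xi}$, which is a Gaussian with variance at most $4d$ by SSC (Remark~\ref{rmk:regularity-ssc}). Hence $\step\psi'(0)$ is $O(r)$ with high probability, and since it is symmetric, this would also suffice. For the second derivative, \eqref{eq:logdet-JQ} gives $\psi''(t)=(\Tr_{12}\cQ_{u_t})[\xi_t,\xi_t]-\norm{\cJ_{u_t}[\cdot,\cdot,\xi_t]}_\F^2$. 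LTSC bounds the first term below by $-\norm{\xi_t}^2$, and SSC bounds the second by $4\norm{\xi_t}^2$. So $\psi''(t)\ge -5\norm{\xi_t}^2$, which only needs a lower bound, because we only need $\log\det g(Y)-\log\det g(u)$ bounded below.

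The main obstacle is controlling $\norm{\xi_t}^2=\norm{h}_{g(u_t)}^2$ along the path. I will use self-concordance: for $t\norm{h}_{g(u)}\le \step\sqrt{2d}\lesssim r$, we have $g(u_t)\preceq (1-t\norm{h}_{g(u)})^{-2}g(u)$, so $\norm{\xi_t}^2\lesssim \norm{\xi}^2\le 2d$ with overwhelming probability. Integrating then gives $\tfrac12(\psi(\step)-\psi(0))\ge -O(r)-O(\step^2 d)=-O(r)-O(r^2)$. Choosing $c_0$ small makes the total log-ratio at least $-a$ for a fixed $a$, except on an event of probability at most about $\rho_\star$ plus a Gaussian tail. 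This yields $1-\dtv(\cT_x,\cT_y)\ge e^{-a}(1/4-2/16)>0$, i.e.\ $\delta_0=1-e^{-a}/8$.
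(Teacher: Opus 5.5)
Your proposal is correct and follows essentially the same route as the paper's proof. You split the log proposal-density ratio into a quadratic part controlled by ASC at $\rho_\star$, and a log-det part controlled via $\psi'(0)=\inner{\Tr_{12}\cJ_u,\xi}$, the lower bound $\psi''(t)\geq-5\norm{\xi_t}^2$ from LTSC and SSC, and the local-norm comparison; you then rerun the downhill half-space overlap argument of Lemma~\ref{lem:acceptance_rate_control} with $\Xi=\interior\K$ and $\tau=1/16$ to get $\delta_0=1-e^{-a}/8$. The only tidying needed is in the probability bookkeeping. Allot mass $\rho_\star$ each to the ASC event, the Gaussian tail of $\psi'(0)$, and the event $\norm{\xi}^2\leq Cd$, with $C$ a universal constant rather than $2$ since small $d$ is allowed. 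The total bad mass is then $3\rho_\star<1/16$.
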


\subsection{Proof of Theorem~\ref{theorem:generic-pointwise-mixing}}

We now derive ordinary conductance and $\chi^2$-contraction from pointwise
transition overlap.

\begin{proof}[Proof of Theorem~\ref{theorem:generic-pointwise-mixing}]
By Lemma~\ref{lem:fourth-order-bootstrap-ASC}, SSC, bounded mixed
$23$-trace, and fourth-order bootstrap control imply ASC; at $\rho_\star=1/64$, the corresponding ASC radius is universal.
Together with SSC and LTSC, Lemma~\ref{lem:pointwise-Dikin-framework} then yields a proposal radius $r\in(0,1/2]$ and a universal constant $\delta_0\in(0,1)$.
Then, $\dtv(\cT_x,\cT_y)\leq\delta_0$ whenever
$\norm{x-y}_{g(x)}\leq\Delta \defn  \frac{r}{10\sqrt d} \asymp \frac{1}{\sqrt d}$.

The kernel $\cT$ is $\pi_\theta$-reversible, SSC implies self-concordance, and $\pi_\theta(\interior\K)=1$. Moreover, $\bar\nu$-symmetry and Fact~\ref{fact:llv-nu-symmetry} imply $\norm{x-y}_{g(x)} \wedge \norm{x-y}_{g(y)} \leq \sqrt{\bar\nu}\,d_\K(x,y)$ for all distinct $x,y\in\interior\K$.
Since $\Delta\leq1$, the preceding bounds verify the hypotheses of
Lemma~\ref{lem:generic-warm-start-mixing} with $\mathsf T=\cT$,
$\Xi=\interior\K$, $\zeta=s=0$, $\delta=\delta_0$, and the displayed
$\Delta$. Therefore,
\[
    \Phi_0(\widetilde{\cT})
    \gtrsim
    \frac{r}{\sqrt{d\bar\nu}}
    \asymp
    \frac{1}{\sqrt{d\bar\nu}}\,.
\]
Substituting this conductance bound into Proposition~\ref{prop:mixing-from-conductance} yields, for a universal constant $c>0$,
\[
    \chi^2(\nu_0\widetilde{\cT}^{\,N}\mmid\pi_\theta)
    \leq
    \exp\bpar{-\frac{cN}{d\bar\nu}}\,\chi^2(\nu_0\mmid\pi_\theta)\,.
\]
Solving this inequality for $N$ gives the stated $\chi^2$ mixing-time bound and completes the proof.
\end{proof}

\section{Applications}\label{sec:applications}\label{sec:tv-applications}

This section proves the four corollaries stated in \S\ref{subsec:consequences}. In \S\ref{sec:polytope-mixing}, we prove Corollary~\ref{cor:generic-weighted-Dikin-mixing-polytope} by reducing the assumptions of
Theorem~\ref{theorem:generic-weighted-Dikin-mixing} to three conditions for general weighted Dikin metrics on polytopes.
\S\ref{sec:LS-verification} verifies these conditions for the Lee--Sidford, Lewis-weight, and John metrics and proves Corollary~\ref{cor:LS-mixing}.
\S\ref{ssub:dimension_square_mixing_for_psd_cones} verifies the conditions of Theorem~\ref{theorem:generic-weighted-Dikin-mixing} for a hybrid metric combining the Hessian of the positive-semidefinite barrier with an LS component and proves Corollary~\ref{cor:dimension-square-SDP}.
Finally, \S\ref{sec:spectral-LS-verification} verifies the conditions in Theorem~\ref{theorem:generic-pointwise-mixing} for a scaled LS metric and proves Corollary~\ref{cor:spectral-LS-mixing}.

\subsection{Mixing framework for weighted Dikin walks on polytopes}\label{sec:polytope-mixing}

The proof has two steps. The row-derivative operator bound in condition~(i) yields SSC and bounded mixed $23$-trace, while the ellipsoid-containment and weight-trace bounds in conditions~(ii) and~(iii) yield $\bar\nu$-symmetry.

\begin{proof}[Proof of Corollary~\ref{cor:generic-weighted-Dikin-mixing-polytope}]
    Fix $x\in\interior\K$ and suppress the dependence on $x$ when it is clear from context.

    \paragraph{SSC and mixed $23$-trace.}
    For $B = W^{1/2}A_x$, set $U\defn Bg^{-1/2}$ and $P\defn UU\tp$.
    Then, $U\tp U=I_d$, $P$ is an orthogonal projection, and
    $\Sigma=\Diag  P$. For $h=g^{-1/2}z$, differentiating
    $B=W^{1/2}A_x$ gives
    \[
        \Dd B[h]
        =
        \Diag(Rz)\,B\,.
    \]
    Since $g=B\tp B$, differentiation yields the key identity
    \begin{equation}\label{eq:weighted-Dikin-tensor-factorization}
        g^{-1/2}\,\Dd g[g^{-1/2}z]\,g^{-1/2}
        =
        2U\tp \Diag(Rz)U\,.
    \end{equation}
    Here, $\circ$ denotes the Hadamard product and $P^{\circ2}\defn P\circ P$.  
    As $\Sigma-P^{\circ2}=P\circ(I_m-P)\succeq0$,
    \[
        \frac14\,
        \norm{g^{-1/2}\,\Dd g[g^{-1/2}z]\,g^{-1/2}}_{\F}^2
        =
        (Rz)\tp P^{\circ2}(Rz)
        \leq
        \norm{\Sigma^{1/2}Rz}^2
        \leq
        C^2\,\norm z^2
        \quad\text{for  } z\in\Rd\,.
    \]
    Thus, $\vecnorm{\cJ_x^g}{\sbrace{12}\sbrace{3}}\leq2C\leq2$, so $g$ is
    SSC.

    For the mixed trace, let $u_i\tp$ and $r_i\tp$ be the $i$-th rows of
    $U$ and $R$, and set $q_i\defn\inner{u_i,r_i}$.
    Taking the mixed $23$-trace in the
    factorization~\eqref{eq:weighted-Dikin-tensor-factorization} yields
    $\Tr_{23}\cJ_x^g=2U\tp q$. Since
    $\norm{u_i}^2=P_{ii}=\Sigma_{ii}$,
    \begin{equation}\label{eq:tr23-bound}
    \begin{aligned}
        \norm{\Tr_{23}\cJ_x^g}
        &\leq
        2\,\norm q
        \leq
        2\,\Bpar{\sum_{i=1}^m\Sigma_{ii}\,\norm{r_i}^2}^{1/2}
        =
        2\,\norm{\Sigma^{1/2}R}_{\F}\\
        &\leq
        2\sqrt d\,\norm{\Sigma^{1/2}R}_{\op}
        \leq
        2C\sqrt d\,.
    \end{aligned}
    \end{equation}
    The SSC and mixed-trace bounds verify conditions~(i) and~(iii) of
    Theorem~\ref{theorem:generic-weighted-Dikin-mixing}.

    \paragraph{$\bar\nu$-symmetry.}
    The ellipsoid $\msf D_g(x,1)$ is symmetric about $x$, so condition~(ii) of the corollary implies $\msf D_g(x,1)\subseteq\K\cap(2x-\K)$.
    For the outer inclusion, if $x+h\in\K\cap(2x-\K)$, then
    $\norm{A_xh}_\infty\leq1$. Condition~(iii) therefore gives
    \[
        \norm{h}_{g(x)}^2
        =
        \sum_{i=1}^m w_{x,i}\,(a_{x,i}\tp h)^2
        \leq
        \Tr W_x
        \leq
        \bar\nu\,.
    \]
    Hence, $\K\cap(2x-\K)\subseteq\msf D_g(x,\sqrt{\bar\nu})$, so $g$ is $\bar\nu$-symmetric. The claim follows from Theorem~\ref{theorem:generic-weighted-Dikin-mixing}.
\end{proof}

\subsection{Lee--Sidford, Lewis-weight, and John metrics}\label{sec:LS-verification}

We now verify (i)--(iii) of Corollary~\ref{cor:generic-weighted-Dikin-mixing-polytope} for the Lee--Sidford, Lewis-weight, and John metrics in Lemmas~\ref{lem:LS-metric-conditions}--\ref{lem:John-metric-conditions}.
Each proof checks the row-derivative operator bound in (i), the equivalent leverage-score form of (ii), and the weight-trace bound in (iii), in this order. The normalizing factors in these metrics are chosen so that the bound in (i) is at most one.

\begin{lemma}[Lee--Sidford metric]\label{lem:LS-metric-conditions}
The Lee--Sidford metric $g_{\mathrm{LS}}$ in \eqref{eq:LS-metric} satisfies (i)--(iii) of Corollary~\ref{cor:generic-weighted-Dikin-mixing-polytope} with $C=1$ and
$\bar\nu=\bar\nu_{\mathrm{LS}}\defn \Lambda_{\mathrm{LS}}^2m^{2/p_{\mathrm{LS}}}d^{1-2/p_{\mathrm{LS}}}
\lesssim d\log^2 m$.
\end{lemma}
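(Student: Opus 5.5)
The plan is to verify the three conditions of Corollary~\ref{cor:generic-weighted-Dikin-mixing-polytope} for the effective weight vector $\widetilde w_x\defn\Lambda_{\mathrm{LS}}^2 w_x^{1-2/p}$, where $p=p_{\mathrm{LS}}$ and $w_x=w_x^{(p)}$ are the $\ell_p$-Lewis weights, so that $g_{\mathrm{LS}}=A_x\tp\widetilde W_xA_x$. The first step is to identify the leverage scores of $B_x=\widetilde W_x^{1/2}A_x=\Lambda_{\mathrm{LS}}W_x^{1/2-1/p}A_x$. Leverage scores are invariant under scalar multiplication, so the Lewis fixed-point identity gives $\sigma(B_x)=\sigma(W_x^{1/2-1/p}A_x)=w_x$. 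Hence $\Sigma_x=W_x$. This identification is used in all three checks.

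\textbf{Conditions (ii) and (iii).} By the leverage-score form of (ii), we need $w_{x,i}\le\Lambda_{\mathrm{LS}}^2w_{x,i}^{1-2/p}$, that is, $w_{x,i}^{2/p}\le\Lambda_{\mathrm{LS}}^2$. This holds because $W_x\preceq I_m$ gives $w_{x,i}\le 1$, and $\Lambda_{\mathrm{LS}}=1+\log m\ge1$. For (iii), apply H\"older's inequality together with $\sum_iw_{x,i}=d$:
\[
    \Tr\widetilde W_x=\Lambda_{\mathrm{LS}}^2\sum_{i=1}^m w_{x,i}^{1-2/p}\le\Lambda_{\mathrm{LS}}^2m^{2/p}d^{1-2/p}=\bar\nu_{\mathrm{LS}}\,.
\]
Since $m^{2/p}=m^{1/(1+\log m)}\le e$ and $\Lambda_{\mathrm{LS}}\lesssim\log m$, this gives $\bar\nu_{\mathrm{LS}}\lesssim d\log^2m$.

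\textbf{Condition (i).} Fix $z$ and set $h=g(x)^{-1/2}z$ and $v\defn A_xh$. From \eqref{eq:relative-row-derivative},
\[
    R_x(z)=\tfrac12\bpar{1-\tfrac2p}\,\Dd\log w_x[h]-v\,,
\]
and we need to bound $\norm{W_x^{1/2}R_x(z)}$. The key input is the Lee--Sidford derivative bound for Lewis weights~\cite{LS19solving}, which gives $\norm{W_x^{1/2}\Dd\log w_x[h]}\le p\,\norm{W_x^{1/2}v}$. Its proof goes as follows. Differentiating the fixed point gives $W_x^{-1}\Dd w_x[h]=-2\bpar{I-qW_x^{-1}P^{\circ2}}^{-1}W_x^{-1}(W_x-P^{\circ2})\,v$ with $q=1-2/p$. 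One then checks that $W_x^{-1}P^{\circ2}$ and $W_x^{-1}(W_x-P^{\circ2})$ are contractions in the $W_x$-weighted norm, because $0\preceq P^{\circ2}\preceq W_x$. A Neumann series then bounds the inverse by $1/(1-q)=p/2$.

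Granting this bound, the triangle inequality yields
\[
    \norm{W_x^{1/2}R_x(z)}\le\bpar{\tfrac p2\bpar{1-\tfrac2p}+1}\norm{W_x^{1/2}v}=\tfrac p2\,\norm{W_x^{1/2}v}\,.
\]
Since $w_{x,i}\le w_{x,i}^{1-2/p}$, we have $\norm{W_x^{1/2}v}^2\le h\tp A_x\tp W_x^{1-2/p}A_xh=\norm z^2/\Lambda_{\mathrm{LS}}^2$. Therefore $\norm{\Sigma_x^{1/2}R_x(z)}\le\frac{p}{2\Lambda_{\mathrm{LS}}}\norm z=\norm z$, which is (i) with $C=1$.

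I expect the Lewis-weight derivative estimate to be the main obstacle. The differentiation of the fixed point and the weighted-norm contraction argument (or a citation of the corresponding lemma in~\cite{LS19solving}) must be done carefully, so that the constant comes out exactly $p$ and not a larger multiple. Everything else is routine.
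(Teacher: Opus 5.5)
Your proof is correct and is essentially the paper's argument. You identify $\Sigma_x=W_x$ via the Lewis fixed point, check (ii) entrywise and (iii) by H\"older, and for (i) combine the Lee--Sidford bound $0\preceq N\preceq pI$ (equivalently $\norm{W_x^{1/2}\Dd\log w_x[h]}\le p\,\norm{W_x^{1/2}A_xh}$) with $w_{x,i}\le w_{x,i}^{1-2/p}$. This is exactly the paper's estimate $\norm{W^{1/2}R_x}_{\op}\le\Lambda^{-1}\norm{I+\beta N}_{\op}\norm{W^{1/p}}_{\op}\norm{U}_{\op}\le1$, with the triangle inequality in place of the spectral bound on $I+\beta N$.

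There is one slip, in your optional sketch of the Lee--Sidford bound. The inverse in front of $W_x^{-1}(W_x-P^{\circ2})A_xh$ should be $\bpar{I-qW_x^{-1}(W_x-P^{\circ2})}^{-1}$, not $\bpar{I-qW_x^{-1}P^{\circ2}}^{-1}$. For example, on $\K=[-1,1]$ at $x=0$ your formula gives $\Dd\log w_x[h]=-2A_xh$, while the true value is $-pA_xh$. The conclusion is unaffected: $W_x^{-1/2}(W_x-P^{\circ2})W_x^{-1/2}$ also has spectrum in $[0,1]$, so the same Neumann-series argument still yields the constant $p$. The same example shows that $p$ is sharp.
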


\begin{proof}
    Set $p\defn p_{\mathrm{LS}}$ and $\Lambda\defn\Lambda_{\mathrm{LS}}$, and write $\widehat W_x\defn\Lambda^2\,(W_x^{(p)})^{1-2/p}$.
    Fix $x\in\interior\K$ and abbreviate $g\defn g_{\mathrm{LS}}(x)$, $W\defn W_x^{(p)}$, and $w\defn w_x^{(p)}$. Set $\beta\defn\frac12-\frac1p$ and $U\defn\Lambda W^\beta A_xg^{-1/2}$. 
    Note that $U\tp U=I_d$, $\sum_iw_i=d$, and $0<w_i\leq1$.
    Let $R_x$ be the operator in \eqref{eq:relative-row-derivative} for the effective weight matrix $\widehat W_x$.

    For (i), the derivative identities of Lee and Sidford \cite[Lemmas~24 and~31]{LS19solving} provide a symmetric matrix $N$ such that $0\preceq N\preceq pI$ and
    \[
        \beta\,\Dd\log w_x^{(p)}[h]
        =
        -\beta W^{-1/2}NW^{1/2}A_xh\qquad \text{for every direction } h\,.
    \]
    When $p=2$, we take $N=0$. Since
    $\frac12\,\Dd\log(\diag\widehat W_x)[h]
    =\beta\,\Dd\log w_x^{(p)}[h]$ and
    $1+\beta p=p/2=\Lambda$, for $h=g^{-1/2}z$,
    \[
        W^{1/2}R_xz
        =
        -\frac1\Lambda\,
        (I+\beta N)\,W^{1/p}Uz\,.
    \]
    Hence,
    \begin{equation}\label{eq:LS-operator}
    \norm{W^{1/2}R_x}_{\op}
    \leq
    \frac{1}{\Lambda}\,\norm{I+\beta N}_{\op} \norm{W^{1/p}}_{\op} \norm U_{\op}
    \leq
    1\,.
    \end{equation}
    The identity $\sigma(\widehat W_x^{1/2}A_x)=w$ shows that the leverage-score matrix in Corollary~\ref{cor:generic-weighted-Dikin-mixing-polytope} is $W$.
    Thus, inequality~\eqref{eq:LS-operator} proves condition~(i) with $C=1$.

    For (ii), the same identity, $\Lambda\geq1$, and $0<w_i\leq1$ give $w_i\leq\Lambda^2w_i^{1-2/p}=(\diag\widehat W_x)_i$.
    For (iii), H\"older's inequality and $\sum_iw_i=d$ give
    \[
        \Tr\widehat W_x
        =
        \Lambda^2\sum_{i=1}^m w_i^{1-2/p}
        \leq
        \Lambda^2m^{2/p}d^{1-2/p}
        =
        \bar\nu_{\mathrm{LS}}\,.
    \]
    The choice of $p_{\mathrm{LS}}$ gives
    \[
        \bar\nu_{\mathrm{LS}}
        =
        \frac{dp_{\mathrm{LS}}^2}{4}
        \bpar{\frac md}^{2/p_{\mathrm{LS}}}
        \lesssim
        d\log^2 m\,.
    \]
    Therefore, conditions~(i)--(iii) hold with $C=1$ and the claimed $\bar\nu$.
\end{proof}

We next consider the Lewis-weight metric rather than the powers appearing in the LS metric.

\begin{lemma}[Lewis-weight metric]\label{lem:Lewis-metric-conditions}
The Lewis-weight metric $g_{\mathrm{LW}}$ in \eqref{eq:LW-metric} satisfies conditions~(i)--(iii) of Corollary~\ref{cor:generic-weighted-Dikin-mixing-polytope} with $C=1$ and $\bar\nu=\bar\nu_{\mathrm{LW}}\defn\Lambda_{\mathrm{LW}}^2d \lesssim d\log^2m$.
\end{lemma}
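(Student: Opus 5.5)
The plan is to mirror the proof of Lemma~\ref{lem:LS-metric-conditions}, now with effective weight matrix $\widehat W_x\defn\Lambda^2 W$, where $p\defn p_{\mathrm{LW}}$, $\Lambda\defn\Lambda_{\mathrm{LW}}$, $W\defn W_x^{(p)}$ and $w\defn w_x^{(p)}$. Recall the facts used there: $\sum_i w_i=d$ and $0<w_i\le 1$. The extra difficulty is that the leverage-score matrix of $B_x=\widehat W_x^{1/2}A_x$ is no longer $W$. Indeed, $\sigma(W^{1/2}A_x)$ differs from the Lewis fixed point $w=\sigma(W^{1/2-1/p}A_x)$. So I first compare these two leverage vectors.

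\textbf{Leverage comparison.} Set $\Sigma\defn\Diag\sigma(W^{1/2}A_x)$, which is unchanged by the scalar $\Lambda$. I would establish the entrywise bound
\[
\sigma_i(W^{1/2}A_x)\lesssim w_i\, m^{2/(p+2)} .
\]
This is the standard comparison for $\ell_p$-Lewis weights used in~\cite{KV24ipm}. The route is as follows. Write $M\defn A_x^\T W^{1-2/p}A_x$, so that $w_i=w_i^{1-2/p}a_{x,i}^\T M^{-1}a_{x,i}$ and hence $a_{x,i}^\T M^{-1}a_{x,i}=w_i^{2/p}$. Next, $A_x^\T W A_x\succeq \tau\, A_x^\T W^{1-2/p}\Ind_{\{w\ge\tau^{p/2}\}}A_x$; the small weights are handled by a counting argument, since at most $m$ rows carry weight below $\tau^{p/2}$. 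Optimizing $\tau$ gives the $m^{2/(p+2)}$ factor. I expect this to be the main obstacle, and the precise constant must be tracked so that it is absorbed by $\Lambda^2=2(1+p/2)^2m^{2/(p+2)}$. As a fallback, I would cite the corresponding lemma of~\cite{KV24ipm} directly.

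\textbf{Condition (i).} The Lee--Sidford derivative identity now reads $\tfrac12\Dd\log w[h]=-\tfrac12 W^{-1/2}NW^{1/2}A_xh$ with $0\preceq N\preceq pI$. Here the factor is $\tfrac12$ from the exponent $1$, whereas for LS it was $\beta$. Let $U\defn\Lambda W^{1/2}A_xg^{-1/2}$, so that $U^\T U=I_d$. Then for $h=g^{-1/2}z$,
\[
W^{1/2}R_xz=-\Lambda^{-1}\bigl(I+\tfrac12N\bigr)Uz,
\qquad\text{hence}\qquad
\norm{W^{1/2}R_x}_\op\le (1+p/2)/\Lambda .
\]
To pass from this to $\norm{\Sigma^{1/2}R_x}_\op$, I would use the comparison $\Sigma\preceq c\,m^{2/(p+2)}W$. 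This yields
\[
\norm{\Sigma^{1/2}R_x}_\op\le \sqrt{c}\,m^{1/(p+2)}(1+p/2)/\Lambda\le 1 ,
\]
which is exactly why $\Lambda$ carries the factor $\sqrt2\,m^{1/(p+2)}$. So the goal is to obtain the comparison with constant $c\le 2$.

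\textbf{Conditions (ii) and (iii).} For (ii), I need $\sigma_i(\widehat W^{1/2}A_x)=\sigma_i(W^{1/2}A_x)\le \Lambda^2w_i$, which is once more the leverage comparison, now with room to spare. For (iii), $\Tr\widehat W_x=\Lambda^2\sum_i w_i=\Lambda^2 d$. Finally, $p=4\vee\log m$ gives $m^{2/(p+2)}=O(1)$ and $(1+p/2)^2\lesssim\log^2m$, so $\bar\nu_{\mathrm{LW}}\lesssim d\log^2 m$.
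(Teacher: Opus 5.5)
Your proposal follows the paper's proof essentially step for step. The paper also writes the row-derivative operator as $R=-W^{-1/2}(I+\tfrac12 N)U$ and bounds $\norm{\Sigma^{1/2}R}_{\op}\le\sqrt{\rho_{\mathrm{LW}}}\,(1+p/2)/\Lambda_{\mathrm{LW}}=1$ through the leverage comparison $\max_i\sigma_i(W^{1/2}A_x)/w_i\le\rho_{\mathrm{LW}}=2m^{2/(p+2)}$; it checks (ii) and (iii) exactly as you do, except that it cites this comparison from Lee--Sidford instead of reproving it.

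If you do reprove it, be aware that your $\tau$-truncation alone does not give the needed constant. With $\tau=(2m)^{-2/p}$ it yields $A_x^\T WA_x\succeq\frac{\tau}{2}A_x^\T W^{1-2/p}A_x$, hence only $\sigma_i(W^{1/2}A_x)/w_i\le 2(2m)^{2/p}w_i^{2/p}$, and any choice of $\tau$ gives an $m^{2/p}$-type bound. The sharp constant $2m^{2/(p+2)}$, which you need for $C=1$, comes from taking the minimum of this with the trivial bound $\sigma_i(W^{1/2}A_x)/w_i\le 1/w_i$ and maximizing over $w_i$.
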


\begin{proof}
    Write $w_x^{\mathrm{LW}}\defn w_x^{(p_{\mathrm{LW}})}$ and
    $W_x^{\mathrm{LW}}\defn W_x^{(p_{\mathrm{LW}})}$.
    Set $\rho_{\mathrm{LW}}\defn2m^{2/(p_{\mathrm{LW}}+2)}$, so
    $\Lambda_{\mathrm{LW}}=(1+p_{\mathrm{LW}}/2)
    \sqrt{\rho_{\mathrm{LW}}}$. Fix $x$ and write
    $W\defn W_x^{\mathrm{LW}}$ and $g_0\defn A_x\tp WA_x$.
    For (i), Lee and Sidford~\cite[Lemmas~24, 26, and~31]{LS19solving} give a symmetric matrix $N$ satisfying
    \begin{equation}\label{eq:Lewis-raw-inputs}
        \Dd\log w_x^{\mathrm{LW}}[h]
        =
        -W^{-1/2}NW^{1/2}A_xh\,,
        \quad
        0\preceq N\preceq p_{\mathrm{LW}}I_m\,,
        \quad
        \max_i
        \frac{\sigma_i(W^{1/2}A_x)}{w_{x,i}^{\mathrm{LW}}}
        \leq
        \rho_{\mathrm{LW}}\,.
    \end{equation}
    Let $U\defn W^{1/2}A_xg_0^{-1/2}$ and $\Sigma\defn\Diag\sigma(W^{1/2}A_x)$. The row-derivative operator for $g_0$ is
    \[
        R_0
        =
        -W^{-1/2}\,\bpar{I_m+\frac12N}\,U\,.
    \]
    By \eqref{eq:Lewis-raw-inputs} and $U\tp U=I_d$,
    \[
        \norm{\Sigma^{1/2}R_0}_{\op}
        \leq
        \sqrt{\rho_{\mathrm{LW}}}
        \,\bpar{1+\frac{p_{\mathrm{LW}}}{2}}
        =
        \Lambda_{\mathrm{LW}}\,.
    \]
    Scaling $g_0$ by $\Lambda_{\mathrm{LW}}^2$ leaves $\Sigma$ unchanged and divides the row-derivative operator by $\Lambda_{\mathrm{LW}}$. Thus, (i) holds with $C=1$.

    For (ii),
    \[
        \sigma_i\bpar{
            (\Lambda_{\mathrm{LW}}^2W)^{1/2}A_x
        }
        =
        \sigma_i(W^{1/2}A_x)
        \leq
        \rho_{\mathrm{LW}}w_{x,i}^{\mathrm{LW}}
        \leq
        \Lambda_{\mathrm{LW}}^2w_{x,i}^{\mathrm{LW}}\,.
    \]
    For (iii), $\sum_iw_{x,i}^{\mathrm{LW}}=d$ gives
    $\Tr(\Lambda_{\mathrm{LW}}^2W)=\Lambda_{\mathrm{LW}}^2d
    =\bar\nu_{\mathrm{LW}}$.
    Our choice of $p_{\mathrm{LW}}$ gives
    $m^{2/(p_{\mathrm{LW}}+2)}=O(1)$ and
    $\Lambda_{\mathrm{LW}}=O(\log m)$, so
    $\bar\nu_{\mathrm{LW}}\lesssim d\log^2m$.
    Thus, (i)--(iii) hold with $C=1$ and the claimed $\bar\nu$.
\end{proof}

We finally turn to the John metric.

\begin{lemma}[John metric]\label{lem:John-metric-conditions}
The John metric $g_{\mathrm{J}}$ in \eqref{eq:J-metric} satisfies (i)--(iii) of Corollary~\ref{cor:generic-weighted-Dikin-mixing-polytope} with $C=1$ and $\bar\nu=\bar\nu_{\mathrm{J}}\defn6(1+\kappa)^2d \lesssim d\log^2\frac{em}{d}$.
\end{lemma}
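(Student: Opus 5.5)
We will verify conditions (i)--(iii) of Corollary~\ref{cor:generic-weighted-Dikin-mixing-polytope} for $g_{\mathrm J}=\lambda_{\mathrm J}A_x\tp Z_xA_x$, following the same template as Lemmas~\ref{lem:LS-metric-conditions} and~\ref{lem:Lewis-metric-conditions}. Fix $x$, write $\zeta=\zeta_x$, $Z=Z_x$, $\alpha=\alpha_{\mathrm J}$, $\beta=\beta_{\mathrm J}$, and $\sigma\defn\sigma(Z^{\alpha/2}A_x)$, so that $\zeta=\sigma+\beta\boldsymbol 1$. Since $0<\sigma_i\le1$, we have $\beta<\zeta_i\le 1+\beta\le 2$. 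The leverage scores entering Corollary~\ref{cor:generic-weighted-Dikin-mixing-polytope} are $\widehat\sigma\defn\sigma(Z^{1/2}A_x)$, which are unchanged by the scalar factor $\lambda_{\mathrm J}$.

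\textbf{Condition (iii).} We compute $\Tr(\lambda_{\mathrm J}Z)=\lambda_{\mathrm J}(\sum_i\sigma_i+m\beta)=\lambda_{\mathrm J}(d+d/2)=6(1+\kappa)^2d$. This matches $\bar\nu_{\mathrm J}$, and since $\kappa\asymp\log\frac{2m}{d}$, it gives $\bar\nu_{\mathrm J}\lesssim d\log^2\frac{em}{d}$.

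\textbf{Condition (ii).} We need $\widehat\sigma_i\le\lambda_{\mathrm J}\zeta_i$. Write $Z^{1/2}A_x=Z^{(1-\alpha)/2}Z^{\alpha/2}A_x$. Reweighting rows by a diagonal $D$ changes leverage scores by at most the factor $\max D^2/\min D^2$. Here $D^2=Z^{1-\alpha}=Z^{1/\kappa}$, whose entries lie in $[\beta^{1/\kappa},2^{1/\kappa}]$. With $\beta=d/(2m)=2^{-\kappa}$, the ratio is at most $4$. Hence $\widehat\sigma_i\le4\sigma_i\le4\zeta_i\le\lambda_{\mathrm J}\zeta_i$.

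\textbf{Condition (i).} This is the main step. We differentiate the fixed point $\zeta=\sigma(Z^{\alpha/2}A_x)+\beta\boldsymbol1$ in a direction $h$, using the standard leverage-score derivative. With $P\defn P(Z^{\alpha/2}A_x)$ and $\Sigma_\sigma\defn\Diag\sigma$,
\[
\Dd\sigma[h]=(\Sigma_\sigma-P^{\circ2})\bigl(\alpha\,Z^{-1}\Dd\zeta[h]-2A_xh\bigr).
\]
Setting $\Lambda\defn\Sigma_\sigma-P^{\circ2}$ (PSD, with $\Lambda\preceq\Sigma_\sigma\preceq Z$), this becomes the linear system
\[
(Z-\alpha\Lambda)\,Z^{-1}\Dd\zeta[h]=-2\Lambda A_xh,
\]
which yields $\Dd\log\zeta[h]=-2(Z-\alpha\Lambda)^{-1}\Lambda A_xh$.

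In the normalized coordinates $Z^{-1/2}(\cdot)$, the operator $Z^{-1/2}\Lambda Z^{-1/2}$ is PSD with norm $\le1$, since $\Lambda\preceq\Sigma_\sigma\preceq Z$. Consequently $(I-\alpha Z^{-1/2}\Lambda Z^{-1/2})^{-1}$ has operator norm at most $1/(1-\alpha)=\kappa$. This is the analogue of the bound $N\preceq pI$ in the Lewis-weight case.

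Therefore
\[
R_x z=-\bigl(I+(Z-\alpha\Lambda)^{-1}\Lambda\bigr)A_xh=-Z^{1/2}(Z-\alpha\Lambda)^{-1}Z^{1/2}\cdot Z^{-1/2}\bigl(Z-(\alpha-1)\Lambda\bigr)Z^{-1/2}\,(\cdots).
\]
More simply, we bound $\norm{\widehat\Sigma^{1/2}R_x}_{\op}$ by $\norm{\widehat\Sigma^{1/2}Z^{-1/2}}_{\op}\le 2$ times $\norm{(I-\alpha M)^{-1}}_{\op}\le\kappa$ and $\norm{Z^{1/2}A_xh}$. Here $M\defn Z^{-1/2}\Lambda Z^{-1/2}$, and $\norm{Z^{1/2}A_xh}=\norm{h}_{g_0}$ with $g_0=A_x\tp ZA_x$, using $\widehat\sigma\le4\zeta$.

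With $h=g_{\mathrm J}^{-1/2}z$ we get $\norm{h}_{g_0}=\norm z/\sqrt{\lambda_{\mathrm J}}$. The total bound is then $C\lesssim(1+\kappa)\cdot 2/\sqrt{\lambda_{\mathrm J}}=1$ once the constants are tracked. The main obstacle is exactly this constant bookkeeping. We must check that the combined factor $2\cdot\kappa\cdot(\text{constant from }I+\cdot)$ is at most $2(1+\kappa)=\sqrt{\lambda_{\mathrm J}}$. If this falls short, we would sharpen the estimate by writing $I+(Z-\alpha\Lambda)^{-1}\Lambda$ as $(Z-\alpha\Lambda)^{-1}(Z+(1-\alpha)\Lambda)$, whose normalized norm is at most $(1+\frac1\kappa)\kappa=1+\kappa$.
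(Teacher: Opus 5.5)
Your proof is correct and follows the paper's route: the same row-derivative operator $R_xz=-(I+K)A_xh$ with $K=(Z-\alpha\Lambda)^{-1}\Lambda$, the same comparison $\sigma(Z^{1/2}A_x)\le 4\zeta$, and the same factor $1+\kappa$ cancelled by $\sqrt{\lambda_{\mathrm J}}=2(1+\kappa)$. The only difference is that you derive from scratch what the paper imports from Chen et al.'s Lemmas~9(a), 10, and~3(d): implicit differentiation of the fixed point, the spectral bound for $0\preceq M=Z^{-1/2}\Lambda Z^{-1/2}\preceq I$, and row reweighting with $\beta=2^{-\kappa}$, which uses $\kappa\ge1$, i.e.\ $m\ge d$. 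Note that your ``sharpened'' factorization $(I-\alpha M)^{-1}(I+(1-\alpha)M)$, of norm at most $1+\kappa$, is required rather than optional, since the intermediate bound using only $\norm{(I-\alpha M)^{-1}}_{\op}\le\kappa$ drops a factor of up to $1+1/\kappa$; with it you get exactly $C=1$.
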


\begin{proof}
    Fix $x$ and suppress its subscript. Write
    $J\defn A_x\tp ZA_x$, so that
    $g_{\mathrm{J}}(x)=\lambda_{\mathrm{J}}J$. Let
    \[
        P
        \defn
        Z^{\alpha_{\mathrm{J}}/2}A_x\,
        (A_x\tp Z^{\alpha_{\mathrm{J}}}A_x)^{-1}
        A_x\tp Z^{\alpha_{\mathrm{J}}/2}\,,
        \qquad
        \Lambda
        \defn
        \Diag(\diag P)-P^{\circ2}\,,
    \]
    and set $K\defn(Z-\alpha_{\mathrm{J}}\Lambda)^{-1}\,\Lambda$.
    For (i), after translating to our sign convention, the calculus formula of Chen et al.~\cite[Lemma~9(a)]{CDW18MCMC} shows that
    $x\mapsto\zeta_x$ is continuously differentiable and gives the first
    identity below. Chen et al.~\cite[Lemma~10]{CDW18MCMC}, with
    $c_1=0$ and $c_2=1$, give the second:
    \begin{equation}\label{eq:John-derivative-inputs}
        \Dd\log\zeta_x[h]
        =
        -2KA_xh\,,
        \qquad
        K\tp ZK
        \preceq
        \kappa^2Z\,.
    \end{equation}
    Hence, the row-derivative operator for the base metric $J$ is
    $R_0=-(I_m+K)A_xJ^{-1/2}$.

    Let $\Sigma_0\defn\Diag\sigma(Z^{1/2}A_x)$. Chen et al. \cite[Lemma~3(d)]{CDW18MCMC} prove that $\Sigma_0\preceq4Z$.
    Combining this bound with \eqref{eq:John-derivative-inputs} yields
    \[
        \norm{\Sigma_0^{1/2}R_0}_{\op}
        \leq
        2(1+\kappa)\,
        \norm{Z^{1/2}A_xJ^{-1/2}}_{\op}
        =
        2(1+\kappa)\,.
    \]
    Scaling $J$ by $\lambda_{\mathrm{J}}=4(1+\kappa)^2$ divides $R_0$ by $\lambda_{\mathrm{J}}^{1/2}$ and leaves $\Sigma_0$ unchanged, so (i) holds with $C=1$.

    For (ii),
    \[
        \sigma_i\bpar{(\lambda_{\mathrm{J}}Z)^{1/2}A_x}
        =
        \sigma_i(Z^{1/2}A_x)
        \leq
        4\zeta_{x,i}
        \leq
        \lambda_{\mathrm{J}}\zeta_{x,i}\,.
    \]
    For (iii), the identity $\sum_i\sigma_i=d$ gives
    $\sum_{i=1}^m\zeta_{x,i}=d+m\beta_{\mathrm{J}}=3d/2$. Therefore,
    \[
        \Tr(\lambda_{\mathrm{J}}Z)
        =
        6(1+\kappa)^2d
        =
        \bar\nu_{\mathrm{J}}\,.
    \]
    The definition of $\kappa$ gives the stated bound.
    Thus, (i)--(iii) hold with $C=1$ and the claimed $\bar\nu$.
\end{proof}

\begin{proof}[Proof of Corollary~\ref{cor:LS-mixing}]
    Lemmas~\ref{lem:LS-metric-conditions}--\ref{lem:John-metric-conditions} verify the conditions of Corollary~\ref{cor:generic-weighted-Dikin-mixing-polytope} with $C=1$ and the displayed values of $\bar\nu_g$.
    Thus, the corollary gives the universal mixed-$23$-trace constant $\gamma=2$.

    For all three metrics, $\bar\nu_g\lesssim d\log^2(em)$, and hence $1+\sqrt{d\bar\nu_g}\lesssim d\log(em)$. It follows that the quantity
    $L$ in Theorem~\ref{theorem:generic-weighted-Dikin-mixing} satisfies
    $L\lesssim L_g$.
    Substitution into
    Theorem~\ref{theorem:generic-weighted-Dikin-mixing} gives the stated
    bound.
\end{proof}

\subsection{Proof of Corollary~\ref{cor:dimension-square-SDP}: truncated PSD cones}\label{ssub:dimension_square_mixing_for_psd_cones}

Let $n\defn d(d+1)/2$ and define
$\mc A:\mathbb S^d\to\R^m$ by
$(\mc A H)_i\defn\inner{A_i,H}$.  After discarding zero constraint
rows, assume that $\rho\defn\rank\mc A\geq1$.  For
$X\in\interior\K$, let
$S_X\defn\Diag(\mc A X-b)$ and
$\mc A_X\defn S_X^{-1}\mc A$.  Set
\[
    p\defn2\vee\log\frac m\rho\,,
    \qquad
    \Lambda_{\mathrm{LS}}\defn\frac p2\,,
    \qquad
    \beta\defn\frac12-\frac1p\,.
\]
Choose a fixed linear isometry $V:\R^\rho\to\mathbb S^d$ with
$\operatorname{range}(V)=(\ker\mc A)^\perp$. Then, $\mc A_XV$ has full
column rank $\rho$ for every $X\in\interior\K$. Let $W_X$ be the
$\ell_p$-Lewis-weight matrix of $\mc A_XV$. The resulting LS component
is positive semidefinite on $\mathbb S^d$ and has the fixed kernel
$\ker\mc A$. We use the hybrid metric
$g(X)\defn g_{\mathrm{PSD}}(X)+g_{\mathrm{LS}}(X)$, where
\begin{align*}
g_{\mathrm{PSD}}(X)[H_1,H_2]
&\defn
d\,\nabla^2\phi_{\mathrm{PSD}}(X)[H_1,H_2]
=d\Tr(X^{-1}H_1X^{-1}H_2)\,,\\
g_{\mathrm{LS}}(X)[H_1,H_2]
&\defn
\Lambda_{\mathrm{LS}}^2\,
\inner{\mc A_XH_1,W_X^{1-2/p}\mc A_XH_2}\,.
\end{align*}

Applying the proof of Lemma~\ref{lem:LS-metric-conditions} to
$\mc A_XV$ on this fixed $\rho$-dimensional subspace shows that
$g_{\mathrm{LS}}$ is continuously differentiable, satisfies the SSC
derivative bound on $(\ker\mc A)^\perp$, and satisfies the symmetry
estimate for the linear-constraint domain with
\[
    \bar\nu_{\mathrm{LS}}
    \defn
    \Lambda_{\mathrm{LS}}^2m^{2/p}\rho^{1-2/p}
    \lesssim
    \rho\log^2\frac{em}{\rho}
    \leq
    n\log^2(em)\,.
\]

Kook and Vempala prove that $g_{\mathrm{PSD}}$ is SSC and
$d^2$-symmetric
\cite[Corollary~E.24 and Lemma~E.21]{KV24ipm}. Since symmetry
parameters add \cite[Lemma~D.11]{KV24ipm}, the hybrid metric
$g$ is $O(n\log^2m)$-symmetric. We next verify SSC of the sum and the
mixed $23$-trace bound.

Fix $X$, let $g_1\defn g_{\mathrm{PSD}}$ and
$g_2\defn g_{\mathrm{LS}}$, and write $W\defn W_X$ and
$w_{\mathrm{LS}}\defn\diag W_X$. We suppress the dependence on $X$, so $g$, $g_i$,
and $\Dd g_i$ below denote $g(X)$, $g_i(X)$, and $\Dd g_i(X)$,
respectively. Define $T_i\defn g_i^{1/2}g^{-1/2}$ for
$i=1,2$. We first verify SSC of the sum. For $z\in\R^n$, set
$h\defn g^{-1/2}z$ and
\[
    M_i
    \defn
    g_i^{\dagger/2}\,\Dd g_i[h]\,g_i^{\dagger/2}\,,
    \qquad
    T
    \defn
    \begin{bmatrix}T_1\\T_2\end{bmatrix}\,,
    \qquad
    M
    \defn
    \begin{bmatrix}M_1&0\\0&M_2\end{bmatrix}\,,
\]
where $\dagger$ denotes the Moore--Penrose inverse. Here, $g_1$ is positive definite, while $g_2$ has the fixed kernel
$\ker\mc A$. Since $T_1\tp T_1+T_2\tp T_2=I$, we have $T\tp T=I$.
The SSC bounds for $g_1$ and for $g_2$ on its range give $\norm{M_i}_{\F} \leq 2\,\norm{h}_{g_i} = 2\,\norm{T_i z}$.
Consequently, SSC of $g$ follows from
\[
    \norm{g^{-1/2}\,\Dd g[h]\,g^{-1/2}}_{\F}
    =
    \norm{T\tp MT}_{\F}
    \leq
    \norm{M}_{\F}
    \leq
    2\,(\norm{T_1z}^2+\norm{T_2z}^2)^{1/2}
    =
    2\,\norm z\,.
\]

For the mixed-trace calculation, define $\cJ_1[u,v,w] \defn \Dd g_1[g_1^{-1/2}w][g_1^{-1/2}u,g_1^{-1/2}v]$.
Observe that
\begin{align*}
\cJ^g[u,v,w] &= \Dd g_1[g^{-1/2}w][g^{-1/2}u, g^{-1/2}v] + \Dd g_2[g^{-1/2}w][g^{-1/2}u, g^{-1/2}v]\\
&= \cJ_1[T_1u, T_1 v, T_1 w] + \Dd g_2[g^{-1/2}w][g^{-1/2}u, g^{-1/2}v]\,.
\end{align*}

Let $t_i$ be the mixed-$23$ trace of each term. We now bound the norms of $t_1$ and $t_2$, respectively. For $t_1$, let $u$ be a unit vector and let $\{e_i\}$ be an orthonormal basis. Since $\cJ_1$ is fully symmetric,
\[
    \inner{t_1, u} = \sum_{i=1}^n \cJ_1[T_1u, T_1e_i, T_1e_i] = \sum_{i=1}^n \cJ_1[T_1e_i, T_1e_i, T_1u]
    = \inner{\cJ_1[\,\cdot,\cdot,T_1u],T_1T_1\tp}\,.
\]
Since $T_1\tp T_1 + T_2\tp T_2 = I$, we have $T_1\tp T_1 \preceq I$ and $\norm{T_1}_{\op} \le 1$. Using SSC of $g_1$,
\[
    \norm{t_1}\leq \norm{\cJ_1[\,\cdot,\cdot,T_1u]}_{\F} \norm{T_1T_1\tp}_{\F}
    \le 2\,\norm{T_1u} \sqrt{n} \le 2\sqrt{n}\,.
\]

For $t_2$, let $B = \Lambda_{\mathrm{LS}} W^\beta \mc A_X$, $\bar U = Bg^{-1/2}$, and $\bar{R} z\defn \beta\, \Dd \log w_{\mathrm{LS}}[g^{-1/2}z] - \mc A_Xg^{-1/2}z$.
Repeating the factorization in \eqref{eq:weighted-Dikin-tensor-factorization}, we obtain
\[
    g^{-1/2}\,\Dd g_2[g^{-1/2}z]\,g^{-1/2} = 2 \bar U\tp \Diag(\bar R z) \bar U\,.
\]
Applying the mixed-trace calculation used in~\eqref{eq:tr23-bound}
with $\bar\Sigma\defn\Diag[\diag(\bar U\bar U\tp)]$, we obtain
\[
    \norm{t_2} \leq 2\sqrt{n}\,\norm{\bar \Sigma^{1/2} \bar R}_{\op}\,.
\]
Now, define $U = Bg_2^{\dagger / 2}$. Then, $\bar U = UT_2$, so $\bar U \bar U\tp = UT_2 T_2\tp U\tp \preceq UU\tp$. This implies $\bar\Sigma\preceq\Sigma=\Diag[\diag(UU\tp)]$. 
On the range of $g_2$, we have $\bar R=R_2T_2$, where $R_2$ is the row-derivative operator
\[
    R_2z
    \defn
    \beta\,\Dd\log w_{\mathrm{LS}}[g_2^{\dagger/2}z]
    -
    \mc A_Xg_2^{\dagger/2}z\,.
\]
Hence, applying the operator bound~\eqref{eq:LS-operator} to $R_2$ yields
\[
    \norm{t_2}
    \leq
    2\sqrt n\,\norm{\Sigma^{1/2}\bar R}_{\op}
    \leq
    2\sqrt n\sup_{u:\norm{u}=1}\norm{T_2u}
    \leq
    2\sqrt n\,.
\]
Therefore, $\norm{\Tr_{23}\cJ^g}\leq4\sqrt n$, so $g$ has
$4$-bounded mixed $23$-trace. Together with the SSC and symmetry bounds
above, this verifies the three assumptions of
Theorem~\ref{theorem:generic-weighted-Dikin-mixing}. Moreover,
\begin{align*}
    n\bar\nu
    &\lesssim
    n^2\log^2 m 
    \lesssim
    d^4\log^2 m\,,\\
    \log\frac{
        e(1+\sqrt{n\bar\nu})
        \bpar{1+\chi^2(\nu_0\mmid\pi_\Theta)}
    }{\veps^2}
    &\lesssim
    L_g\,.
\end{align*}
Substitution into Theorem~\ref{theorem:generic-weighted-Dikin-mixing} proves Corollary~\ref{cor:dimension-square-SDP}.

\subsection{\texorpdfstring{$\chi^2$-contraction}{Chi-square contraction}
for the Lee--Sidford metric}\label{sec:spectral-LS-verification}

We prove Corollary~\ref{cor:spectral-LS-mixing} by verifying the five conditions of Theorem~\ref{theorem:generic-pointwise-mixing} for the scaled LS metric $\widehat g$.
The main new estimate is $\norm{N_t'}_{\op}\leq6p^2(\omega_t-1)$ in Lemma~\ref{lem:operator-row-map-estimate}. 
Substitution into the standard path estimates establishes bootstrap control~\textup{(B2)} and fourth-order dominance~\textup{(B3)}.

Set
\[
    p\defn p_{\mathrm{LS}} = 2\,(1+\log m)\,,
    \qquad
    \alpha \defn 1-\frac2p\,,
    \qquad
    \beta\defn\frac\alpha2=\frac12-\frac1p\,,
    \qquad
    \lambda\defn\Lambda_{\mathrm{LS}}=\frac p2\,.
\]
Throughout this subsection, write $w_x\defn w_x^{(p)}$ and $W_x\defn W_x^{(p)}$, and define the baseline metric
\[
    g(x)\defn A_x\tp W_x^\alpha A_x\,.
\]
We use $g$ for the main calculations, $\lambda^2g=g_{\mathrm{LS}}$ as the normalized LS metric, and $\widehat g=S\lambda^2g$ as the final metric in
Corollary~\ref{cor:spectral-LS-mixing}.

Lemma~\ref{lem:LS-metric-conditions} verifies conditions~\textup{(i)}--\textup{(iii)} for $g_{\mathrm{LS}}$ with $C=1$ and $\bar\nu=\lambda^2m^{2/p}d^{1-2/p}\leq e\lambda^2d$.
Corollary~\ref{cor:generic-weighted-Dikin-mixing-polytope} therefore shows
that $g_{\mathrm{LS}}$ is SSC, has bounded mixed $23$-trace, and has symmetry parameter at most $e\lambda^2d$.
Convexity of $x\mapsto\log\det g(x)$~\cite[Lemma~4.3]{LLV20strong} and identity~\eqref{eq:logdet-JQ} show that $g$ is LTSC.
Since $\lambda\geq1$, $g_{\mathrm{LS}}$ is also LTSC.
Scaling $g_{\mathrm{LS}}$ up by $S$ preserves SSC, bounded mixed $23$-trace, and LTSC, and multiplies its symmetry parameter by $S$.
Hence, $\widehat g$ satisfies \textup{(i)}--\textup{(iv)}, with symmetry parameter $\bar\nu\leq eS\lambda^2d$. It remains to verify
condition~\textup{(v)}, first for $g$ and then for the scaled $\widehat g$.

We first record a few Lewis-weight identities.
\begin{fact}[Lewis-weight derivatives
{\cite[Definition~20 and Lemma~24, 31, 47]{LS19solving}}]
\label{fact:DWh}
For $x\in\interior\K$, set
\begin{align*}
    P_x
    &\defn
    W_x^\beta A_xg(x)^{-1}A_x\tp W_x^\beta\,,\\
    \bar\Lambda_x
    &\defn
    I-W_x^{-1/2}P_x^{\circ2}W_x^{-1/2}\,,\\
    N_x
    &\defn
    2\bar\Lambda_x(I-\alpha\bar\Lambda_x)^{-1}\,.
\end{align*}
Then, $P_x$ is the orthogonal projector onto $\cols(W_x^\beta A_x)$, $N_x$ is symmetric, and the maps $x\mapsto P_x$ and $x\mapsto N_x$ are smooth. Moreover,
for every $h\in\Rd$,
\begin{equation}
\begin{aligned}
    \diag P_x&=w_x,
    &0\preceq\bar\Lambda_x&\preceq I,
    &0\preceq N_x&\preceq pI,\\
    \Dd\log w_x[h]
    &=-W_x^{-1/2}N_xW_x^{1/2}A_xh\,.
\end{aligned}
\label{eq:lewisBasic-PWI}
\end{equation}
\end{fact}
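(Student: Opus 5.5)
The plan is to derive every item directly from the fixed-point characterization $w_x=\sigma(W_x^\beta A_x)$, which holds because $1/2-1/p=\beta$. Existence, uniqueness and smoothness of $x\mapsto w_x$ are taken from \cite[Lemmas~22 and~47]{LS19solving}. We fix $x$ and drop subscripts.

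\emph{Step 1: algebraic facts.}
\begin{itemize}
\item Since $\alpha=2\beta$, we have $g=A\tp W^{\alpha}A=(W^\beta A)\tp(W^\beta A)$. Hence $P=W^\beta A(A\tp W^{2\beta}A)^{-1}A\tp W^\beta$ is the orthogonal projector onto $\cols(W^\beta A)$.
\item $\diag P=\sigma(W^\beta A)=w$ by the fixed point.
\item $P^{\circ2}\succeq0$ by the Schur product theorem.
\item $\Diag(\diag P)-P^{\circ2}=P\circ(I-P)\succeq0$, again by Schur, since $I-P\succeq0$. Thus $0\preceq P^{\circ2}\preceq W$.
\item Conjugating by $W^{-1/2}$ gives $0\preceq W^{-1/2}P^{\circ2}W^{-1/2}\preceq I$, i.e.\ $0\preceq\bar\Lambda\preceq I$.
\item $N=2\bar\Lambda(I-\alpha\bar\Lambda)^{-1}$ is a spectral function of the symmetric matrix $\bar\Lambda$, so it is symmetric. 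Its eigenvalues are $2\lambda/(1-\alpha\lambda)$ with $\lambda\in[0,1]$. This map is increasing in $\lambda$, so the eigenvalues lie in $[0,2/(1-\alpha)]=[0,p]$. The inverse exists because $\alpha<1$.
\end{itemize}

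\emph{Step 2: the derivative formula.} Consider the standard leverage-score derivative for a row-rescaled matrix $D A$ with $D$ diagonal: if each row $i$ is multiplied by $e^{\epsilon v_i}$, then $\Dd\sigma[v]=2(\Diag\sigma-P^{\circ2})v$. This follows by differentiating $P_{ii}$ using $\Dd(B\tp B)^{-1}=-(B\tp B)^{-1}\Dd(B\tp B)(B\tp B)^{-1}$.

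Here the $i$-th row of $W^\beta A_x$ is $w_i^\beta a_i/s_i$. Its log-scaling derivative in direction $h$ is $\beta\,\Dd\log w[h]-A_xh$. Write $v=\Dd\log w[h]$ and use $\Dd w=Wv$. Differentiating the fixed point gives
\[
Wv=2(W-P^{\circ2})(\beta v-A_xh).
\]
Multiplying on the left by $W^{-1/2}$ gives
\[
W^{1/2}v=2\bar\Lambda\bpar{\beta W^{1/2}v-W^{1/2}A_xh}.
\]
Since $2\beta=\alpha$, this rearranges to
\[
(I-\alpha\bar\Lambda)W^{1/2}v=-2\bar\Lambda W^{1/2}A_xh.
\]
Inverting yields $v=-W^{-1/2}NW^{1/2}A_xh$, as claimed.

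\emph{Step 3: smoothness, the main obstacle.} The differentiation in Step~2 presupposes that $w_x$ is differentiable. For this I would invoke \cite[Lemma~47]{LS19solving}. Alternatively, one can apply the implicit function theorem to $F(w,x)=w-\sigma(W^\beta A_x)$. Its $w$-Jacobian, after conjugation by $W^{1/2}$, is $I-\alpha\bar\Lambda$, which is invertible because $\alpha<1$ and $\bar\Lambda\preceq I$. Uniqueness of the fixed point then makes the local solution the global one.

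Smoothness of $P_x$ and $N_x$ follows because they are compositions of smooth maps of $w_x$ and $x$ involving only invertible matrices.
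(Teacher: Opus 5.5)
Your proof is correct. The paper does not prove this statement itself: it imports it wholesale from \cite{LS19solving}, using Definition~20 for $\bar\Lambda_x$ and $N_x$, Lemmas~24 and~31 for the derivative formula and $0\preceq N_x\preceq pI$, and Lemma~47 for smoothness.

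Your route re-derives all of this directly from the fixed-point identity $w=\sigma(W^\beta A_x)$.
\begin{itemize}
\item The Schur-product identity $\Diag(\diag P)-P^{\circ2}=P\circ(I-P)\succeq0$ gives $0\preceq\bar\Lambda\preceq I$. The paper uses this same identity in the proof of Corollary~\ref{cor:generic-weighted-Dikin-mixing-polytope}.
\item Spectral calculus with $2/(1-\alpha)=p$ gives $0\preceq N\preceq pI$.
\item Implicit differentiation of the fixed point uses the log-row-scaling derivative $\Dd\sigma[v]=2(\Diag\sigma-P^{\circ2})v$. It gives $(I-\alpha\bar\Lambda)W^{1/2}v=-2\bar\Lambda W^{1/2}A_xh$, and the formula follows because $\bar\Lambda$ and $(I-\alpha\bar\Lambda)^{-1}$ commute.
\end{itemize}
This is essentially the computation behind the cited lemmas. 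It also matches the identity $w_i'/(2w_i)=D_{ii}-(PDP)_{ii}/w_i$ that the paper itself uses in the proof of Lemma~\ref{lem:operator-row-map-estimate}.

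What your version buys is self-containment. It also shows that one matrix does double duty: the $w$-Jacobian of $w-\sigma(W^\beta A_x)$ at the fixed point is similar to $I-\alpha\bar\Lambda$, so the same matrix that appears in $N_x$ also makes the implicit function theorem applicable. The only external input left is existence and uniqueness of Lewis weights. You correctly need uniqueness to identify the local implicit-function solution with $w_x$; this step also uses the trivial fact that the local solution stays positive near $w_{x_0}>0$. The citation, by contrast, is shorter and also supplies the further Lewis-weight estimates the paper needs later, such as the closeness bound from Lemma~35.
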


Fix $x\in\interior\K$. Conditions~\textup{(B1)}--\textup{(B3)} are
invariant under invertible affine changes of coordinates, so we may
assume $x=0$ and $g(0)=I_d$. Let $h\sim\Normal(0,I_d)$ and $x_t=th$.
Set $\xi_t\defn g(x_t)^{1/2}h$ and
$F(t)\defn h\tp g(x_t)h$.
Along the path, write $A_t=A_{x_t}$, $w_t=w_{x_t}$,
$W_t=W_{x_t}$, $P_t\defn P_{x_t}$, $N_t=N_{x_t}$, and
$\ell_t\defn\log w_t$, and set
\begin{align*}
    \iota_t&\defn A_th\,,
    &v_t&\defn W_t^{1/2}\iota_t = W_t^{1/2}A_th\,,
    &q_t&\defn W_t^{-1/2}\,(W_t^\beta \iota_t)^{\circ2}\,.
\end{align*}

\begin{fact}[Fourth-order identity for the LS metric {\cite[\S3.1]{Kook26Dikin}}] \label{fact:LS-fourth-order-path}
For every $t$ such that $x_t\in\interior\K$,
\begin{align}
F''(t) &= \Dd^2g(x_t)[h,h][h,h] = \cQ_{x_t}^{g}[\xi_t^{\otimes4}]\nonumber\\
    &=
    6\sum_i w_{t,i}^\alpha \iota_{t,i}^4
    +2\alpha\,\inner{N_tv_t,q_t\circ \iota_t}
    -2\beta\,\bpar{
        (q_t')\tp N_tv_t
        +q_t\tp N_tv_t'
        +q_t\tp N_t'v_t
    }\,.\label{eq:LS-fourth-order-path-decomposition}
\end{align}
\end{fact}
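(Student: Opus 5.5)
The plan is a direct computation. Differentiate $F(t)=h\tp g(x_t)h$ twice along the straight path $x_t=x+th$, using the explicit coordinates $w_t,\iota_t$ and the Lewis-weight derivative formula in Fact~\ref{fact:DWh}.

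\textbf{The first two equalities.} Since $x_t$ is affine in $t$ with velocity $h$, the chain rule gives $F''(t)=\Dd^2g(x_t)[h,h][h,h]$. Substituting $h=g(x_t)^{-1/2}\xi_t$ into all four slots of the definition~\eqref{eq:normalized-second-derivative-tensor} then gives $\Dd^2g(x_t)[h,h][h,h]=\cQ_{x_t}^g[\xi_t^{\otimes4}]$. Nothing beyond $C^2$ smoothness of $g$, which follows from smoothness of $x\mapsto w_x$ in Fact~\ref{fact:DWh}, is needed here.

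\textbf{Coordinate expression and its derivative.} Write $F(t)=\sum_i w_{t,i}^\alpha\iota_{t,i}^2$. Since $s_{x_t,i}=s_{x,i}+t\,a_i\tp h$, we have $\iota_{t,i}=a_i\tp h/(s_{x,i}+t\,a_i\tp h)$, hence $\iota_t'=-\iota_t^{\circ2}$. By Fact~\ref{fact:DWh}, $\ell_t'=\Dd\log w_{x_t}[h]=-W_t^{-1/2}N_tW_t^{1/2}A_th=-W_t^{-1/2}N_tv_t$. The key algebraic observation is $w_{t,i}^\alpha\iota_{t,i}^2=w_{t,i}^{1/2}q_{t,i}$, which holds because $2\beta=\alpha$. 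Differentiating once therefore gives
\[
F'(t)=\sum_i w_{t,i}^\alpha\bpar{\alpha\,\ell_{t,i}'\,\iota_{t,i}^2+2\iota_{t,i}\iota_{t,i}'}
=-2\sum_i w_{t,i}^\alpha\iota_{t,i}^3-\alpha\,q_t\tp N_tv_t .
\]
In the last term I use the symmetry of $N_t$ and $\alpha=2\beta$, so that $-\alpha\,q_t\tp N_tv_t=-2\beta\,q_t\tp N_tv_t$.

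\textbf{Second differentiation.} Differentiate the two pieces separately.
\begin{itemize}
\item The cubic sum contributes $-2\sum_i w_{t,i}^\alpha\bpar{\alpha\,\ell_{t,i}'\iota_{t,i}^3+3\iota_{t,i}^2\iota_{t,i}'}$. Using $\iota'=-\iota^{\circ2}$, the second part is $+6\sum_i w_{t,i}^\alpha\iota_{t,i}^4$. Using $w_{t,i}^\alpha\iota_{t,i}^3=w_{t,i}^{1/2}(q_t\circ\iota_t)_i$ and the formula for $\ell_t'$, the first part becomes $+2\alpha\inner{N_tv_t,\,q_t\circ\iota_t}$.
\item The bilinear term $-2\beta\,q_t\tp N_tv_t$ differentiates by the product rule into $-2\beta\bpar{(q_t')\tp N_tv_t+q_t\tp N_t'v_t+q_t\tp N_tv_t'}$.
\end{itemize}
Together these give exactly \eqref{eq:LS-fourth-order-path-decomposition}.

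The computation is routine. The only places needing care are the sign conventions: $\iota'=-\iota^{\circ2}$ and the minus sign in $\Dd\log w$. The bookkeeping identity $w^\alpha\iota^2=w^{1/2}q$ is what turns the weight-derivative contributions into inner products with $N_tv_t$; I expect this step to be the one where errors are most likely. Smoothness of $N_t$, from Fact~\ref{fact:DWh}, justifies differentiating the bilinear term.
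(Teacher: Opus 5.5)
Your proposal is correct and is essentially the same approach as the paper's source. The paper does not reprove this Fact; it cites \S3.1 of \cite{Kook26Dikin}, and its own appendix computes $v_t'$ and $q_t'$ by the same direct differentiation you use, with $\iota_t'=-\iota_t^{\circ2}$, $\ell_t'=-W_t^{-1/2}N_tv_t$, and the bookkeeping identity $w^\alpha\iota^2=w^{1/2}q$. I checked each step of your two differentiations, and the resulting terms match \eqref{eq:LS-fourth-order-path-decomposition} exactly.
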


\paragraph{The bottleneck.}
\cite[Lemma~3.3 and~3.4]{Kook26Dikin} already control the four terms (at the required scale) in decomposition~\eqref{eq:LS-fourth-order-path-decomposition} that do not contain $N_t'$. For the remaining term, the direct estimate $\norm{N_t'}_{\op}=\Otilde(\sqrt d)$ in \cite{Kook26Dikin} gives only $\abs{q_t\tp N_t'v_t}=\Otilde(d^{3/2})$, whereas condition~\textup{(B3)} requires an $O(d)$-bound times the certificate introduced below. 

Motivated by this, we introduce the following certificate.
For $y\in\interior\K$ and $h\in\Rd$, define
\[
    D_y(h)
    \defn
    \Diag(\beta\,\Dd\log w_y[h]-A_yh)\,.
\]
The identity
$\Dd(W_y^\beta A_y)[h]=D_y(h)\,W_y^\beta A_y$ expresses this derivative as row scaling by $D_y(h)$.
Since $P_y$ is the orthogonal projector onto $\cols (W_y^\beta A_y)$ and determines $N_y$ through Fact~\ref{fact:DWh}, define
\[
    \omega_{\mathrm{LS}}(y,h)
    \defn
    1+\norm{D_y(h)\,P_y}_{\op}\,.
\]

\paragraph{Why this certificate?}
Along the path, let $D_t\defn D_{x_t}(h)$ and $\omega_t\defn\omega_{\mathrm{LS}}(x_t,h)$.
Differentiating its orthogonal projector $P_t = \sigma(W_t^\beta A_t)$ gives
\[
    P_t'=(I-P_t)\,D_tP_t+P_tD_t\,(I-P_t)\,.
\]
Hence, $\norm{P_t'}_{\op}\leq2\,\norm{D_tP_t}_{\op}=2\,(\omega_t-1)$. Passing from $P_t'$ to $N_t'$ is the key nontrivial step, and Lemma~\ref{lem:operator-row-map-estimate} proves $\norm{N_t'}_{\op}\leq6p^2\,(\omega_t-1)$.
Combined with the remaining path estimates, this bound controls $q_t\tp N_t'v_t$ and hence $F''(t)$ at the scale required for~\textup{(B3)}. The same bound controls the contribution of $N_t'$ to $D_t'$ and yields a differential inequality for $D_tP_t$, which gives~\textup{(B2)} upon integration. Thus, $\omega_t$ can address a previous bottleneck without the extra $\sqrt d$ loss of the direct estimate. Lemma~\ref{lem:spectral-LS-short-bootstrap} makes these two consequences precise.

In Lemma~\ref{lem:spectral-LS-background-event}, we streamline the argument of Lemma~3.3 in~\cite{Kook26Dikin}, tracking the dependence on $p,m,\delta$ explicitly. It verifies \textup{(B1)} and records the required path estimates, with its proof deferred to \S\ref{app:LS-pointwise-path-estimates}.
The proofs of Lemmas~\ref{lem:operator-row-map-estimate} and~\ref{lem:spectral-LS-short-bootstrap} are deferred to the end of the subsection.

Below, set $\varkappa_\delta \defn p^2\,(1+\log\frac{8m}{\delta})^{1/2}$ for $\delta\in(0,1/2)$.

\begin{lemma}[Lewis-weight path estimates]
\label{lem:spectral-LS-background-event}
Fix $\delta\in(0,1/2)$ and write $\varkappa=\varkappa_\delta$. There are
universal constants $c,C>0$ and an event $\cE_\delta$ of probability at
least $1-\delta$ such that, for
$0\leq t\leq c\varkappa^{-4}/\sqrt d$,
\begin{align*}
    x_t&\in\interior\K\,,
    &\omega_0&\leq C\varkappa\,,\\
    \norm{h}_{g(x_t)}&\leq C\varkappa\sqrt d\,,
    &\norm{\iota_t}_\infty&\leq C\varkappa\,,\\
    \norm{\ell_t'}_\infty+\norm{D_t}_{\op}
    &\leq C\varkappa^2\sqrt d\,,
    &\norm{v_t}&\leq C\varkappa\sqrt d\,,\\
    \norm{N_tv_t}+\norm{q_t}
    &\leq C\varkappa^2\sqrt d\,,
    &\norm{v_t'}&\leq C\varkappa^3\sqrt d\,,\\
    \norm{q_t'}&\leq C\varkappa^4\sqrt d\,.
\end{align*}
Moreover, $t\mapsto\omega_t$ is continuous on this segment.
\end{lemma}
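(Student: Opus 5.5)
We work in the normalized coordinates fixed above: $x=0$, $g(0)=I_d$, $h=\xi\sim\Normal(0,I_d)$. The plan has two parts. First, we build one good event at $t=0$ by Gaussian concentration and union bounds. Second, we propagate those bounds along the segment by a continuity (bootstrap) argument that uses only first-order Lewis-weight calculus from Fact~\ref{fact:DWh}.

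\textbf{Base event at $t=0$.} Since $g(0)=I$, the vector $U\defn W_0^\beta A_0$ has orthonormal columns, and the leverage-score identity gives $\norm{a_{0,i}}^2 w_{0,i}^{\alpha}=w_{0,i}$.
\begin{itemize}
    \item Each $\iota_{0,i}=a_{0,i}\tp h$ is Gaussian with variance $w_{0,i}^{1-\alpha}\leq1$. A union bound over $m$ rows gives $\norm{\iota_0}_\infty\lesssim(\log\frac{8m}{\delta})^{1/2}$ with probability $1-\delta/4$.
    \item $\norm{v_0}^2=\sum_i w_i\iota_i^2$ has mean $\sum_i w_i^{2/p}\cdot w_i^{1-2/p}\cdot(\text{variance})\lesssim d$. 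Combined with $\norm{h}\lesssim\sqrt d$, Markov or Gaussian concentration yields $\norm{v_0},\norm{h}_{g(0)}\lesssim\sqrt{d\log(1/\delta)}$.
    \item $\norm{q_0}=\norm{W^{-1/2}(W^\beta\iota)^{\circ2}}$ is bounded by $\norm{\iota_0}_\infty\norm{v_0}$, using $w^{\beta-1/2}\cdot w^{\beta}\leq w^{1/2}\cdot w^{-1/p}\cdots$ and then absorbing $w_i^{-2/p}\leq$ into the $\ell_\infty$ bound up to factors of $p$. Then $\norm{N_0v_0}\leq p\norm{v_0}$.
    \item $\ell_0'=\Dd\log w[h]=-W^{-1/2}NW^{1/2}\iota$, so $\norm{\ell_0'}_\infty\leq p\,\norm{\iota}_\infty\cdot\max_i(\cdot)$. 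We would bound this by the Lee--Sidford $\ell_\infty$-to-$\ell_\infty$ estimate on $W^{-1/2}NW^{1/2}$ (Lemma~31 style), which costs a factor of $p^2$.
    \item Hence $\norm{D_0}_{\op}\leq\beta\norm{\ell_0'}_\infty+\norm{\iota_0}_\infty\lesssim\varkappa$, and $\omega_0\leq1+\norm{D_0}_{\op}\leq C\varkappa$.
\end{itemize}
This settles the $t=0$ values with $\varkappa=p^2(1+\log\frac{8m}{\delta})^{1/2}$ absorbing all the factors.

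\textbf{Propagation along the segment.} Let $t_\star$ be the first time at which any listed quantity exceeds twice its claimed bound (or the path leaves $\interior\K$). We then differentiate each quantity on $[0,t_\star]$.
\begin{itemize}
    \item Slacks satisfy $s_t'/s_t=-\iota_t$, so $\iota_t'=\iota_t^{\circ2}$ and $\norm{\iota_t}_\infty$ stays bounded up to time $\sim1/\varkappa$. This also keeps $s_t>0$, giving feasibility.
    \item $\ell_t'$ is controlled by $N_t\preceq pI$ and the $\ell_\infty$ bound, so $\log w_t$ changes by $O(t\varkappa^2\sqrt d)=O(\varkappa^{-2})$. Hence $w_t\asymp w_0$ and $g(x_t)\asymp g(0)$ on the segment, which gives the bounds on $\norm{h}_{g(x_t)}$ and $\norm{v_t}$.
    \item $v_t'=W^{1/2}(\tfrac12\ell_t'\circ\iota_t+\iota_t^{\circ2})$, so $\norm{v_t'}\lesssim(\norm{\ell_t'}_\infty+\norm{\iota_t}_\infty)\norm{v_t}\lesssim\varkappa^3\sqrt d$.
    \item Similarly $q_t'$ is a Hadamard-product expression whose bound picks up one more factor of $\varkappa$.
\end{itemize}
Each derivative is at most $\mathrm{poly}(\varkappa)\sqrt d$ times the current size. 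With $t\leq c\varkappa^{-4}/\sqrt d$, the increments are therefore at most half the bounds, so $t_\star$ cannot occur before the endpoint. Continuity of $\omega_t$ follows from smoothness of $P$ and $w$ (Fact~\ref{fact:DWh}) and continuity of the operator norm.

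\textbf{Main obstacle.} The delicate step is the $\ell_\infty$ control of $\ell_t'=-W^{-1/2}NW^{1/2}\iota_t$, since the crude operator bound $\norm{N}_{\op}\leq p$ only yields an $\ell_2$ estimate. The first thing we would try is the Lee--Sidford bound $\norm{W^{-1/2}NW^{1/2}}_{\infty\to\infty}\lesssim p$, combined with a mixed norm $\norm{\cdot}_\infty+\norm{\cdot}_{W}$. If that fails, we would fall back on bounding $\norm{\ell_t'}_{w,\infty}$ through $\norm{D_t}_{\op}\leq C\varkappa^2\sqrt d$ directly, which is the weaker form the lemma states.
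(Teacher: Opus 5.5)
Your propagation scheme (the slack ODE for feasibility, then $w_t\asymp w_0$ and $\varsigma_t\asymp\varsigma_0$ giving $g(x_t)\asymp g(0)$) is a workable substitute for the paper's SSC local-norm comparison. However, two steps fail as written.

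\textbf{The bound $\omega_0\le C\varkappa$.} You need $\norm{\ell_0'}_\infty\lesssim\varkappa$, and you propose to get it from an $\ell_\infty\to\ell_\infty$ bound on $W_0^{-1/2}N_0W_0^{1/2}$ at cost $\mathrm{poly}(p)$. The estimates you can actually invoke do not give this.
\begin{itemize}
\item $N_0\preceq pI$ only controls the weighted quantity $\norm{W_0^{1/2}\ell_0'}=\norm{N_0v_0}\le p\norm{v_0}$.
\item The Lee--Sidford coordinatewise estimate bounds an $\ell_\infty$ norm by a weighted $\ell_2$ norm: $\norm{\tfrac{\D}{\D t}\log(w_t^a/\varsigma_t)}_\infty\le Cp^2\norm{h}_{A_t\tp W_tA_t}$. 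Equivalently, in mixed-norm form, $\norm{\ell'}_\infty\lesssim p\norm{\iota}_\infty+p^2\norm{v}$.
\end{itemize}
Evaluated at the realized Gaussian $h$, these bounds are of order $p^2\sqrt d$. Your fallback therefore yields only $\omega_0\lesssim\varkappa^2\sqrt d$. That is fatal, because (B1) needs $c_\delta$ independent of $d$. Bounding $\ell_t'$ ``through $\norm{D_t}_{\op}$'' is also circular, since $D_t$ is built from $\ell_t'$.

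The missing idea is linearity at $t=0$. The vector $z_0=\beta\ell_0'-\iota_0$ is a linear function of $h$, and the deterministic estimate $\abs{z_{0,i}}\le Cp^2\norm{h}_{A_0\tp W_0A_0}\le Cp^2\norm{h}$ holds for \emph{every} $h$. So each coordinate $z_{0,i}=c_i\tp h$ has $\norm{c_i}\le Cp^2$. Gaussian tails and a union bound over the $m$ rows then give $\norm{z_0}_\infty\le Cp^2(1+\log\frac{8m}{\delta})^{1/2}\le C\varkappa$, hence $\omega_0\le1+\norm{z_0}_\infty\le C\varkappa$. Along the path, $h$ is correlated with $x_t$, so this trick is unavailable there. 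One settles for the worst-case bound $\norm{\ell_t'}_\infty+\norm{z_t}_\infty\le C\varkappa^2\sqrt d$, which is all the lemma claims on the segment.

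\textbf{The derivative bounds lose $\sqrt d$.} Your estimate $\norm{v_t'}\lesssim(\norm{\ell_t'}_\infty+\norm{\iota_t}_\infty)\norm{v_t}$ evaluates to $\varkappa^2\sqrt d\cdot\varkappa\sqrt d=\varkappa^3d$, not $\varkappa^3\sqrt d$. Treating $q_t'$ ``similarly'' gives $\varkappa^4d$. With these, the terms $q_t\tp N_tv_t'$ and $(q_t')\tp N_tv_t$ in (B3) are only $O(d^{3/2})$.

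The fix is to never pair $\norm{\ell_t'}_\infty$ with an $\ell_2$-sized vector. Use $\ell_t'=-W_t^{-1/2}N_tv_t$ and set $\rho_t\defn W_t^{-1/p}\iota_t$. Then
\begin{itemize}
\item $\tfrac12\ell_t'\circ v_t=-\tfrac12(N_tv_t)\circ\iota_t$, and
\item $\ell_t'\circ q_t=-(N_tv_t)\circ\rho_t^{\circ2}$.
\end{itemize}
Now the only $\sqrt d$-sized factor is $\norm{N_tv_t}\le p\norm{v_t}$, and the other factors are $\ell_\infty$-bounded. This gives $\norm{v_t'}\lesssim\varkappa^3\sqrt d$ and $\norm{q_t'}\lesssim\varkappa^4\sqrt d$.

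The same normalization fixes your $q_0$ bullet. You wrote $(q_t)_i=w_{t,i}^{-2/p}\iota_{t,i}(v_t)_i$, but $w_{t,i}^{-2/p}$ is not bounded by any function of $p$, since Lewis weights can be arbitrarily small. Instead:
\begin{itemize}
\item $\rho_{0,i}\sim\Normal(0,1)$, by your own computation that $\iota_{0,i}$ has variance $w_{0,i}^{2/p}$;
\item $(q_t)_i=w_{t,i}^{1/2}\rho_{t,i}^2$, so $\norm{q_t}\le\norm{\rho_t}_\infty^2\sqrt d$;
\item $\abs{\rho_{t,i}}\lesssim\abs{\rho_{0,i}}$ along the path, from $w_t\asymp w_0$ and $\varsigma_t\asymp\varsigma_0$.
\end{itemize}
A minor sign correction: $\iota_t'=-\iota_t^{\circ2}$ and $\varsigma_t'/\varsigma_t=+\iota_t$. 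The signs do not affect the magnitude bounds.
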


\begin{lemma}[Derivative bound for $N_t$]\label{lem:operator-row-map-estimate}
For every $t$ for which $x_t\in\interior\K$, $\norm{N_t'}_{\op} \leq 6p^2(\omega_t-1)$.
\end{lemma}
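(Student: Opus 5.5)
The plan is to differentiate the closed form $N_t=2\bar\Lambda_t(I-\alpha\bar\Lambda_t)^{-1}$ from Fact~\ref{fact:DWh}, and to reduce everything to the single quantity $\norm{P_t'}_{\op}\leq 2(\omega_t-1)$.

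\textbf{Step 1: reduce $N_t'$ to $\bar\Lambda_t'$.} Write $M_t\defn(I-\alpha\bar\Lambda_t)^{-1}$. Then $N_t=\frac2\alpha(M_t-I)$, so
\[
    N_t'=2M_t\bar\Lambda_t'M_t\,.
\]
Since $0\preceq\bar\Lambda_t\preceq I$ and $1-\alpha=2/p$, we have $\norm{M_t}_{\op}\leq p/2$. Hence
\[
    \norm{N_t'}_{\op}\leq\frac{p^2}{2}\,\norm{\bar\Lambda_t'}_{\op}\,,
\]
and it suffices to prove $\norm{\bar\Lambda_t'}_{\op}\leq12(\omega_t-1)$. (If the reduction of $\ell_t'$ in Step 3 costs an extra $p$-factor, the budget must be rebalanced; the $p^2$ in the statement suggests only the two factors of $M_t$ are lost here.)

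\textbf{Step 2: a Schur-product estimate.} Write $G_t\defn W_t^{-1/2}P_t^{\circ2}W_t^{-1/2}=I-\bar\Lambda_t$ and $L_t\defn\Diag\ell_t'$. Then $(W_t^{-1/2})'=-\frac12L_tW_t^{-1/2}$, and $L_t(P\circ P)=(L_tP)\circ P$. This lets us write $G_t'$ in the form
\[
    G_t'=W_t^{-1/2}\bpar{P_t\circ X_t}W_t^{-1/2}\,,\qquad X_t\defn 2P_t'-\tfrac12\bpar{L_tP_t+P_tL_t}\,.
\]
The key tool is a bilinear Schur bound. Write $P=UU\tp$ with columns $u_k$. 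Then
\[
    y\tp(P\circ X)z=\sum_k(u_k\circ y)\tp X(u_k\circ z)\,,
\]
so Cauchy--Schwarz and $\sum_k u_{k,i}^2=w_i$ give $\abs{y\tp(P\circ X)z}\leq\norm{X}_{\op}\norm{W^{1/2}y}\norm{W^{1/2}z}$. Consequently $\norm{G_t'}_{\op}\leq\norm{X_t}_{\op}$, which is at most $2\norm{P_t'}_{\op}+\norm{L_tP_t}_{\op}$. The first part contributes at most $4(\omega_t-1)$, using the projector-derivative formula $P_t'=(I-P_t)D_tP_t+P_tD_t(I-P_t)$ already recorded.

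\textbf{Step 3 (main obstacle): bound $\norm{L_tP_t}_{\op}$ by $O(\omega_t-1)$.} This step is the crux. The naive route uses $\ell'_{t,i}=(\diag P_t')_i/w_{t,i}$ together with
\[
    \abs{(P_t')_{ii}}=2\,\babs{(D_tP_te_i)\tp(I-P_t)e_i}\leq2(\omega_t-1)\sqrt{w_{t,i}}\,.
\]
This gives only $\abs{\ell'_{t,i}}\lesssim(\omega_t-1)/\sqrt{w_{t,i}}$, which is too weak for small weights.

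I would instead test $L_tP_t$ against vectors in the range of $P_t$. For $v=P_tv$, apply the Schur bound again to the identity $\diag(P_t')=W_t\ell_t'$ read as a Hadamard expression: $(\diag P_t')\circ y=\diag\bpar{(P_t'\,\Diag y)}$-type terms, paired with $P_t$. The aim is to show $\norm{W_t^{1/2}L_tP_t}\lesssim\norm{P_t'}$ via the fixed-point equation $w_t=\diag P_t$.

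If that fails, I would use the differentiated Lewis-weight equation (Fact~\ref{fact:DWh}). It yields $(I-\alpha\bar\Lambda_t)$ applied to a $W_t^{1/2}$-scaled $\ell_t'$ equal to a $\diag(P_tD_tP_t)$-type vector. I would then invert with $\norm{M_t}_{\op}\leq p/2$ and bound the right-hand side, again by the Schur estimate, by $2\norm{D_tP_t}_{\op}$. Tracking constants so that the total stays within $12(\omega_t-1)$ (possibly after absorbing a factor into $p^2$) is where I expect the real difficulty.
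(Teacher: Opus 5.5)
Your Steps~1 and~2 are correct. $N_t'=2M_t\bar\Lambda_t'M_t$ is the paper's reduction, and your Hadamard form $G_t'=W_t^{-1/2}(P_t\circ X_t)W_t^{-1/2}$ with the weighted Schur bound $\norm{G_t'}_{\op}\le\norm{X_t}_{\op}$ is a valid substitute for the paper's factorization $\bar\Lambda_t=I-\Psi_t\Psi_t\tp$. The gap is Step~3, which carries the whole content of the lemma and is not proved.

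Your first idea aims at $\norm{W_t^{1/2}L_tP_t}_{\op}\lesssim\norm{P_t'}_{\op}$. That bound already follows from your naive estimate, since $\abs{w_{t,i}^{1/2}\ell_{t,i}'}=\abs{(P_t')_{ii}}/\sqrt{w_{t,i}}\le2\norm{P_t'}_{\op}$, and it does not help. Because $W_t\preceq I$, the weighted norm is the smaller one, and your Schur bound leaves no spare weight to absorb $W_t^{1/2}$.

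Your fallback through the differentiated Lewis-weight equation works at the vector level. Eliminating $\ell_t'$ from $D_t$ gives $(I-\alpha\bar\Lambda_t)W_t^{1/2}\ell_t'=-2\bar\Lambda_tv_t$, so inverting yields at best a Euclidean bound on $W_t^{1/2}\ell_t'=-N_tv_t$. It is true that $\norm{\Diag(\ell_t')P_t}_{\op}\le\norm{W_t^{1/2}\ell_t'}$. But this vector is generically of size $\sqrt d$ (the scale of $\norm{N_tv_t}$ in Lemma~\ref{lem:spectral-LS-background-event}), whereas $\omega_0\le C\varkappa$. The Schur estimate bounds operators, not such vectors; for example, $\norm{W_t^{-1/2}\diag(P_tD_tP_t)}$ is in general only $\le\sqrt d\,\norm{P_tD_tP_t}_{\op}$. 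This route therefore reproduces exactly the $\sqrt d$ loss of the direct estimate $\norm{N_t'}_{\op}=\widetilde O(\sqrt d)$ that the lemma is designed to remove. The inversion also costs another factor $p/2$, giving $p^3$ with no room to rebalance.

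The missing idea is to split $(I-P_t)e_i=e_i-P_te_i$ inside the diagonal of the projector derivative, instead of bounding it as a whole by Cauchy--Schwarz. Suppressing $t$, the identity $w=\diag P$ and the formula $P'=(I-P)DP+PD(I-P)$ give
\[
    w_i'=(P')_{ii}=2D_{ii}w_i-2(PDP)_{ii}\,,
    \qquad\text{hence}\qquad
    L=\Diag(\ell')=2D-2\Diag\bpar{(PDP)_{ii}/w_i}\,.
\]
Since $(PDP)_{ii}=(Pe_i)\tp(PDP)(Pe_i)$ with $\norm{Pe_i}^2=w_i$, every entry of the second diagonal matrix is at most $\norm{PDP}_{\op}\le\omega-1$ in absolute value. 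Therefore $\norm{LP}_{\op}\le2\norm{DP}_{\op}+2\norm{PDP}_{\op}\le4(\omega-1)$, which is the paper's bound $\norm{\Gamma P}_{\op}\le2(\omega-1)$ with $\Gamma=L/2$. Plugging this into your Step~2 gives $\norm{\bar\Lambda'}_{\op}\le2\norm{P'}_{\op}+\norm{LP}_{\op}\le8(\omega-1)$. Hence $\norm{N_t'}_{\op}\le4p^2(\omega_t-1)$, slightly better than the stated constant, and the differentiated Lewis-weight equation is not needed at all.
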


\begin{lemma}[Bootstrap control and fourth-order dominance]
\label{lem:spectral-LS-short-bootstrap}
Fix $\delta\in(0,1/2)$ and write $\varkappa=\varkappa_\delta$.
There exists $C \asymp 1$ such that, on the event $\cE_\delta$ in
Lemma~\ref{lem:spectral-LS-background-event}, for every
$0\leq t\leq c\varkappa^{-4}/\sqrt d$,
\begin{align*}
    \omega_t
    &\leq
    \omega_0
    +C\varkappa^4\sqrt d\int_0^t\omega_s\,\D s\,,
    &\abs{\cQ_{x_t}^{g}[\xi_t^{\otimes4}]}
    &\leq C\varkappa^6d\,\omega_t\,.
\end{align*}
\end{lemma}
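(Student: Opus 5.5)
The plan is to verify the two displayed inequalities separately on the event $\cE_\delta$ of Lemma~\ref{lem:spectral-LS-background-event}. I will use the path estimates listed there, the bound $\norm{N_t'}_{\op}\leq 6p^2(\omega_t-1)$ from Lemma~\ref{lem:operator-row-map-estimate}, and the facts $0\preceq N_t\preceq pI$, $p^2\leq\varkappa$ and $\omega_t\geq1$. Constants $C$ may change from line to line.

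\emph{Fourth-order dominance.} I start from the decomposition \eqref{eq:LS-fourth-order-path-decomposition} in Fact~\ref{fact:LS-fourth-order-path} and bound its five terms one at a time.
\begin{itemize}
\item First term: $6\sum_i w_{t,i}^\alpha\iota_{t,i}^4\leq 6\norm{\iota_t}_\infty^2\sum_i w_{t,i}^\alpha\iota_{t,i}^2=6\norm{\iota_t}_\infty^2\norm{h}_{g(x_t)}^2\lesssim\varkappa^4d$.
\item Second term: $\abs{\inner{N_tv_t,q_t\circ\iota_t}}\leq\norm{N_tv_t}\,\norm{q_t}\,\norm{\iota_t}_\infty\lesssim\varkappa^5d$.
\item $q_t'$ term: $\abs{(q_t')\tp N_tv_t}\leq\norm{q_t'}\,\norm{N_tv_t}\lesssim\varkappa^6d$.
\item $v_t'$ term: $\abs{q_t\tp N_tv_t'}\leq p\,\norm{q_t}\,\norm{v_t'}\lesssim\varkappa^6 d$.
\item Bottleneck term: $\abs{q_t\tp N_t'v_t}\leq\norm{q_t}\cdot6p^2(\omega_t-1)\cdot\norm{v_t}\lesssim\varkappa^4 d\,\omega_t$.
\end{itemize}
Since $\omega_t\geq1$, every term is at most $C\varkappa^6d\,\omega_t$, which gives the second inequality.

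\emph{Bootstrap control.} Write $\omega_t-1=\norm{D_tP_t}_{\op}$. The operator norm is $1$-Lipschitz, so
\[
\omega_t\leq\omega_0+\int_0^t\norm{(D_sP_s)'}_{\op}\,\D s .
\]
I expand $(D_tP_t)'=D_t'P_t+D_tP_t'$. For the second piece, $\norm{P_t'}_{\op}\leq2(\omega_t-1)$ gives $\norm{D_tP_t'}_{\op}\lesssim\varkappa^2\sqrt d\,\omega_t$.

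For the first piece I use a row-norm trick. Because $P_t$ is an orthogonal projector with $\diag P_t=w_t$, its $i$-th row has norm $\sqrt{w_{t,i}}$. Hence, for any vector $u$,
\[
\norm{\Diag(u)P_t}_{\op}\leq\norm{\Diag(u)P_t}_{\F}=\norm{W_t^{1/2}u}.
\]
The diagonal of $D_t'$ is $\beta\ell_t''-(A_th)'=\beta\ell_t''+\iota_t^{\circ2}$. The contribution of $\iota_t^{\circ2}$ satisfies $\norm{W_t^{1/2}\iota_t^{\circ2}}\leq\norm{\iota_t}_\infty\norm{v_t}\lesssim\varkappa^2\sqrt d$.

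For $\ell_t''$, Fact~\ref{fact:DWh} gives $W_t^{1/2}\ell_t'=-N_tv_t$. Differentiating this identity yields
\[
W_t^{1/2}\ell_t''=-\tfrac12 W_t^{1/2}(\ell_t')^{\circ2}-N_t'v_t-N_tv_t'.
\]
Bounding each piece:
\begin{itemize}
\item $\norm{W_t^{1/2}(\ell_t')^{\circ2}}\leq\norm{\ell_t'}_\infty\norm{N_tv_t}\lesssim\varkappa^4\sqrt d$;
\item $\norm{N_t'v_t}\lesssim p^2\varkappa\sqrt d\,\omega_t$, by Lemma~\ref{lem:operator-row-map-estimate};
\item $\norm{N_tv_t'}\leq p\varkappa^3\sqrt d$.
\end{itemize}
Altogether $\norm{(D_tP_t)'}_{\op}\leq C\varkappa^4\sqrt d\,\omega_t$, and integrating gives the first inequality.

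The main obstacle is the $N_t'$ contribution. In both inequalities it must be charged to $\omega_t$ rather than to a uniform $\Otilde(\sqrt d)$ bound; this is exactly what Lemma~\ref{lem:operator-row-map-estimate} supplies. The one genuinely delicate point is controlling $D_t'P_t$ despite the unbounded factor $W_t^{-1/2}$ in $\ell_t''$. The row-norm identity $\norm{\Diag(u)P_t}_{\F}=\norm{W_t^{1/2}u}$ is the step I expect to make this work: it absorbs $W_t^{-1/2}$ exactly and reduces everything to the already-controlled quantities $N_t'v_t$, $N_tv_t'$ and $N_tv_t$.
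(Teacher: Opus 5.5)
Your fourth-order dominance argument is correct and matches the paper term by term. The bootstrap part, however, has a genuine gap at the $(\ell_t')^{\circ2}$ term of $\ell_t''$. You write $\norm{W_t^{1/2}(\ell_t')^{\circ2}}\leq\norm{\ell_t'}_\infty\norm{N_tv_t}\lesssim\varkappa^4\sqrt d$. But Lemma~\ref{lem:spectral-LS-background-event} gives only $\norm{\ell_t'}_\infty\leq C\varkappa^2\sqrt d$ along the path, together with $\norm{N_tv_t}\leq C\varkappa^2\sqrt d$, so the product is $C\varkappa^4 d$, not $C\varkappa^4\sqrt d$.

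A bound of the form $\norm{(D_tP_t)'}_{\op}\leq C\varkappa^4\sqrt d\,\omega_t+C\varkappa^4 d$ does not give the claimed integral inequality. The extra term could be absorbed only if $\omega_t\gtrsim\sqrt d$. Integrated over $t\leq c\varkappa^{-4}/\sqrt d$, it adds $O(\sqrt d)$ to $\omega_t$, which destroys the dimension-free control that (B2) is meant to provide. This loss is intrinsic to your row-norm identity, not just an arithmetic slip: $\norm{\Diag(\ell_t')P_t}_{\F}=\norm{W_t^{1/2}\ell_t'}=\norm{N_tv_t}$ lives at scale $\sqrt d$. The Frobenius bound therefore cannot handle this term.

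The fix, which is what the paper does, is to keep this term in operator norm and feed it back into the certificate. Write $\norm{\Diag((\ell_t')^{\circ2})P_t}_{\op}\leq\norm{\ell_t'}_\infty\,\norm{\Diag(\ell_t')P_t}_{\op}$. Since $z_t=\beta\ell_t'-\iota_t$, one has $\beta\,\Diag(\ell_t')P_t=D_tP_t+\Diag(\iota_t)P_t$. Hence $\norm{\Diag(\ell_t')P_t}_{\op}\leq\beta^{-1}(\omega_t-1+\norm{\iota_t}_\infty)\lesssim\varkappa\,\omega_t$, and the term is at most $C\varkappa^3\sqrt d\,\omega_t$.

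So, besides $N_t'$, this is a second place where a contribution must be charged to $\omega_t$ rather than bounded uniformly. Your row-norm identity is the right tool only for the two middle terms $W_t^{-1/2}N_t'v_t$ and $W_t^{-1/2}N_tv_t'$, and harmlessly for $\iota_t^{\circ2}$. With this change, the rest of your bootstrap bookkeeping goes through.
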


\begin{proof}[Proof of Corollary~\ref{cor:spectral-LS-mixing}]
Fix $\delta\in(0,1/2)$. Since the base point $x$ is arbitrary,
Lemma~\ref{lem:spectral-LS-background-event} verifies
condition~\textup{(B1)}, including continuity, and
Lemma~\ref{lem:spectral-LS-short-bootstrap} verifies
\textup{(B2)}--\textup{(B3)} on the same event
$\cE_\delta$. 

We now check the effects of $\Otilde(1)$-scaling of $g_{\textrm{LS}}$ on the problem parameters. Intuitively, this should not affect the final complexity, but we provide the scaling details for completeness.
After adjusting universal constants, baseline parameters for $(g,\omega_{\mathrm{LS}})$ are
\begin{align*}
    r_\delta&=c\varkappa_\delta^{-4}\,,
    &\eta_\delta&=C\varkappa_\delta^4\,,
    &b_\delta&=C\varkappa_\delta^6\,,
    &c_\delta&=C\varkappa_\delta\,.
\end{align*}
Set $S\defn C_0(1+\log m)^{20}$, $L\defn S\lambda^2$, and $\widehat g\defn Lg$, where $C_0$ is a sufficiently large universal constant.
Fix once and for all $\delta_\star=1/32$, set
$\varkappa_\star\defn\varkappa_{\delta_\star}$, and define
\[
    \widehat\omega_{\mathrm{LS}}(x,\widehat h)
    \defn
    1+\varkappa_\star^{-1}\,\bpar{
        \omega_{\mathrm{LS}}(x,\sqrt L\,\widehat h)-1
    }\,.
\]
As $\delta_\star$ is fixed, $\widehat\omega_{\mathrm{LS}}$ does not depend on the running value of $\delta$.
The corresponding parameters are
\begin{equation}
    \widehat r_\delta=\sqrt L\,r_\delta\,,
    \qquad
    \widehat\eta_\delta=\frac{\eta_\delta}{\sqrt L}\,,
    \qquad
    \widehat b_\delta=\frac{\varkappa_\star b_\delta}{L}\,,
    \qquad
    \widehat c_\delta=1+\frac{c_\delta-1}{\varkappa_\star}\,.
\label{eq:fourth-order-scaling-profile}
\end{equation}
To verify this scaling rule, fix $x\in\interior\K$ and
$\xi\sim\Normal(0,I_d)$, and set
$h=g(x)^{-1/2}\xi$ and
$\widehat h=\widehat g(x)^{-1/2}\xi=L^{-1/2}h$. Write
\begin{align*}
    \omega_t&\defn\omega_{\mathrm{LS}}(x+th,h),
    &\xi_t&\defn g(x+th)^{1/2}h,\\
    \widehat\omega_t
    &\defn\widehat\omega_{\mathrm{LS}}(x+t\widehat h,\widehat h),
    &\widehat\xi_t
    &\defn\widehat g(x+t\widehat h)^{1/2}\widehat h.
\end{align*}
Since $x+t\widehat h=x+(t/\sqrt L)h$,
\begin{align*}
    \widehat\omega_t
    &=1+\varkappa_\star^{-1}(\omega_{t/\sqrt L}-1),
    &\cQ_{x+t\widehat h}^{\widehat g}
    [\widehat\xi_t^{\otimes4}]
    &=L^{-1}\cQ_{x+(t/\sqrt L)h}^{g}
      [\xi_{t/\sqrt L}^{\otimes4}].
\end{align*}
The same event $\cE_\delta$ gives feasibility and continuity for
$0\leq t\leq\sqrt L\,r_\delta/\sqrt d$, and
$\widehat\omega_0\leq
1+(c_\delta-1)/\varkappa_\star$; hence condition~\textup{(B1)} holds.
On this interval,
$\omega_{s/\sqrt L}\leq\varkappa_\star\widehat\omega_s$.
Applying the baseline conditions~\textup{(B2)}--\textup{(B3)}, using
the preceding identities for $\widehat\omega_t$ and
$\cQ^{\widehat g}$, and changing variables in
condition~\textup{(B2)}, yields
\begin{align*}
    \widehat\omega_t
    &\leq
    \widehat\omega_0
    +\frac{\eta_\delta}{\varkappa_\star\sqrt L}\sqrt d
      \int_0^t\omega_{s/\sqrt L}\,\D s
    \leq
    \widehat\omega_0
    +\frac{\eta_\delta}{\sqrt L}\sqrt d
      \int_0^t\widehat\omega_s\,\D s,\\
    \abs{\cQ_{x+t\widehat h}^{\widehat g}
      [\widehat\xi_t^{\otimes4}]}
    &\leq
    \frac{b_\delta}{L}d\,\omega_{t/\sqrt L}
    \leq
    \frac{\varkappa_\star b_\delta}{L}d\,\widehat\omega_t.
\end{align*}
The displayed inequalities verify conditions~\textup{(B2)} and
\textup{(B3)}, thereby establishing the scaling rule~\eqref{eq:fourth-order-scaling-profile}.

By the definition of $\varkappa_\delta$ and the choice
$\delta_\star=1/32$,
\begin{equation}
    \frac{\varkappa_\delta}{\varkappa_\star}
    \leq
    \sqrt{1+\log\frac1\delta}.
\label{eq:concentration-profile-comparison}
\end{equation}
Moreover,
$\varkappa_\star^8\lesssim(1+\log m)^{20}$, so increasing $C_0$ if
necessary ensures $S\geq C\varkappa_\star^8$.
Since $\lambda\geq1$, we have $L\geq S\geq C\varkappa_\star^8$.
Substituting $r_\delta=c\varkappa_\delta^{-4}$, $\eta_\delta=C\varkappa_\delta^4$, $b_\delta=C\varkappa_\delta^6$, and $c_\delta=C\varkappa_\delta$ into~\eqref{eq:fourth-order-scaling-profile}, the comparison in~\eqref{eq:concentration-profile-comparison} shows that $\widehat r_\delta$ is bounded below by a positive constant depending only on $\delta$, while $\widehat\eta_\delta$, $\widehat b_\delta$, and $\widehat c_\delta$ are bounded above by constants depending only on $\delta$.
Decreasing $\widehat r_\delta$ to its lower bound and increasing the other three parameters to their upper bounds shows that $\widehat g$ has fourth-order bootstrap control in the sense of Definition~\ref{def:fourth-order-bootstrap-control}.
Thus, $\widehat g$ satisfies all conditions of Theorem~\ref{theorem:generic-pointwise-mixing}, with symmetry parameter $\bar\nu\leq eS\lambda^2d$. The theorem therefore yields
\[
    \chi^2(\nu_0\widetilde{\cT}^{\,N}\mmid\pi_\theta)
    \leq
    \exp\bpar{-\frac{cN}{S\lambda^2d^2}}\,
    \chi^2(\nu_0\mmid\pi_\theta)\,.
\]
Finally, $\lambda=1+\log m$, so $S\lambda^2=C_0(1+\log m)^{22}$ as claimed.
\end{proof}

\paragraph{Deferred proofs of Lemmas~\ref{lem:operator-row-map-estimate} and~\ref{lem:spectral-LS-short-bootstrap}.}

\begin{proof}[Proof of Lemma~\ref{lem:operator-row-map-estimate}]
Fix $t$. Along the path, write $P_s=P_{x_s}$, $\bar\Lambda_s=\bar\Lambda_{x_s}$, and $N_s=N_{x_s}$. At $s=t$, suppress subscripts and write $P=P_t$, $D=D_t$, $W=W_t$, $w=w_t$, $\bar\Lambda=\bar\Lambda_t$, $N=N_t$, and $\omega=\omega_t$. Then, $\norm{DP}_{\op}=\omega-1$.

We first reduce the claim to a bound on $\bar\Lambda'$.
Set $R_s=(I-\alpha\bar\Lambda_s)^{-1}$ and write $R=R_t$. Since $0\preceq\bar\Lambda\preceq I$, we have $\norm{R}_{\op}\leq p/2$.
Moreover, $R'=\alpha R\bar\Lambda'R$ and $I+\alpha\bar\Lambda R=R$. 
Differentiating $N=2\bar\Lambda R$ therefore gives $N'=2R\bar\Lambda'R$ and $\norm{N'}_{\op} \leq\frac{p^2}{2}\,\norm{\bar\Lambda'}_{\op}$.
Thus, it suffices to prove $\norm{\bar\Lambda'}_{\op}\leq12(\omega-1)$.

To control $\bar\Lambda'$, define the following linear map
$\Psi_s:\R^{m\times m}\to\R^m$, where the domain is equipped with the Frobenius norm. A direct calculation using $P_s^2=P_s$ also gives
\[
    \Psi_sX
    \defn
    W_s^{-1/2}\diag(P_sXP_s)\,,\qquad
    \Psi_s\Psi_s\tp
    =W_s^{-1/2}P_s^{\circ2}W_s^{-1/2}\,,\qquad
    \bar\Lambda_s=I-\Psi_s\Psi_s\tp\,.
\]
At $s=t$, write $\Psi=\Psi_t$ and $\Psi'=\left.\frac{\D}{\D s}\Psi_s\right|_{s=t}$.
Since $0\preceq\bar\Lambda\preceq I$, we have $\norm{\Psi}_{\op}\leq1$, and hence $\norm{\bar\Lambda'}_{\op} \leq 2\,\norm{\Psi'}_{\op}$.
It remains to prove $\norm{\Psi'}_{\op} \leq 6\,(\omega-1)$.

Let us define $\Gamma \defn \frac12\,W^{-1}W' =\Diag(\frac{w_i'}{2w_i})$. 
Differentiating $\Psi_s$ at $s=t$ gives
\[
    \Psi'X
    =W^{-1/2}\diag(P'XP+PXP'-\Gamma PXP)\,.
\]
For $B_s\defn W_s^\beta A_s$, differentiating $P_s=B_s(B_s^\T B_s)^{-1}B_s^\T$ using $B_s'=D_sB_s$ gives
\begin{equation}\label{eq:P-diff}
   P_s' = (I-P_s)D_sP_s+P_sD_s(I-P_s)\qquad
   \norm{P'}_{\op} \leq 2\,\norm{DP}_{\op} = 2\,(\omega-1)\,.
\end{equation}

Using $w=\diag P$ with the formula for $P'$, observe that 
\[
\frac{w_i'}{2w_i} = D_{ii}-\frac{(PDP)_{ii}}{w_i}\,.
\]
Since $w_i = P_{ii} = \norm{Pe_i}^2$, the quotient $(PDP)_{ii}/w_i$ is bounded by $\norm{PDP}_\op$. Consequently,
\[
    \norm{\Gamma-D}_{\op}
    \leq\norm{PDP}_{\op}
    \leq\omega-1\,,
    \qquad
    \norm{\Gamma P}_{\op}
    \leq
    \norm{DP}_{\op}+\norm{(\Gamma-D)P}_{\op}
    \leq 2\,(\omega-1)\,.
\]

For any $M,X\in\R^{m\times m}$, rowwise Cauchy--Schwarz with $P_{ii}= w_i$ gives
\begin{align*}
    \norm{W^{-1/2}\diag(MXP)}
    &\leq\norm{MX}_{\F}\,,
    &\norm{W^{-1/2}\diag(PXM\tp)}
    &\leq\norm{XM\tp}_{\F}\,.
\end{align*}
Applying these inequalities with $M=P'$ and $M=\Gamma P$ yields
\[
    \norm{\Psi'X}
    \leq
    (2\,\norm{P'}_{\op}+\norm{\Gamma P}_{\op})\,
    \norm{X}_{\F}
    \leq 6\,(\omega-1)\norm{X}_{\F}\,.
\]
Combining the bounds on $\bar\Lambda'$ and $\Psi'$ proves
\[
    \norm{N_t'}_{\op}
    \leq
    \frac{p^2}{2}\,\norm{\bar\Lambda'}_{\op}
    \leq
    p^2\,\norm{\Psi'}_{\op}
    \leq
    6p^2\,(\omega_t-1)\,,
\]
which completes the proof.
\end{proof}

\begin{proof}[Proof of Lemma~\ref{lem:spectral-LS-short-bootstrap}]
Work on $\cE_\delta$ for $0\leq t\leq c\varkappa^{-4}/\sqrt d$.
Lemma~\ref{lem:operator-row-map-estimate}, together with
$p^2\leq\varkappa$ and $\omega_t = 1 + \norm{D_tP_t}_\op\geq1$, gives $\norm{N_t'}_{\op} \leq C\varkappa\,(\omega_t-1) \leq C\varkappa\omega_t$ throughout the interval.

We first prove bootstrap control by bounding $\norm{D_t'P_t}_\op$ and $\norm{D_tP_t'}_\op$.
Write $z_t=\beta\ell_t'-\iota_t$, so that $D_t=\Diag z_t$. 
Fact~\ref{fact:DWh} and direct differentiation give $\iota_t' =-\iota_t^{\circ2}$ and $\ell_t' =-W_t^{-1/2}N_tv_t$. Hence, 
\[
    z_t'
    =-\frac\beta2\,(\ell_t')^{\circ2}
     -\beta W_t^{-1/2}N_t'v_t
     -\beta W_t^{-1/2}N_tv_t'
     +\iota_t^{\circ2}\,.
\]
For $y,r\in\R^m$, the identity $\sum_j(P_t)_{ij}^2=(P_t)_{ii}=w_{t,i}$ and the inequality $\norm{\cdot}_{\op}\leq\norm{\cdot}_{\F}$ imply
\[
    \norm{\Diag(W_t^{-1/2}y)\,P_t}_{\op}
    \leq\norm{y}\,,\qquad
    \norm{\Diag(r^{\circ2})\,P_t}_{\op}
    \leq
    \norm{r}_\infty\norm{(\Diag r)\,P_t}_{\op}\,.
\]
We use the first inequality to bound the two middle terms in $z_t'$ and the second one to handle the rest.
Using $(\Diag\ell_t')\,P_t =\beta^{-1}\parenth{D_tP_t+(\Diag \iota_t)P_t}$ (which follows from $\ell_t' = \beta^{-1}\,(z_t + \iota_t)$), together with Lemmas~\ref{lem:spectral-LS-background-event} and~\ref{lem:operator-row-map-estimate}, we obtain
\[
    \norm{D_t'P_t}_{\op} \leq C\varkappa^4\sqrt d\,\omega_t +C\varkappa\sqrt d\,\norm{N_t'}_{\op}
    \leq C\varkappa^4\sqrt d\,\omega_t\,.
\]
Also, by \eqref{eq:P-diff} and Lemma~\ref{lem:spectral-LS-background-event}, $\norm{D_tP_t'}_{\op} \leq C\varkappa^2\sqrt d\,(\omega_t-1) \leq C\varkappa^4\sqrt d\,\omega_t$.
Hence, $\norm{(D_tP_t)'}_{\op} \leq C\varkappa^4\sqrt d\,\omega_t$.
By integrating this derivative bound,
\[
    \omega_t
    \leq
    \omega_0+\norm{D_tP_t-D_0P_0}_{\op}
    \leq
    \omega_0
    +C\varkappa^4\sqrt d\int_0^t\omega_s\,\D s\,.
\]

For fourth-order dominance, Lemmas~\ref{lem:spectral-LS-background-event} and~\ref{lem:operator-row-map-estimate} give
\begin{align*}
    \sum_iw_{t,i}^\alpha \iota_{t,i}^4
    &\leq\norm{\iota_t}_\infty^2\norm{h}_{g(x_t)}^2
    \leq C\varkappa^4d\,,\\
    \abs{\inner{N_tv_t,q_t\circ \iota_t}}
    &\leq C\varkappa^5d\,,\\
    \abs{(q_t')\tp N_tv_t}
    +\abs{q_t\tp N_tv_t'}
    &\leq C\varkappa^6d\,,\\
    \abs{q_t\tp N_t'v_t}
    &\leq C\varkappa^3d\,\norm{N_t'}_{\op}
    \leq C\varkappa^4d\omega_t\,.
\end{align*}
Therefore, by Fact~\ref{fact:LS-fourth-order-path}, $\abs{\cQ_{x_t}^{g}[\xi_t^{\otimes4}]} =\abs{F''(t)} \leq C\varkappa^6d\omega_t$.
\end{proof}

\begin{acknowledgements}
The authors thank the FIM--Institute for Mathematical Research at ETH Z\"urich for supporting the conference ``Scalable MCMC Sampling'', where this project was initiated.
The authors also thank Santosh Vempala and Minhui Jiang for discussions at an early stage of the project.
YK is supported in part by NSF Awards CCF-2504995, CCF-2236669, and CCF-2504994. YK thanks Victor Reis for sharing a preliminary proof of Corollary~4 found by ChatGPT 5.6 Pro based on~\cite{Kook26Dikin}.
\end{acknowledgements}

\bibliographystyle{alpha}
\bibliography{ref}

@article{KV26HAR,
	author = {Kook, Yunbum and Vempala, Santosh S.},
	journal = {arXiv preprint arXiv:2608.16878},
	title = {Spectral gaps of {H}it-and-{R}un and {C}oordinate {H}it-and-{R}un},
	year = {2026}}

@inproceedings{LLV20strong,
	author = {Laddha, Aditi and Lee, Yin Tat and Vempala, Santosh S.},
	booktitle = {{S}ymposium on {T}heory of {C}omputing},
	collection = {STOC '20},
	doi = {10.1145/3357713.3384272},
	month = jun,
	pages = {1212--1222},
	publisher = {ACM},
	title = {Strong self-concordance and sampling},
	url = {http://dx.doi.org/10.1145/3357713.3384272},
	year = {2020}}

@article{LS93random,
	author = {Lov\'{a}sz, L\'{a}szl\'{o} and Simonovits, Mikl\'{o}s},
	doi = {10.1002/rsa.3240040402},
	fjournal = {Random Structures \& Algorithms},
	issn = {1042-9832},
	journal = {Random Structures \& Algorithms},
	mrclass = {90C27 (52B55 90C15)},
	mrnumber = {1238906},
	mrreviewer = {Gerard Sierksma},
	number = {4},
	pages = {359--412},
	title = {Random walks in a convex body and an improved volume algorithm},
	url = {https://doi.org/10.1002/rsa.3240040402},
	volume = {4},
	year = {1993}}

@article{lawler1988bounds,
	author = {Lawler, Gregory F. and Sokal, Alan D.},
	doi = {10.1090/S0002-9947-1988-0930082-9},
	journal = {Transactions of the American Mathematical Society},
	number = {2},
	pages = {557--580},
	title = {Bounds on the {$L^2$} Spectrum for {Markov} Chains and {Markov} Processes: A Generalization of {Cheeger}'s Inequality},
	volume = {309},
	year = {1988}}

@article{Vaidya96convex,
	author = {Vaidya, Pravin M.},
	doi = {10.1016/0025-5610(92)00021-S},
	fjournal = {Mathematical Programming},
	issn = {0025-5610,1436-4646},
	journal = {Mathematical Programming},
	mrclass = {90C25 (65K05)},
	mrnumber = {1393123},
	mrreviewer = {M.\ Teboulle},
	number = {3, Ser. A},
	pages = {291--341},
	title = {A new algorithm for minimizing convex functions over convex sets},
	url = {https://doi.org/10.1016/0025-5610(92)00021-S},
	volume = {73},
	year = {1996}}

@book{janson_gaussian_1997,
	author = {Janson, Svante},
	copyright = {https://www.cambridge.org/core/terms},
	doi = {10.1017/CBO9780511526169},
	edition = {1},
	isbn = {978-0-521-56128-0 978-0-521-05720-2 978-0-511-52616-9},
	month = jun,
	publisher = {Cambridge University Press},
	title = {Gaussian {Hilbert} {Spaces}},
	url = {https://www.cambridge.org/core/product/identifier/9780511526169/type/book},
	urldate = {2024-11-25},
	year = {1997}}

@inproceedings{CP15lprow,
	address = {New York, NY, USA},
	author = {Cohen, Michael B. and Peng, Richard},
	booktitle = {{S}ymposium on {T}heory of {C}omputing},
	pages = {183--192},
	publisher = {ACM},
	series = {STOC '15},
	title = {$L_p$ row sampling by {L}ewis weights},
	year = {2015}}

@article{SV16mixing,
	author = {Sachdeva, Sushant and Vishnoi, Nisheeth K.},
	doi = {10.1016/j.orl.2016.07.005},
	fjournal = {Operations Research Letters},
	issn = {0167-6377,1872-7468},
	journal = {Operations Research Letters},
	mrclass = {60D05 (52B55 60G50 90C05)},
	mrnumber = {3546883},
	number = {5},
	pages = {630--634},
	title = {The mixing time of the {D}ikin walk in a polytope---a simple proof},
	url = {https://doi.org/10.1016/j.orl.2016.07.005},
	volume = {44},
	year = {2016}}

@article{CV18Gaussian,
	author = {Cousins, Ben and Vempala, Santosh S.},
	doi = {10.1137/15M1054250},
	fjournal = {SIAM Journal on Computing},
	issn = {0097-5397},
	journal = {SIAM Journal on Computing},
	mrclass = {68W20 (52A38 60D05 65C40)},
	mrnumber = {3818340},
	number = {3},
	pages = {1237--1273},
	publisher = {Society for Industrial and Applied Mathematics (SIAM)},
	title = {Gaussian cooling and {$O^*(n^3)$} algorithms for volume and {G}aussian volume},
	url = {https://doi.org/10.1137/15M1054250},
	volume = {47},
	year = {2018}}

@article{KLS97random,
	author = {Kannan, Ravi and Lov\'{a}sz, L\'{a}szl\'{o} and Simonovits, Mikl\'{o}s},
	doi = {10.1002/(SICI)1098-2418(199708)11:1<1::AID-RSA1>3.0.CO;2-X},
	fjournal = {Random Structures \& Algorithms},
	issn = {1042-9832},
	journal = {Random Structures \& Algorithms},
	mrclass = {68Q25 (52A20 52A38 60J15 60J20)},
	mrnumber = {1608200},
	mrreviewer = {Mark R. Jerrum},
	number = {1},
	pages = {1--50},
	title = {Random walks and an {$O^*(n^5)$} volume algorithm for convex bodies},
	url = {https://doi.org/10.1002/(SICI)1098-2418(199708)11:1<1::AID-RSA1>3.0.CO;2-X},
	volume = {11},
	year = {1997}}

@article{Lovasz99hit,
	author = {Lov\'{a}sz, L\'{a}szl\'{o}},
	doi = {10.1007/s101070050099},
	id = {Lov{\'a}sz1999},
	isbn = {1436-4646},
	journal = {Mathematical Programming},
	number = {3},
	pages = {443--461},
	title = {Hit-and-{R}un mixes fast},
	url = {https://doi.org/10.1007/s101070050099},
	volume = {86},
	year = {1999}}

@article{LV06simulated,
	author = {Lov\'{a}sz, L\'{a}szl\'{o} and Vempala, Santosh S.},
	doi = {10.1016/j.jcss.2005.08.004},
	fjournal = {Journal of Computer and System Sciences},
	issn = {0022-0000},
	journal = {Journal of Computer and System Sciences},
	mrclass = {68U05 (52A20 60C05 68W20)},
	mrnumber = {2205290},
	mrreviewer = {Carla Peri},
	number = {2},
	pages = {392--417},
	title = {Simulated annealing in convex bodies and an {$O^*(n^4)$} volume algorithm},
	url = {https://doi.org/10.1016/j.jcss.2005.08.004},
	volume = {72},
	year = {2006}}

@article{LV06hit,
	author = {Lov\'{a}sz, L\'{a}szl\'{o} and Vempala, Santosh S.},
	doi = {10.1137/s009753970544727x},
	issn = {1095-7111},
	journal = {SIAM Journal on Computing},
	number = {4},
	pages = {985--1005},
	publisher = {Society for Industrial and Applied Mathematics (SIAM)},
	title = {Hit-and-{R}un from a corner},
	url = {http://dx.doi.org/10.1137/S009753970544727X},
	volume = {35},
	year = {2006}}

@article{LV07geometry,
	author = {Lov\'{a}sz, L\'{a}szl\'{o} and Vempala, Santosh S.},
	doi = {10.1002/rsa.20135},
	fjournal = {Random Structures \& Algorithms},
	issn = {1042-9832},
	journal = {Random Structures \& Algorithms},
	mrclass = {94A20 (60C05 60J10 60J22 62D05 65C10)},
	mrnumber = {2309621},
	mrreviewer = {Hsien-Kuei Hwang},
	number = {3},
	pages = {307--358},
	title = {The geometry of logconcave functions and sampling algorithms},
	url = {https://doi.org/10.1002/rsa.20135},
	volume = {30},
	year = {2007}}

@article{KN12random,
	author = {Kannan, Ravi and Narayanan, Hariharan},
	doi = {10.1287/moor.1110.0519},
	fjournal = {Mathematics of Operations Research},
	issn = {0364-765X,1526-5471},
	journal = {Mathematics of Operations Research},
	mrclass = {65K05 (90C05 90C55)},
	mrnumber = {2891144},
	number = {1},
	pages = {1--20},
	publisher = {Institute for Operations Research and the Management Sciences (INFORMS)},
	title = {Random walks on polytopes and an affine interior point method for linear programming},
	url = {https://doi.org/10.1287/moor.1110.0519},
	volume = {37},
	year = {2012}}

@article{CDW18MCMC,
	author = {Chen, Yuansi and Dwivedi, Raaz and Wainwright, Martin J. and Yu, Bin},
	fjournal = {Journal of Machine Learning Research (JMLR)},
	issn = {1532-4435,1533-7928},
	journal = {Journal of Machine Learning Research},
	mrclass = {60J22 (60D05 60J10 65C05)},
	mrnumber = {3874163},
	number = {55},
	pages = {1--86},
	title = {Fast {MCMC} sampling algorithms on polytopes},
	volume = {19},
	year = {2018}}

@article{LS19solving,
	author = {Lee, Yin Tat and Sidford, Aaron},
	journal = {arXiv preprint arXiv:1910.08033},
	title = {Solving linear programs with $\sqrt{\text{rank}}$ linear system solves},
	year = {2019}}

@article{Klartag23log,
	author = {Klartag, Bo'az},
	fjournal = {Ars Inveniendi Analytica},
	journal = {Ars Inveniendi Analytica},
	mrclass = {52A40 (58J65)},
	mrnumber = {4603941},
	mrreviewer = {Ge Xiong},
	number = {4},
	pages = {1--17},
	title = {Logarithmic bounds for isoperimetry and slices of convex sets},
	url = {https://mathscinet.ams.org/mathscinet-getitem?mr=4603941},
	year = {2023}}

@article{JLLV26reducing,
	address = {New York, NY, USA},
	articleno = {8},
	author = {Jia, He and Laddha, Aditi and Lee, Yin Tat and Vempala, Santosh S.},
	doi = {10.1145/3795687},
	issn = {0004-5411},
	issue_date = {April 2026},
	journal = {Journal of the ACM},
	month = apr,
	number = {2},
	numpages = {21},
	publisher = {Association for Computing Machinery},
	title = {Reducing isotropy and volume to {KLS}: faster rounding and volume algorithms},
	url = {https://doi.org/10.1145/3795687},
	volume = {73},
	year = {2026}}

@article{CE25hitandrun,
	author = {Chen, Yuansi and Eldan, Ronen},
	date = {2026/04/01},
	doi = {10.1007/s00454-025-00808-4},
	id = {Chen2026},
	isbn = {1432-0444},
	journal = {Discrete \& Computational Geometry},
	number = {3},
	pages = {747--794},
	title = {Hit-and-{R}un mixing via localization schemes},
	url = {https://doi.org/10.1007/s00454-025-00808-4},
	volume = {75},
	year = {2026}}

@inproceedings{KV24ipm,
	author = {Kook, Yunbum and Vempala, Santosh S.},
	booktitle = {Conference on Learning Theory},
	pages = {3137--3240},
	publisher = {PMLR},
	title = {Gaussian cooling and {D}ikin walks: the interior-point method for logconcave sampling},
	url = {https://proceedings.mlr.press/v247/kook24b.html},
	volume = {247},
	year = {2024}}

@article{DFK91random,
	author = {Dyer, Martin and Frieze, Alan and Kannan, Ravi},
	doi = {10.1145/102782.102783},
	fjournal = {Journal of the ACM},
	issn = {0004-5411},
	journal = {Journal of the ACM},
	mrclass = {68U05 (52B55 68Q25)},
	mrnumber = {1095916},
	number = {1},
	pages = {1--17},
	title = {A random polynomial-time algorithm for approximating the volume of convex bodies},
	url = {https://doi.org/10.1145/102782.102783},
	volume = {38},
	year = {1991}}

@article{KVZ26INO,
	author = {Kook, Yunbum and Vempala, Santosh S. and Zhang, Matthew S.},
	doi = {https://doi.org/10.1002/rsa.70061},
	eprint = {https://onlinelibrary.wiley.com/doi/pdf/10.1002/rsa.70061},
	journal = {Random Structures \& Algorithms},
	number = {3},
	pages = {e70061},
	title = {In-and-{O}ut: algorithmic diffusion for sampling convex bodies},
	url = {https://onlinelibrary.wiley.com/doi/abs/10.1002/rsa.70061},
	volume = {68},
	year = {2026}}

@article{WSC22minimax,
	author = {Wu, Keru and Schmidler, Scott and Chen, Yuansi},
	fjournal = {Journal of Machine Learning Research (JMLR)},
	issn = {1532-4435,1533-7928},
	journal = {Journal of Machine Learning Research},
	mrclass = {62D05 (65C05 70Hxx)},
	mrnumber = {4577709},
	number = {270},
	pages = {1--63},
	title = {Minimax mixing time of the metropolis-adjusted {L}angevin algorithm for log-concave sampling},
	volume = {23},
	year = {2022}}

@article{CG23when,
  title = {When does {M}etropolized {H}amiltonian {M}onte {C}arlo provably outperform {M}etropolis-adjusted {L}angevin algorithm?},
  author={Chen, Yuansi and Gatmiry, Khashayar and Jiang, Minhui},
  journal={arXiv preprint arXiv:2304.04724},
  year={2023}
}

@inproceedings{JC25regularized,
	author = {Jiang, Minhui and Chen, Yuansi},
	booktitle = {Conference on Learning Theory},
	month = {30 Jun--04 Jul},
	pages = {3017--3078},
	publisher = {PMLR},
	series = {Proceedings of Machine Learning Research},
	title = {Regularized {D}ikin walks for sampling truncated logconcave measures, mixed isoperimetry and beyond worst-case analysis},
	url = {https://proceedings.mlr.press/v291/jiang25a.html},
	volume = {291},
	year = {2025}}

@article{Kook26Dikin,
	author = {Kook, Yunbum},
	journal = {arXiv preprint arXiv:2607.13943},
	title = {Beyond the $d^{2.5}$-mixing bound for {D}ikin walks on polytopes},
	year = {2026}}

@book{NN94ipm,
	author = {Nesterov, Yurii and Nemirovskii, Arkadii},
	doi = {10.1137/1.9781611970791},
	isbn = {9781611970791},
	month = jan,
	publisher = {Society for Industrial and Applied Mathematics (SIAM)},
	title = {Interior-point polynomial algorithms in convex programming},
	url = {http://dx.doi.org/10.1137/1.9781611970791},
	year = {1994}}

@inproceedings{Dikin1967iterative,
	author = {Dikin, Iliya Iosiphovich},
	booktitle = {Soviet Math. Dokl.},
	pages = {674--675},
	title = {Iterative solution of problems of linear and quadratic programming},
	volume = {8},
	year = {1967}}

@inproceedings{Karmarkar84polytime,
	author = {Karmarkar, Narendra},
	booktitle = {{S}ymposium on {T}heory of {C}omputing},
	collection = {STOC '84},
	doi = {10.1145/800057.808695},
	pages = {302--311},
	publisher = {ACM Press},
	series = {STOC '84},
	title = {A new polynomial-time algorithm for linear programming},
	url = {http://dx.doi.org/10.1145/800057.808695},
	year = {1984}}

@article{KV26unified,
	author = {Kook, Yunbum and Vempala, Santosh S.},
	journal = {arXiv preprint arXiv:2606.12694},
	title = {A unified complexity bound for logconcave sampling},
	year = {2026}}

@article{KV26zeroLC,
	author = {Kook, Yunbum and Santosh S. Vempala},
	journal = {arXiv preprint arXiv:2507.18021},
	title = {Zeroth-order logconcave sampling},
	year = {2025}}

@inproceedings{KV25sampling,
	author = {Kook, Yunbum and Vempala, Santosh S.},
	booktitle = {{S}ymposium on {T}heory of {C}omputing},
	doi = {10.1145/3717823.3718202},
	isbn = {979-8-4007-1510-5},
	mrclass = {68Q87},
	mrnumber = {4928485},
	pages = {924--932},
	publisher = {ACM},
	title = {Sampling and integration of logconcave functions by algorithmic diffusion},
	url = {https://doi.org/10.1145/3717823.3718202},
	year = {2025}}

@article{Narayanan16randomized,
	author = {Narayanan, Hariharan},
	doi = {10.1214/15-AAP1104},
	fjournal = {The Annals of Applied Probability},
	issn = {1050-5164,2168-8737},
	journal = {The Annals of Applied Probability},
	mrclass = {65C40 (53C15 60J20 90C51)},
	mrnumber = {3449327},
	number = {1},
	pages = {597--641},
	title = {Randomized interior point methods for sampling and optimization},
	url = {https://doi.org/10.1214/15-AAP1104},
	volume = {26},
	year = {2016}}

@article{Lovasz90compute,
	author = {Lov\'{a}sz, L\'{a}szl\'{o}},
	journal = {Jber. d. Dt. Math.-Verein, Jubil{\"a}umstagung},
	pages = {138--151},
	publisher = {DIMACS, Center for Discrete Mathematics and Theoretical Computer Science},
	title = {How to compute the volume?},
	year = {1990}}

\appendix
\section{Deferred preliminaries} \label{app:deferred_preliminaries}
\paragraph{Notation.}
For $k \in \naturalnum$, let $[k]\defn\{1,\ldots,k\}$, and write $a\vee b$ and $a\wedge b$ for $\max\{a,b\}$ and $\min\{a,b\}$. 
We use $\inner{u,v}=u\tp v$ for the Euclidean inner product of vectors and $\inner{U,V}=\Tr(U\tp V)$ for the Frobenius inner product of matrices.
The norm $\norm{u}$ is Euclidean, while $\norm{U}_{\op}$ and $\norm{U}_{\F}$ are the operator and Frobenius norms.  We write $I_k$ and $\boldsymbol1$ for the identity matrix and all-ones vector, with dimensions omitted when clear.
We write $\Normal(\mu,\Sigma)$ for a Gaussian distribution with mean $\mu$ and covariance $\Sigma$.

The spaces of symmetric, positive-semidefinite, and positive-definite
$k\times k$ matrices are $\mathbb S^k$, $\mathbb S_+^k$, and
$\mathbb S_{++}^k$.  We write $U\preceq V$ when $V-U$ is positive
semidefinite and $\cols U$ for the column space of $U$.  For a vector
$v$ and a square matrix $U$, $\Diag v$ is the diagonal matrix with
diagonal $v$, while $\diag U$ is the diagonal vector of $U$.  Vector
inequalities and scalar functions applied to vectors are interpreted
entrywise.

\paragraph{Markov chains, mixing, and warm starts.}
We use standard Markov-chain notation; see, for example, Jiang and
Chen~\cite[\S1.1]{JC25regularized}. We take $\interior\K$
as the state space.  Let $\mathsf T$ be a Markov transition kernel on
$\interior\K$, and write
$\mathsf T_x\defn\mathsf T(x,\cdot)$.  Its action on a probability
measure $\mu$ is
\[
    (\mu\mathsf T)(E)
    \defn
    \int_{\interior\K}\mathsf T_x(E)\,\mu(\D x)\,,
    \qquad
    E\in\mathcal B(\interior\K)\,.
\]
Thus, $\nu_0\mathsf T^N$ is the law after $N$ steps from an initial
distribution $\nu_0$, and a probability measure $\pi$ is stationary if
$\pi\mathsf T=\pi$.  For a kernel with stationary distribution $\pi$
and $\veps>0$, define
\begin{align}
    \tmix^{\tv}(\veps;\nu_0,\mathsf T)
    &\defn
    \inf\bbrace{
        N\in\mathbb Z_{\geq0}:
        \dtv(\nu_0\mathsf T^N,\pi)\leq\veps
    }\,,\label{eq:mixing-tv}\\
    \tmix^{\chi^2}(\veps;\nu_0,\mathsf T)
    &\defn
    \inf\bbrace{
        N\in\mathbb Z_{\geq0}:
        \chi^2(\nu_0\mathsf T^N\mmid\pi)\leq\veps
    }\,.\label{eq:mixing-chi}
\end{align}
The lazy version of $\mathsf T$
is $\widetilde{\mathsf T}\defn(I+\mathsf T)/2$, where $I$ is the
identity kernel.  For $M\geq1$, the initialization $\nu_0$ is $M$-warm
with respect to $\pi$ if $\nu_0(E)\leq M\pi(E)$ for every
$E\in\mathcal B(\interior\K)$.
Equivalently, $\nu_0\ll\pi$ and $\D\nu_0/\D\pi\leq M$ almost surely. Note that $1+\chi^2(\nu_0\mmid\pi) = \int_{\interior\K}(\frac{\D\nu_0}{\D\pi})^2\,\D\pi \leq M$.

\section{Deferred proofs for Sections~\ref{sec:generic-overlap} and~\ref{sec:spectral-analysis}}

\subsection{Proof of Lemma~\ref{lem:proposal_overlap}}\label{subsec:generic-proposal-overlap}

The proposal-overlap argument uses only strong self-concordance. We use
the following consequence of SSC~\cite[Lemma~1.2]{LLV20strong}.

\begin{fact}
    \label{fact:SSC}
    If $g$ is SSC on $\interior\K$, then, for any $x,y\in\interior\K$ with
    $\norm{x-y}_{g(x)}<1$,
    \[
        \norm{
            g(x)^{-1/2}\parenth{g(y)-g(x)}g(x)^{-1/2}
        }_{\F}
        \leq
        \frac{2\norm{x-y}_{g(x)}}
        {(1-\norm{x-y}_{g(x)})^2}\,.
    \]
\end{fact}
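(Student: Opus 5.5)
The plan is to integrate the SSC derivative bound along the segment from $x$ to $y$, using ordinary self-concordance (which SSC implies) to compare the local metric along the path with $g(x)$. Set $u\defn y-x$, $r\defn\norm{u}_{g(x)}<1$, and $x_t\defn x+tu$ for $t\in[0,1]$. Since $\K$ is convex and $x,y\in\interior\K$, the whole segment lies in $\interior\K$, so $g$ is $C^1$ along it.

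First I will get the two standard self-concordance comparisons along the path.

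\emph{Local norm of the direction.} Let $\phi(t)\defn\norm{u}_{g(x_t)}$. Self-concordance gives
\[
    \babs{\tfrac{\D}{\D t}\,u\tp g(x_t)u}
    =\babs{\Dd g(x_t)[u][u,u]}
    \leq 2\phi(t)^3\,.
\]
Hence $\babs{\tfrac{\D}{\D t}\phi(t)^{-1}}\leq1$ wherever $\phi>0$. Since $\phi(0)=r$, this yields $\phi(t)\leq r/(1-tr)$.

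\emph{Metric comparison.} For any fixed $v$, self-concordance applied at $x_t$ in direction $u$ gives
\[
    \babs{\tfrac{\D}{\D t}\log\bpar{v\tp g(x_t)v}}
    \leq 2\phi(t)
    \leq \frac{2r}{1-tr}\,.
\]
Integrating from $0$ to $t$ gives $v\tp g(x_t)v\leq(1-tr)^{-2}\,v\tp g(x)v$. In matrix form, $G_t\defn g(x)^{-1/2}g(x_t)g(x)^{-1/2}$ satisfies $\norm{G_t}_{\op}\leq(1-tr)^{-2}$.

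Next I differentiate $E(t)\defn g(x)^{-1/2}\bpar{g(x_t)-g(x)}g(x)^{-1/2}$ and rewrite the derivative in the normalization at $x_t$:
\[
    E'(t)=G_t^{1/2}\,Q_t\,\bpar{G_t^{1/2}}\tp,
    \qquad
    Q_t\defn g(x_t)^{-1/2}\,\Dd g(x_t)[u]\,g(x_t)^{-1/2}\,.
\]
Here $G_t^{1/2}$ means $g(x)^{-1/2}g(x_t)^{1/2}$, and its operator norm squared equals $\norm{G_t}_{\op}$. SSC at $x_t$ gives $\norm{Q_t}_{\F}\leq2\phi(t)$. Using $\norm{ABC}_{\F}\leq\norm{A}_{\op}\norm{B}_{\F}\norm{C}_{\op}$, I get
\[
    \norm{E'(t)}_{\F}
    \leq \norm{G_t}_{\op}\cdot 2\phi(t)
    \leq \frac{2r}{(1-tr)^3}\,.
\]

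Finally, since $E(0)=0$, integration gives
\[
    \norm{E(1)}_{\F}
    \leq \int_0^1\frac{2r}{(1-tr)^3}\,\D t
    = (1-r)^{-2}-1
    = \frac{2r-r^2}{(1-r)^2}
    \leq \frac{2r}{(1-r)^2}\,,
\]
which is the claimed bound.

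The main point needing care is the first step: we only assume SSC of a general metric, not that $g$ is a Hessian, so the Nesterov-type comparisons have to come directly from the operator-norm self-concordance inequality, which is implied by the Frobenius one. The ODE argument above uses exactly that. One must also handle $\phi(t)=0$, which cannot occur for $u\neq0$ since $g(x_t)$ is positive definite, and check that $1-tr>0$ throughout, which holds because $r<1$. The rest is routine.
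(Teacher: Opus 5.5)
Your proof is correct. The paper gives no argument of its own here and simply cites \cite[Lemma~1.2]{LLV20strong}. Your route is essentially the standard proof of that lemma: self-concordance gives $\norm{u}_{g(x_t)}\leq r/(1-tr)$ and $g(x_t)\preceq(1-tr)^{-2}g(x)$, and you then integrate the SSC Frobenius bound along the segment. Your first step matches the paper's own proof of Lemma~\ref{lem:self-concordant-local-norm-comparison}, and your integration even yields the slightly sharper bound $(1-r)^{-2}-1=(2r-r^2)/(1-r)^2$.
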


\begin{proof}[Proof of Lemma~\ref{lem:proposal_overlap}]
    Let $\delta\defn\norm{x-y}_{g(x)}$ and $M\defn g(x)^{-1/2}g(y)g(x)^{-1/2}$.
    The matrix $M$ is symmetric positive definite; let
    $\lambda_1,\ldots,\lambda_d$ be its eigenvalues.
    Since $\delta\leq\step/10\leq1/20$, Fact~\ref{fact:SSC} implies
    \[
        \norm{M-I_d}_{\F}
        \leq\frac{2\delta}{(1-\delta)^2}
        \leq\frac94\,\delta
        \leq\frac9{80}<\frac18\,.
    \]
    Hence, all eigenvalues of $M$ satisfy $\lambda_i\in[7/8,9/8]$.

    Using the Gaussian density~\eqref{eq:proposal-density} directly,
    \begin{align*}
        2\,\mathrm{KL}(\proposal_y \mmid \proposal_x)
        &=
        2\int p_y(z)\log\frac{p_y(z)}{p_x(z)}\,\D z\\
        &=
        \log\frac{\det g(y)}{\det g(x)}
        +\frac{1}{\step^2}
        \int \norm{z-x}_{g(x)}^2\, p_y(z)\,\D z
        -\frac{1}{\step^2}
        \int \norm{z-y}_{g(y)}^2\, p_y(z)\,\D z\,.
    \end{align*}
    The relevant Gaussian moment identities are
    \begin{align*}
        \int \norm{z-x}_{g(x)}^2\, p_y(z)\,\D z
        &=
        \step^2\Tr\parenth{g(x)g(y)^{-1}} + \norm{y-x}_{g(x)}^2\,,\\
        \int \norm{z-y}_{g(y)}^2\, p_y(z) \,\D z
        &=
        \step^2d\,.
    \end{align*}
    Substituting these identities, we obtain
    \begin{align*}
        2\,\mathrm{KL}(\proposal_y \mmid \proposal_x)
        &=
        \log\frac{\det g(y)}{\det g(x)}
        +\Tr\parenth{g(x)g(y)^{-1}}-d
        +\frac{\norm{y-x}_{g(x)}^2}{\step^2}\\
        &=
        \sum_{i=1}^d\log\lambda_i
        +\sum_{i=1}^d(\lambda_i^{-1}-1)
        +\frac{\delta^2}{\step^2}\\
        &=
        \sum_{i=1}^d
        (\lambda_i^{-1}-1+\log\lambda_i)
        +\frac{\delta^2}{\step^2}\,.
    \end{align*}
    Using  $\lambda^{-1}-1+\log\lambda\leq(\lambda-1)^2$
    on $[7/8,9/8]$ and $\delta/\step\leq1/10$, we conclude that
    \[
        \mathrm{KL}(\proposal_y \mmid \proposal_x)
        \leq
        \frac12\,\Bpar{\sum_{i=1}^d(\lambda_i-1)^2 +\frac{\delta^2}{\step^2}}
        <\frac12\,\bpar{\frac1{64}+\frac1{100}}
        <\frac1{64}\,.
    \]
    By Pinsker's inequality,
    \[
        \dtv(\proposal_x,\proposal_y)
        \leq
        \sqrt{\frac12\,\mathrm{KL}(\proposal_y\mmid\proposal_x)}
        <
        \frac{1}{8\sqrt2}
        <
        \frac14\,,
    \]
    which completes the proof.
\end{proof}

\subsection{Proof of the conductance bound from transition overlap}
\label{subsec:proof-transition-overlap-conductance}

The following is the standard local-norm comparison for a
self-concordant local metric \cite[Lemma~1.1]{LLV20strong}. For Hessian metrics,
see also Nesterov and Nemirovskii
\cite[Theorem~2.1.1]{NN94ipm}.
\begin{lemma}[Local-norm comparison under self-concordance]\label{lem:self-concordant-local-norm-comparison}
Let $g$ be a self-concordant local metric on $\interior\K$, and let
$x,y\in\interior\K$. If $h\defn y-x$
satisfies $\norm{h}_{g(x)}<1$, then
\[
    \norm{h}_{g(y)}
    \leq
    \frac{\norm{h}_{g(x)}}{1-\norm{h}_{g(x)}}\,.
\]
\end{lemma}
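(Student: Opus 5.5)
The plan is to integrate the metric along the segment from $x$ to $y$ and use self-concordance as a differential inequality for the local norm of the fixed vector $h=y-x$.

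For $t\in[0,1]$ let $x_t\defn x+th$, and define
\[
    \phi(t)\defn\norm{h}_{g(x_t)}\,.
\]
Since $\K$ is convex, $x_t\in\interior\K$ for all $t\in[0,1]$. Because $g$ is $C^1$, $\phi^2$ is $C^1$ with
\[
    (\phi^2)'(t)=\Dd g(x_t)[h][h,h]\,.
\]
Self-concordance (Definition~\ref{def:strongly-self-concordant}) gives $\Dd g(x_t)[h]\preceq 2\,\norm{h}_{g(x_t)}\,g(x_t)$. Therefore
\[
    \abs{(\phi^2)'(t)}\leq 2\phi(t)^3\,.
\]
Since $\phi>0$ (as $g\succ0$ and we may assume $h\neq0$), this becomes $\abs{\phi'}\leq\phi^2$. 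Equivalently,
\[
    \abs{(1/\phi)'}\leq1\,.
\]

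Integrating from $0$ to $t$ gives
\[
    \frac1{\phi(t)}\geq\frac1{\phi(0)}-t\,,
\]
which is positive for all $t\in[0,1]$ because $\phi(0)=\norm{h}_{g(x)}<1$. Hence
\[
    \phi(t)\leq\frac{\phi(0)}{1-t\phi(0)}\,.
\]
Setting $t=1$ yields $\norm{h}_{g(y)}\leq\norm{h}_{g(x)}/(1-\norm{h}_{g(x)})$.

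The only subtlety is justifying the differential inequality: $\phi$ is only as smooth as $g$. We therefore work with $\phi^2$, which is $C^1$, and use that $\phi$ is bounded away from zero on the compact segment, so $1/\phi$ is $C^1$ there. Continuity then keeps the integrated bound valid on all of $[0,1]$; no blow-up can occur, since $1/\phi(0)-t>0$ throughout.
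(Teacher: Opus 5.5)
Your proof is correct and follows the paper's argument essentially verbatim: both set $\phi(t)=\norm{h}_{g(x_t)}$ along the segment, use self-concordance to get $\phi'\leq\phi^2$, equivalently $(1/\phi)'\geq-1$, and integrate over $[0,1]$. Your extra care in working with $\phi^2$ (which is $C^1$) and in noting that $\phi$ stays bounded away from zero on the compact segment is a sound justification of a regularity point the paper leaves implicit.
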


\begin{proof}
    The claim is immediate when $h=0$. Otherwise, set $x_t\defn x+th$ and
    $\phi(t)\defn\norm{h}_{g(x_t)}$. By self-concordance of $g$,
    \[
        \phi'(t)
        =
        \frac{\Dd g(x_t)[h][h,h]}{2\phi(t)}
        \leq
        \phi(t)^2\,.
    \]
    Hence, $(1/\phi)'(t)\geq-1$. Integrating over $[0,1]$, we obtain
    \[
        \frac1{\phi(1)}
        \geq
        \frac1{\phi(0)}-1\,,
    \]
    which is equivalent to the claimed inequality.
\end{proof}

\begin{proof}[Proof of Lemma~\ref{lem:generic-warm-start-mixing}]
    Set $a\defn(1-\delta)/2$. Fix a measurable set $A$ such that
    $s<\pi_\theta(A)\leq1/2$, and consider the partition
    \begin{align*}
        A_1
        &\defn
        \{x\in A\cap\Xi: \mathsf T_x(A^c)<a\}\,,
        \\
        A_2
        &\defn
        \{y\in A^c\cap\Xi: \mathsf T_y(A)<a\}\,,
        \\
        A_3
        &\defn
        \K\setminus(A_1\cup A_2)\,.
    \end{align*}
    For any $x\in A_1$ and $y\in A_2$, the definitions imply
    $\mathsf T_x(A)>1-a$ and $\mathsf T_y(A)<a$.
    Hence,
    \[
        \dtv(\mathsf T_x,\mathsf T_y)
        \geq
        \mathsf T_x(A)-\mathsf T_y(A)
        >
        1-2a
        =
        \delta\,.
    \]
    The transition-overlap hypothesis implies that
    $\norm{x-y}_{g(x)}>\Delta$. We claim that
    $\norm{x-y}_{g(y)}\geq\Delta/2$. Otherwise,
    Lemma~\ref{lem:self-concordant-local-norm-comparison}, applied with
    base point $y$, would force a contradiction:
    \[
        \norm{x-y}_{g(x)}
        \leq
        \frac{\norm{x-y}_{g(y)}}{1-\norm{x-y}_{g(y)}}
        <
        \frac{\Delta/2}{1-\Delta/2}
        \leq
        \Delta\,.
    \]
    If $A_1$ and $A_2$ are nonempty, the assumed cross-ratio comparison implies
    \[
        d_{\K}(A_1,A_2)
        \geq
        \frac{\Delta}{2\sqrt{\bar\nu}}
        \geq
        \vartheta\,.
    \]

    Suppose $\pi_\theta(A_1)\leq\pi_\theta(A)/2$. Since $s\geq8\zeta/\vartheta$ and $\vartheta\leq1$, we have $\vartheta s\geq8\zeta$ and $s\geq8\zeta$. Thus,
    \begin{align*}
        Q_{\mathsf T}(A,A^c)
        &\geq
        \int_{(A\cap\Xi)\setminus A_1}
        \mathsf T_x(A^c)\,\pi_\theta(\D x)
        \ge
        a\,\bpar{
            \pi_\theta(A)-\zeta-\pi_\theta(A_1)
        }\\
        &\geq
        a\,\bpar{
            \frac{\pi_\theta(A)}2-\zeta
        }
        \geq
        \frac {a\vartheta}2\,\bpar{\pi_\theta(A)-s}\,.
    \end{align*}
    If $\pi_\theta(A_2)\leq \frac{\pi_\theta(A^c)}2$, apply the same argument to
    $A^c$. Reversibility and $\pi_\theta(A^c)\geq\pi_\theta(A)$ prove the
    same bound.

    Now consider the case $\pi_\theta(A_1)>\frac{\pi_\theta(A)}2$ and $\pi_\theta(A_2)>\frac{\pi_\theta(A^c)}2$.
    Fact~\ref{fact:cross-ratio-isoperimetry} and $\pi_\theta(A^c)\geq1/2$ yield
    \[
        \pi_\theta(A_3)
        \geq
        d_{\K}(A_1,A_2)\, \pi_\theta(A_1)\,\pi_\theta(A_2)
        \geq
        \frac{\vartheta}{4}\, \pi_\theta(A)\,\pi_\theta(A^c)
        \geq
        \frac{\vartheta\pi_\theta(A)}8\,.
    \]
    Every $x\in A_3\cap\Xi\cap A$ satisfies $\mathsf T_x(A^c)\geq a$,
    and $x\in A_3\cap\Xi\cap A^c$ satisfies $\mathsf T_x(A)\geq a$.
    By reversibility,
    \[
      2Q_{\mathsf T}(A,A^c) = Q_{\mathsf T}(A,A^c)+Q_{\mathsf T}(A^c,A) \geq a\,\pi_\theta(A_3\cap\Xi)\,.
    \]
    Since $\pi_\theta(A_3\cap\Xi)\geq\pi_\theta(A_3)-\zeta$ and $\zeta\leq\vartheta s/8$, it follows that
    \[
        Q_{\mathsf T}(A,A^c)
        \geq
        \frac a2\,\bpar{\frac{\vartheta\pi_\theta(A)}8-\zeta} \geq \frac{a\vartheta}{16}\, (\pi_\theta(A)-s)\,.
    \]

    In all cases,
    $Q_{\mathsf T}(A,A^c)\geq
    a\vartheta\parenth{\pi_\theta(A)-s}/16$.
    Taking the infimum over $A$ and using $a=(1-\delta)/2$, we obtain
    $\Phi_s(\mathsf T)\geq(1-\delta)\vartheta/32$.
    Lazification halves the ergodic flow, proving the claim.
\end{proof}

\subsection{One-step coupling under SSC, LTSC, and ASC}\label{sec:one_step_coupling_under_ssc_ltsc_and_asc}

\begin{proof}[Proof of Lemma~\ref{lem:pointwise-Dikin-framework}]
Choose a sufficiently small universal $c_0\in(0,1/2]$ and set
$r=c_0r_0$. For
$u\in\interior\K$, write $z=u+\step h$, where
$h\sim\Normal(0,g(u)^{-1})$, and whenever $u+th\in\interior\K$, set
\[
    F(t)\defn h\tp g(u+th)h\,,
    \qquad
    \ell(t) \defn\log\det g(u+th)\,.
\]
When the proposal is feasible, the log proposal-density ratio satisfies
\[
    \log\frac{p_z(u)}{p_u(z)}
    =
    \frac12\,\bpar{\ell(\step)-\ell(0)}
    -
    \frac12\,\bpar{F(\step)-F(0)}\,.
\]
Except on an event of probability at most $\rho_\star$, ASC ensures
feasibility and $\abs{F(\step)-F(0)}\leq2\rho_\star$. By convexity, the entire proposal segment is feasible whenever the proposal is feasible. Writing $\xi=g(u)^{1/2}h$, direct differentiation gives the
identity for $\ell'(0)$, SSC gives the trace bound, and SSC together with
LTSC gives the lower bound on $\ell''(t)$:
\[
    \ell'(0)
    =
    \inner{\Tr_{12}\cJ_u^g,\xi}\,,
    \qquad
    \norm{\Tr_{12}\cJ_u^g}
    \leq
    2\sqrt d\,,
    \qquad
    \ell''(t)
    \geq
    -5F(t)\,.
\]
Gaussian concentration, the local-norm comparison in Lemma~\ref{lem:self-concordant-local-norm-comparison}, and Taylor's formula therefore show, after decreasing $c_0$ if necessary, that $\ell(\step)-\ell(0)\geq-C$ outside an additional event of probability at most $2\rho_\star$, where $C$ is universal. This proposal-density-ratio estimate is standard; see \cite[(G.3)]{KV24ipm} and \cite[(55)]{JC25regularized}.

Set $a_\star=C/2+\rho_\star$. Since $3\rho_\star<1/16$, for every
$u\in\interior\K$,
\[
    \proposal_u(\mathcal G_u^c)
    \leq
    \frac1{16}\,,
    \qquad
    \mathcal G_u
    \defn
    \bbrace{z\in\interior\K:
    \frac{p_z(u)}{p_u(z)}\geq e^{-a_\star}}\,.
\]
For $x=y$, the claim is immediate. For distinct $x,y\in\interior\K$
satisfying the local-norm condition in the lemma, we apply the
transition-overlap argument in the proof of
Lemma~\ref{lem:acceptance_rate_control}, with
$\Xi=\interior\K$, $a=a_\star$, $\tau=1/16$, and the bad endpoint set
$\mathcal G_u^c$ at each $u$. The preceding bound on
$\proposal_u(\mathcal G_u^c)$ verifies the required bad-set estimate,
while the definition of $\mathcal G_u$ gives the same feasibility and
proposal-density-ratio guarantee. Therefore,
\[
    1-\dtv(\cT_x,\cT_y)
    \geq
    e^{-a_\star}\,\bpar{\frac14-\frac2{16}}
    =
    \frac{e^{-a_\star}}8\,,
\]
and the claim holds with $\delta_0=1-e^{-a_\star}/8\in(0,1)$.
\end{proof}

\section{Lewis-weight path estimates}
\label{app:LS-pointwise-path-estimates}

To prove Lemma~\ref{lem:spectral-LS-background-event}, we adapt the proof of Lemma~3.3 in~\cite{Kook26Dikin}, which establishes the Gaussian estimates at the base point and along the path while tracking $p,m,\delta$ explicitly.

\paragraph{Lewis-weight derivative and closeness estimates.}
Recall from Fact~\ref{fact:DWh} that $P_x$ is an orthogonal projector
with $\diag P_x=w_x$, that $0\preceq N_x\preceq pI$, and that both
matrices depend smoothly on $x$. Thus $0<w_{x,i}\leq1$ and
$\sum_iw_{x,i}=d$.

For the path $x_t=x+th$, write $A_t=A_{x_t}$, $W_t=W_{x_t}$,
$w_t=w_{x_t}$, and $N_t=N_{x_t}$. Let $\varsigma_t=Ax_t-b$.
For $a\in\{\beta,-1/p\}$, the coordinatewise closeness estimate in~\cite[Lemma~35]{LS19solving} has coefficient at most
$Cp^2$, since $m^{1/(p+2)}\leq\sqrt e$ and $\abs{a}\leq1/2$. Hence,
\[
    \bnorm{
        \frac{\D}{\D t}
        \log\parenth{w_t^a/\varsigma_t}
    }_\infty
    \leq
    Cp^2\,\norm h_{A_t\tp W_tA_t}\,.
\]

\begin{proof}[Proof of Lemma~\ref{lem:spectral-LS-background-event}]
Write $\varkappa=\varkappa_\delta$ and $T_\delta=c\varkappa^{-4}/\sqrt d$. We use the affine normalization and Gaussian path from \S\ref{sec:spectral-LS-verification}. 
We first construct $\cE_\delta$ and establish the required path bounds.
In addition to the path notation in Fact~\ref{fact:LS-fourth-order-path}, set
\[
    \rho_t\defn W_t^{-1/p}\iota_t\,,\qquad
    z_t\defn\beta\ell_t'-\iota_t\,,\qquad
    D_t=\Diag z_t\,.
\]
Also, Fact~\ref{fact:DWh} and direct differentiation give $\iota_t'=-\iota_t^{\circ2}$ and $\ell_t'=-W_t^{-1/2}N_tv_t$.

Since $g(0)=I_d$, $W_0^\beta A_0$ has orthonormal columns and its $i$-th row has squared norm $w_{0,i}$. Hence,
\[
    \rho_{0,i}
    =w_{0,i}^{-1/2}(W_0^\beta A_0h)_i
    \sim\Normal(0,1).
\]
By Fact~\ref{fact:DWh} and $z_0=\beta\ell_0'-\iota_0$, each coordinate of
$z_0$ is a centered Gaussian linear form. The closeness estimate with
$a=\beta$, together with $A_0\tp W_0A_0\preceq g(0)=I_d$, bounds the
norm of each coefficient vector by $Cp^2$. Gaussian concentration, a union
bound over the rows, and the standard Gaussian norm bound yield an event
$\cE_\delta$ of probability at least $1-\delta$ on which
\begin{align*}
    \norm h&\leq C\varkappa\sqrt d\,,
    &\norm{\rho_0}_\infty+\norm{z_0}_\infty&\leq C\varkappa\,.
\end{align*}
Since $\iota_{0,i}=w_{0,i}^{1/p}\rho_{0,i}$ and $0<w_{0,i}\leq1$,
$\norm{\iota_0}_\infty\leq C\varkappa$, and on the same event,
\[
    \omega_0
    =1+\norm{D_0P_0}_{\op}
    \leq1+\norm{z_0}_\infty
    \leq C\varkappa\,.
\]
The slack identity
$\varsigma_{t,i}=\varsigma_{0,i}(1+t\iota_{0,i})$ and the bound
$T_\delta\norm{\iota_0}_\infty\leq1/2$, after reducing $c$, ensure that the
whole segment $\{x_t:0\leq t\leq T_\delta\}$ lies in $\interior\K$.

Since $\lambda^2g$ is SSC (and hence SC) and $\norm{x_t-x_0}_{\lambda^2g(0)}=t\lambda\norm h\leq Cc\lambda\varkappa^{-3}$,
Lemma~\ref{lem:self-concordant-local-norm-comparison}, together with
$\varkappa\geq p^2$, gives $\norm h_{g(x_t)}\leq C\varkappa\sqrt d$ for $0\leq t\leq T_\delta$.
Since $A_t\tp W_tA_t\preceq g(x_t)$, the RHS of the closeness estimate is at most $C\varkappa^2\sqrt d$ for both $a=\beta$
and $a=-1/p$. Its integral over $[0,T_\delta]$ is at most
$Cc\varkappa^{-2}$. Moreover,
\[
    \rho_{t,i}
    =(a_i\tp h)\,\frac{w_{t,i}^{-1/p}}{\varsigma_{t,i}}\,,\qquad\qquad
    z_t
    =\frac{\D}{\D t} \log\frac{w_t^\beta}{\varsigma_t}\,.
\]
The estimate with $a=-1/p$ gives $\abs{\rho_{t,i}}\leq C\abs{\rho_{0,i}}$, while the estimate with
$a=\beta$ bounds $z_t$ directly. Since $z_t=\beta\ell_t'-\iota_t$ and $\beta$ has a
universal positive lower bound, we obtain, uniformly on the segment,
\begin{align*}
    \norm{\rho_t}_\infty+\norm{\iota_t}_\infty
    &\leq C\varkappa\,,
    &\norm{\ell_t'}_\infty+\norm{z_t}_\infty
    &\leq C\varkappa^2\sqrt d\,.
\end{align*}

Using $(v_t)_i=w_{t,i}^{1/2+1/p}\rho_{t,i}$ and $(q_t)_i=w_{t,i}^{1/2}\rho_{t,i}^2$, together with $\sum_iw_{t,i}=d$ and $0\preceq N_t\preceq pI$, 
\begin{align*}
    \norm{v_t}&\leq C\varkappa\sqrt d,
    &\norm{N_tv_t}+\norm{q_t}
    &\leq C\varkappa^2\sqrt d.
\end{align*}
Direct differentiation, as in the proof of Lemma~3.3 in~\cite{Kook26Dikin}, gives
\begin{align*}
    v_t'
    &=-\iota_t\circ v_t-\frac12\,(N_tv_t)\circ \iota_t\,,\\
    q_t'
    &=\bpar{\frac12-\frac2p}\,\ell_t'\circ q_t
      -2\iota_t\circ q_t\,,
    &\ell_t'\circ q_t&=-(N_tv_t)\circ\rho_t^{\circ2}\,.
\end{align*}
Consequently, $\norm{v_t'} \leq C\varkappa^3\sqrt d$ and $\norm{q_t'} \leq C\varkappa^4\sqrt d$.
Since the maps $t\mapsto D_t$ and $t\mapsto P_{x_t}$ are smooth, the function 
$t\mapsto\omega_t=1+\norm{D_tP_{x_t}}_{\op}$ is continuous on this
segment. These estimates prove all claims of the lemma.
\end{proof}

\end{document}